\documentclass[11pt]{article}					%leqno
\usepackage{amsmath, amssymb, amsthm, bm, calrsfs}
\usepackage{multirow, booktabs}  %array,
\usepackage{rotfloat}
\usepackage[longnamesfirst]{natbib}		%
\usepackage{jfe}
\usepackage{indentfirst}	% needed with jfcomplete to indent fist paragraph after sections
%\usepackage{times}

%%%%%%%%%%%%%%%%%%%%%%%%%%%%%%%%%%%%%%%%%%%%%%%%%%%%%%%%%%%%%%%%%%%%%%%%%%%%%%%%%%%%%%%%%%%%%%%%%%%I

\usepackage[T1]{fontenc}
\usepackage[margin=1in,  a4paper, pdftex]{geometry}
\usepackage{setspace}
%\onehalfspacing
%\doublespace
\setstretch{1.85}	% 'double spacing' according to RFS guidelines

\usepackage{authblk}	%For multiple affiliations

\usepackage{pgfplots}
\usetikzlibrary{decorations.markings}
\usetikzlibrary{3d}

\hbadness 10000
\tolerance 10000

\newtheorem{proposition}{Proposition}

%\renewcommand{\theequation}{\thesection.\arabic{equation}}

%Tiny font for graphs

%define some math stuff
\newcommand{\N}{\mathbb{N}} \newcommand{\Z}{\mathbb{Z}} \newcommand{\E}{\mathbb{E}}
\newcommand{\defeq}{\stackrel{\rm def}{=}}

%following lines to fix the path for \input & \include
\graphicspath{{./paperfig/}}
\makeatletter
\def\input@path{{./paperfig/}}
\makeatother

%\expandafter\def\expandafter\quote\expandafter{\quote\small}

% To decrease space above and below displaymath,
\expandafter\def\expandafter\normalsize\expandafter{%
    \normalsize
    \setlength\abovedisplayskip{8pt}
    \setlength\belowdisplayskip{8pt}
    \setlength\abovedisplayshortskip{1pt}
    \setlength\belowdisplayshortskip{3pt}
}

\newcommand{\abs}{\noindent \setstretch{1.85}
Using a fast numerical technique, we investigate a large database of investor
suboptimal non-exercise of short maturity American call options on dividend-paying stocks listed
on the Dow Jones. The correct modelling of the discrete dividend is essential for a correct
calculation of the early exercise boundary as confirmed by theoretical insights. Pricing with stochastic volatility
and jumps instead of the Black-Scholes-Merton benchmark cuts by a quarter the amount lost
by investors through suboptimal exercise. The remaining three quarters are largely unexplained
by transaction fees and may be interpreted as an opportunity cost for the investors to monitor
optimal exercise.}

\author[*]{Antonio Cosma}
\author[**]{Stefano Galluccio}
\author[***]{Paola Pederzoli}
\author[***]{Olivier Scaillet}
\affil[*]{University of Luxembourg}
\affil[**]{Incipit Capital, London}  % 20 St. Dunstan's Hill
\affil[***]{University of Geneva and Swiss Finance Institute}

\begin{document}

\newcommand{\tit}{Early exercise decision in American options with dividends, stochastic volatility and jumps}

\title{{\bf \tit}}

%\author{Antonio Cosma\footnote{University of Luxembourg, antonio.cosma@uni.lu},
%            Stefano Galluccio\footnote{Incipit Capital, London, stefano.galluccio@sky.com},
%            Paola Pederzoli\footnote{University of Geneva and Swiss Finance Institute, Paola.Pederzoli@unige.ch},
%            Olivier Scaillet\footnote{University of Geneva and Swiss Finance Institute, Olivier.Scaillet@unige.ch}}
%
\date{}
\maketitle

{\singlespacing
\centerline{\bf Abstract}
\medskip
\abs

%\begin{abstract}\singlespacing
%\abs
%\end{abstract}
%\thispagestyle{empty}
%\vspace*{\fill}
%\noindent
%\emph{Keywords} : Option pricing, American option, Bermudan option, discrete transform,  discrete dividend-paying stock, suboptimal non-exercise, numerical techniques.\\
%\nopagebreak
%\noindent
%\emph{JEL classification}: G13, C63.

}

{\let\thefootnote\relax\footnotetext{  We thank J. Detemple, D. Duffie, E. Gobet, J. Jackwerth, F. Moraux, D. Newton, A. Treccani and A. Valdesogo for valuable insight and help in addition to the participants at the 6th World Congress of the Bachelier Finance Society in Toronto, the 2011 European Econometric Society in Oslo, the 2013 Mathematical Finance Day in Montreal, the 2014 Mathematical and Statistical Methods for Actuarial Sciences and Finance in Salerno, the 2014 International
Symposium on Differential Equations and Stochastic Analysis in Mathematical Finance in Sanya, the 7th General Advanced Mathematical Methods in Finance and Swissquote Conference in Lausanne, the 2015 IEEE Symposium on Computational Intelligence for Financial Engineering and Economics in Cape Town, the 9th International Conference on Computational and Financial Econometrics in London, the 2015 International Conference on Computational Finance in London, SGF Conference 2016 in Zurich, the FERM 2016 in Guangzhou, the AFFI 2016 in Li\`ege, the Stochmod16 in Louvain-la-Neuve, the 9th World Congress of the Bachelier Society in New York, the EFA 2016 in Oslo, and seminars at the University of Geneva and the University of Orl\'eans. An earlier version of this work was circulated under the title ``Valuing American options using fast recursive projections''. O. Scaillet received support from the Swiss NSF through the NCCR Finrisk. Paola Pederzoli acknowledges the financial support of the Swiss NSF (grant $100018-149307$). Part of the research  was made when she was visiting  LSE. A previous version of this paper circulated under the title ``Valuing American options using fast recursive projections''.}}

\clearpage

\noindent
Holders of short maturity American call options on dividend-paying stocks are known to miss exercising their options in an apparently suboptimal way (see e.g. \cite{pool2008failure}).
Financial frictions are a possible explanation of this departure from the expected exercise behavior (e.g. \cite{jensen2016early}).
We investigate suboptimal exercise in the light of such frictions according to alternative models for the underlying stocks.
To do this, we compile data on 30 individual dividend-paying stocks listed on the Dow Jones, comprising a total of 101,295 series of short-term options amounting to approximately 9.5 million records. In order to tackle repeated calculations with this large database, we require an exceptionally fast option pricing technique, able to price contracts whose path is monitored at discrete moments in time.
The challenges in studying derivative products come from the degree of sophistication of the process of the underlying asset and the complexity of the payoff and exercise rules. Almost any departure from the plain vanilla European style options implies that closed-form pricing formulas are no longer available (for an extensive review, see \cite{detemple2005american}).
Only a few approaches, such as \cite{longstaff2001valuing}, \cite{haugh2004pricing}, \cite{fang2011fourier}, and \cite{chen2014advancing}, are readily able to tackle the problem of interest to us, but may lose in speed or ease of implementation if dividends are added in the picture.
Since we also require exceptional speed with our database, we develop a technique from the QUAD stable of option pricing (see \cite{andricopoulos2007extending} for treatment of path-dependency) which we refer to as a method of recursive projections to distinguish it from other variants. We are the first to characterize the convergence properties of a quadrature-based method in the presence of discrete dividends and with the underlying following a dynamics outside the Black-Scholes benchmark.
The recursive nature of our algorithm, which gives the name of the method, refers to the recursive relation of the price of the contract at different points in time.
This relationship translates the pricing problem in a sequence of simple matrix times vector multiplications.
This recursive property is not affected by intermediate cash flows as for instance dividend payments. This last feature, in addition to its speed and simplicity,  make our method well-suited to the calculations we need to perform in an empirical analysis with different pricing models on a large database.

We extend the empirical work of \cite{pool2008failure} on the observed suboptimal non-exercise of American call options written on dividend-paying stocks. We show that, by taking into account stochastic volatility and jumps in the process of the underlying asset, we can explain up to 25\% of the gain forgone due to non-optimal exercise decisions, as computed in \cite{pool2008failure}. Because financial frictions are a possible explanation of departure from the expected exercise behavior (e.g. \cite{jensen2016early}),
we also show that transaction costs cannot fully explain the non-exercise decisions. In the process, to our knowledge, we are the first to provide comprehensive descriptive statistics of the parameters driving the jumps and the stochastic volatility of the constituents of the Dow Jones Industrial Average Index (DJIA) traded in the period from January 1996 to December 2012.
In our calibration, we price by fully taking into account the discrete nature of the dividend distributed by the underlying stocks and the American style of the call options, and we do so for different specifications of the stock dynamics.
This feature is a peculiarity of our work, given that the standard empirical literature on options mainly focuses on European $S\&P 500$ options with a dividend yield (\cite{bakshi1997empirical}, \cite{Eraker2003}). \cite{broadie2007model} and \cite{Broadie2009} approximate  American prices with European ones, and show that transforming American options to European ones does not matter for calibration purposes when facing a continuous dividend yield since differences in early exercise premia are not so large in that case.
This is not true with multiple discrete dividend payments, and we provide an example on how neglecting the discrete nature of a dividend or its time of payment leads to an incorrect exercise decision.
%Thus, our findings in themselves are a valuable reference for future empirical research on dividend-paying stocks.
We also provide new theoretical
insights into how the early exercise boundary changes, depending on the discrete or continuous nature of the dividend distributed by the underlying asset. In particular, we show that, for short maturities, the boundary is higher under the \cite{merton1976option} jump-diffusion and
\cite{heston1993closed} stochastic volatility models than under the Black-Scholes model if the dividend is discrete,  whereas we know it is lower in the case of a continuous dividend yield (\cite{Amin:1993a}, \cite{adolfsson2013representation}). The study of the early exercise boundary is important for investment decisions. For example \cite{Battauz2014real} characterize the double continuation region implied by an option with a negative interest rate, which occurs in  the case of gold loans.

To obtain a first intuition on how our method works, we can view the pricing of a derivative security essentially from two perspectives, with the link between the two being given by the Feymann-Kac theorem. The first perspective is solving the partial differential equation (PDE) to yield the price of the derivative assets. Numerically discretizing the differential operator leads to finite difference schemes
(see \cite{brennan1977valuation}, \cite{clarke1999multigrid}, \cite{ikonen2008efficient}).
This method is the most common approach in regard to numerically finding solutions to complex pricing problems, and we benchmark our method against the PDE outside the Black-Scholes setting. In the numerical examples of our paper we compare the recursive projections to the PDE in the Heston setting and not in the more general Bates framework because the implementation of the PDE in the former environment has been studied in detail, while this is not true for the latter setting.
Introducing jumps in the underlying process while keeping the finite differences viable from a computational point of view asks for specific techniques (see for instance \cite{d2005robust}) which are model specific and not yet implemented in conjunction with stochastic volatility. No off-the-shelf PDE method is available for the empirical analysis we are carrying out in this paper.

The second perspective is viewing the price of the derivative asset as the conditional expectation of the discounted future payoff. It exploits the knowledge of the discounted probability distribution (the so-called Green function) with respect to which the conditional expectation is taken.
The advantage of this class of methods over the previous class is that it does not introduce time stepping errors when the value function is evaluated only at specific points in time, typically potential exercise dates, while finite difference methods require a finer discretization in the time dimension to achieve satisfactory accuracy.
%The advantage of this class of methods over the previous class is that we only need to evaluate the value function at specific points in times, typically potential exercise dates, and not repeatedly at infinitesimally small consecutive time steps.
This paper opts for the second perspective.
We develop our technique after the line of quadrature-based methods that provide fast and effective routines to price path-dependent options.
Newton and co-authors follow up on an early intuition by \cite{sullivan2000valuing} and provide a pricing routine, called QUAD (\cite{Andricopoulos2003447}, \cite{andricopoulos2007extending}, \cite{chen2014advancing}).
The technique in the first two papers applies whenever the conditional probability density function is known, e.g. in the Black-Scholes-Merton framework for the underlying, \cite{merton1976option} jump-diffusion model and certain interest rate models. \cite{o2005path} observes that many useful processes without a well-known density function have a characteristic function and that we can obtain the density function as the inverse Fourier transform (through FFT) of the characteristic function to insert the output in the QUAD scheme. As \cite{chen2014advancing} point out, we cannot use this single-variable `FFT-QUAD' approach to price heavily path-dependent options in stochastic volatility frameworks, since it does not keep track of the evolution of the volatility process in moving from one observation point to the next. O'Sullivan's FFT-QUAD is improved considerably by the CONV techniques of \cite{lord2008fast}, referred to by later authors as CONV-QUAD (see \cite{chen2014advancing}).
This replaces the two integrals of FFT-QUAD with two fast Fourier transforms.
However, not all underlying processes could be covered by QUAD variants until \cite{chen2014advancing} introduce an approximation of the density function for cases were there is no characteristic function.
This allows pricing both under previously unavailable processes, such as SABR, but also under other processes, such as Heston, for which a characteristic function is available but approximation of the density function may be more convenient.
Meantime, pricing of path-dependent (Bermudan and barrier) options under Heston is solved by \cite{fang2011fourier} via Fourier cosine expansion and quadrature.
We follow the characteristic function route for Merton jump diffusion and for Bates (Heston with Merton jumps).

Although in its implementation the recursive projections resembles quadratures, we believe that the conceptual framework is more general.
We can express the value of the options on an equally spaced and time-homogenous grid of stock prices (as in \cite{simonato2016simplified}) as a function of the values of the option on the same grid at future times.
We show that choosing the grid at which we evaluate the contract is equivalent to projecting the value and Green functions on a basis of localized functions.
In this paper, the localized basis functions are indicator functions centred at the grid points. We can extend the function bases to other localized bases which are functions with a compact support but not necessarily indicator functions. In specific setups, this could ensure a faster convergence. Moreover, our function projection approach makes our method not dependent on FFT techniques.
While we do take advantage of the speed of FFT techniques in this specific application, our method would keep its main advantages if we know the Green function in the Laplace space, for instance.
Both in the case of different basis functions, or of a different transform space for the transition densities, we would still be able to represent conditional expectation operators in the simplest form of standard linear operators, i.e., matrices, and to price derivative contracts by means of linear algebra tools.
The output of a multiplication is the input of the following step, without the need for intermediate computations, such as the careful placing of nodes at discontinuities in the quadrature. The correct positioning of nodes in quadrature is itself the solution of the optimal control problem, since it requires the knowledge of the value function and the location of discontinuities in its derivative. That implementation can be cumbersome if payoffs include dividends, as in our leading application.
The localized nature of the function basis is key in allowing the inclusion of the discrete dividend case.
The recursive relation among coefficients of the delocalized Fourier cosine series expansion that makes the algorithm of  \cite{fang2011fourier} appealing in many situations, breaks down in the presence of discrete dividends.
This lack of intermediate computational overhead boosts the speed of our method and makes it feasible for empirics. We can accommodate all modelling choices for the underlying, provided that the transition density is either analytically known in the direct space or in some transform space (Fourier or Laplace), or if we can compute an approximation of the transition density at given grid values of the underlying processes (\cite{chen2014advancing}), for instance applying the estimation methods introduced by \cite{ai2006saddlepoint}, \cite{ait2007maximum}, \cite{li2013maximum}, \cite{Guay2016aa}.
Fourier transforms of transition probabilities describe price evolution in affine models
(\cite{Duffie:2000a}), quadratic models (\cite{Leippold:2002a}, \cite{cheng2007linear}), and variance gamma and Levy models (\cite{madan1998variance}, \cite{carr2003stochastic}).
We show how adding a jump component to a stochastic volatility diffusion for the underlying assets is as simple as an element by element multiplication of the matrices describing the transition probabilities of the two separate components.
In addition, projecting transition densities and value functions on appropriate function bases, allows us to naturally describe the filtration of information at given points in time, thus correctly addressing non-Markovian models, such as stochastic volatility models, in the stock price dimension.
This feature is the hurdle \cite{o2005path} and \citet{lord2008fast} could not cross but overcome by \cite{chen2014advancing}.
However, although the latter recognise the possibility of using characteristic functions and FFT with Heston (and this can also be achieved for Heston plus Merton jumps), they focus on demonstrating their transition density approximation technique.
The way we build the transition matrices makes the approximation error of these matrices independent of the time horizon. The number of time steps required is solely driven by features of the contracts, such as dates at which the contract needs to be monitored, and not by mesh size requirements in the time dimension. We can evaluate transitions of the value functions for arbitrary time steps, whereas \cite{chen2014advancing} can at most address time steps equal to 0.1 year if, instead of a characteristic function, they use their approximation which adds intermediate steps that slow down their pricing algorithm.
Moreover in our approach the transition matrices enjoy some space and time homogeneity properties that make the computation of their entries appealingly simple. The recursive nature and time and space homogeneity are not affected by intermediate cash flows as for instance dividend payments.
The numerical contribution of our paper is a general stochastic optimal control method with its matrix form having an appealing interpretation in terms of the stochastic discount factor. In a work that shares some intuitions with our approach, \cite{hodder2007} solve an optimal control problem for hedge fund management which is both time and variable dependent by discretising the univariate control variable on an equally spaced grid.

Finally, we think that the empirical study that we carry out speaks in favour of speed and ease of implementation of the recursive projections. %Our projection method makes the numerical error in approximating the transition densities negligible for any time horizon.
%Finally, let us quantify our claim concerning the speed of recursive projections.
To illustrate the speed and ease, we need only two matrix-per-vector multiplications and less than a second to evaluate an American call option and its Greeks on a stock paying one dividend before the expiry date in the Black-Scholes case.  The competing method, an improved tree methodology introduced by \cite{vellekoop2006efficient}, is roughly an order of magnitude slower. Similarly, a few seconds are sufficient to evaluate an American call option and its Greeks on a stock paying three dividends before the expiry date in the Heston model, whereas the finite-differences method alternative would again take an order of magnitude more time to achieve the same precision. To date, no empirical work on options written on dividend-paying stocks exists outside the Black-Scholes world, given that the existing methods are simply too time-consuming.
Indeed, \cite{broadie2007model}, state that ``The computation time required for American options makes calibration to a very large set of options impractical.'' Thanks to the recursive projections, we show that this is feasible\footnote{As reported earlier in the text, what \cite{broadie2007model} show in their appendix A is that transforming American options to European
ones does not matter for calibration purposes in their application, but we show in the following  that it does make a difference in our study.}.

The paper is organized as follows. The following section reviews the most recent advances in numerical option pricing.
In Section \ref{sec:example}, we develop an introductory example based on the Black-Scholes model and present some preliminary numerical results that show the advantages of our technique.
In Section \ref{sec:RecProj}, we study the general case of valuation by fast recursive projections
in order to include standard affine models.
We present numerical illustrations for the Black-Scholes and Heston models.
We also provide the theoretical convergence of the computed option price in terms of the sampling frequency, and characterize the convergence rate of the computed option price.
In Section \ref{sec:boundaries_empirics}, we present the innovative applications of our algorithm. In Section \ref{sec:early_boundaries}, we characterize the early exercise boundary under different modelling assumptions. In Section \ref{sec:empirics}, we present the empirical results concerning the cost of failing to optimally exercise American call options.
Section \ref{sec:conclusion} gathers some concluding remarks. The supplementary online Appendix contains the proofs of the propositions, additional comparisons with existing methods and gives a detailed description of the data and the calibration procedure.

\section{Review of alternative pricing methods}\label{sec:littReview}
As explained in the introductory section above, the numerical methods for option pricing fall into the following two categories: differential methods and integral methods. Differential methods provide the solution to the pricing problem by numerically approximating the associated partial differential equation (see the references of the previous section). Integral methods approximate the conditional expectation, giving the arbitrage-free value of a financial contract. Some methods lie at the boundary of the two groups.
The binomial tree technology, and lattice methods in general (\cite{cox1979option}, \cite{Broadie:1996aa}, \cite{vellekoop2006efficient}),
although similar to finite differences in their implementation (\cite{rubinstein2000relation} shows that they are basically the same), belong to the integral family of methods, given that they provide a discrete approximation of a conditional expectation.
\cite{Barone-Adesi:1987aa} start from the partial differential equation representation in the Black-Scholes case, and contribute a closed-form approximation for an American option with continuous dividend yield.

In this section, we focus on integral methods because the main recent contributions, including ours, fall within this family. The integral representation arises from the observation that option valuation in arbitrage-free economies amounts to using linear operators that assign prices today to payoffs at futures dates. These linear operators are conditional expectation operators. In multi-period economies, the temporal consistency in valuation ensures that the family of such operators satisfies a semigroup property (\cite{Garman:1985aa}). Because most valuation models correspond to Markov environments for the price of the underlying assets, the semigroup property is formalized by the law of iterated expectations restricted to Markov processes (\cite{Hansen:2009aa}), where the Markov states are indexed by the time horizons of potential exercise. From a computational perspective, pricing algorithms attempt to approximate the following: (i) the conditional expectation with respect to the information at one potential exercise date and (ii) the recursive relationship between conditional expectations at different potential exercise dates.
\cite{Geske:1984aa} express the problem of pricing an option with discrete exercise dates as a multivariate integral.
\cite{Kim:1990aa}, \cite{jamshidian1992analysis}, and \cite{carr1992alternative} extend this approach to continuous time and express the early exercise premium of American contracts as an integral term whose interpretation is the risk-neutral valuation of a continuous cash flow. This additional integral is a function of the exercise boundary.
\cite{Bunch:2000aa}, \cite{huang1996pricing}, \cite{Ju:1998aa}, and \cite{ibanez2003robust} contribute algorithms to compute the free boundary and the option prices in the Black-Scholes model environment.
\cite{detemple2002valuation} extend these contributions to diffusion processes with stochastic interest rates. These approaches cannot easily accommodate discrete dividends.
\cite{roll1977analytic}, \cite{geske1979note}, and \cite{whaley1981valuation} provide a closed-form approximation of pricing an
American option paying a single discrete dividend in the Black-Scholes framework. \cite{medvedev2010pricing} develop an approximation of American option prices under stochastic volatility and stochastic interest rates using short-term asymptotics (for the Black-Scholes case, see \cite{Lamberton:2003aa}). These approximations are accurate and flexible, but cannot accommodate discrete dividends. They are mainly suited for index and exchange rate options.

Another approach in approximating the pricing operator is simulating the trajectories of the underlying asset to compute the conditional expectation. One of the first simulation-based analyses of American options in the Black-Scholes case is that of \cite{broadie1997pricing}. A recent stream of Monte Carlo methods started with the work of \cite{longstaff2001valuing}, developing into the so-called duality approach of \cite{haugh2004pricing}, \cite{rogers2002monte}, and \cite{andersen2004primal}. The duality approach method uses simulations to find a lower bound and an upper bound of the true price.
In this sense, it is an extension of \cite{Broadie:1996aa}. As shown in the numerical experiments of \cite{andersen2004primal}, the interval can be tight, which makes the method very precise. The advantage of this approach is that it can handle virtually any type of process dynamics, state variable structure, and payoff specification. However, to achieve precision, as in all simulation-based methods, the discretization step in the time dimension must be small (except trivial cases like the geometric brownian motion), making computations lengthy. \cite{desai2012pathwise} refine the duality approach by contributing a new algorithm to compute the upper bound, improving it in terms of speed.

We can express conditional expectations in terms of series developments on appropriate basis functions, which is the approach we follow when projecting the Green and value functions on localized basis functions in the subsequent sections. We can already find simple and tractable projections of complex payoffs in the literature via the representation on a set of basis functions (such as polynomials in \cite{madan1994contingent}, \cite{lacoste1996wiener}, \cite{darolles2000approximating}).
\cite{chiarella1999evaluation} suggest a fast recursion projection method based on Hermite polynomials in the Black-Scholes model (for an extension of the model in \cite{merton1976option}, see \cite{chiarella2005pricing}).
They also explain how their method provides a viable numerical method for the implementation of the path integral approach to option pricing as described in \cite{linetsky1997path}, for example. Our paper is an extension of the former in that we use sampling instead of projections to improve speed. This family of models typically assumes a geometric Brownian motion for the underlying asset, with some exceptions accommodating for jumps but not for stochastic volatility.

In a more recent application of Hermite polynomials to pricing,  \cite{xiu2014hermite} provides a general, closed-form approximation of European, but not path-dependent, option prices by using a finite-term expansion of the transition density (for related work on expansions, see also \cite{ait2007maximum, ait2010estimating}). \cite{kristensen2011adding} approximate the option price by expanding the difference between the true model price and the Black-Scholes price.
They can price some specific path-dependent options, such as barrier options but cannot address more general path-dependent payoff features, such as discrete dividends.
Our recursion projection method shares some features with the dynamic programming approach of \cite{ben2009dynamic} in that they also approximate the value function on a fixed grid and interpolate with local polynomials to reconstruct the value surface. They assume a GARCH process for the stochastic volatility and obtain closed-form formulas for the conditional expectations to compute the grid of values at a previous time. Though computationally efficient, their method is restricted to the GARCH family of processes and does not naturally accommodate dividends.

As explained above, our method shares features with the quadrature family of models. In addition to the contributions cited above, \cite{jackson2008fourier} introduce a Fourier space time stepping algorithm to price Bermudan options under Levy processes. This efficient method is not extendible to the stochastic volatility framework.

\section{An introductory example}\label{sec:example}

In this section, we show how a pricing problem for a European payoff in the Black-Scholes model translates into a functional projection. The derivation uses elementary calculus concepts. Then, we explain how we exploit the projection to build fast recursive schemes to value path-dependent options such as Bermudan and American options when the stock pays discrete dividends. We design this introductory example to emphasize the intuition underpinning our approach.

\subsection{Description of the method: the European case }\label{sec:example_european}
Let $S_t$ be the price of the underlying asset at date $t$ and assume that interest rates are constant to facilitate
exposition. For $S_t = x$, the value function $V(x,t)$ for a European option is given by the following
conditional expectation:
\begin{equation}\label{eq:condexp}
	V(x, t)=\mathbb{E}\big[e^{-r(T-t)}H(S_T, T)\big|S_t = x\big],
\end{equation}
where $H(S_T,T)$ is the payoff function expressed as a function of time $T$ and of the value of
the underlying asset $S_T$ at maturity date  $T$, and $r$ is the constant risk-free interest rate. When the  pricing operator in \eqref{eq:condexp}
admits a state price density $G(x,t; y, T)$, the so-called Green function, which is the discounted value of the transition probability density from point $x$ at time $t$
to point $y$ at time $T$, we obtain the familiar integral form:
\begin{equation}\label{eq:integral}
	V(x, t) = \int \mspace{-3mu}  G(x,t; y, T) H(y, T) dy.
\end{equation}
Now consider a regularly spaced grid of points $\{y_1, y_2, \dots, y_N\}$ that defines a finite interval $D=[y_1-\Delta y/2, y_N +\Delta y/2]$, with $\Delta y= y_{j} - y_{j-1}$.
We know that, under appropriate regularity conditions, the integral \eqref{eq:integral} restricted to the interval $D$ can be approximated by the Riemann sum as follows:
\begin{equation}\label{eq:Riem_sum}
	V(x, t) \sim \sum_{j=1}^{N-1}  G(x, t; y_j, T)\, H(y_j, T)\,\Delta y, \qquad t<T,
\end{equation}
where the $\sim$ symbol means that the right hand term converges to the left hand term as \mbox{$\Delta y \to 0$}.
The representation \eqref{eq:Riem_sum} is known as the `rectangle method' in standard integral calculus.
If we are interested in computing the value of $V(x,t)$ on a regularly spaced grid of points
$\{x_1, \dots, x_M\}$, for instance to plot the value function of the contract,
or to compute the Greeks, we can express \eqref{eq:Riem_sum} in a matrix form as follows:
\begin{equation}\label{eq:matrix}
	\mathbf{V}(t)  \sim  \mathbf{v}(t)=\mathbf{G}(t;T)\mathbf{H}(T),
\end{equation}
where \mbox{$\mathbf{V}(t) =  \big(V(x_1,t), \dots, V(x_M,t))'$},  $\mathbf{v}(t)$ is the approximation of $\mathbf{V}(t)$  obtained from the $N \times 1$ vector
$\mathbf{H}(T)$  with entries
 $\mathbf H_j= H(y_j,T)$
and the $M\times N$ matrix $\mathbf{G}$  with entries \mbox{$\mathbf{G}_{ij} = \Delta y\, G(x_i,t; y_j,T)$}.
Equation \eqref{eq:matrix} describes a discretization of the economy.
We can think of the matrix $\mathbf{G}$ as a matrix of Arrow-Debreu prices,
in which the rows represent discrete states $\{x_i\}_{i=1,\dots,M}$ at
the current date $t$, and the columns represent discrete states
$\{y_j\}_{j=1,\dots, N}$ at the future date $T$.
We can interpret the column vectors $\mathbf{V}(t)$ and $\mathbf{H}(T)$ as vectors of state contingent values at dates $t$ and $T$, respectively.
The matrix operator $\mathbf{G}$ discounts future payoffs at date $T$ to current prices at date $t$ (\cite{Garman:1985aa}).

Let us more closely examine the elements of $\mathbf{H}(T)$. For readability, we define
$\underline{y}_i = y_i - \frac{\Delta y}{2}$ and $\overline{y}_i = y_i + \frac{\Delta y}{2}$.
We can interpret the value $ H(y_j, T)$  as an approximation of the quantity
\begin{equation}\label{eq:InnerProduct}
	(1/\Delta y)\int_{\underline{y}_j}^{\overline{y}_{j} }\mspace{-9mu}H(y,T)dy= (1/\sqrt{\Delta y})\int \frac{1}{\sqrt{\Delta y}} \mathbb{I}_{\underline{y}_j, \overline{y}_{j} } H(y,T)dy \defeq \frac{1}{\sqrt{\Delta y}}\int\mspace{-3mu} e_{j}(y)H(y,T)dy,
\end{equation}
where $e_{j}(y)=\frac{1}{\sqrt{\Delta y}}\mathbb{I}_{\underline{y}_j, \overline{y}_{j} }$, and where $\mathbb{I}_{\underline{y}_j, \overline{y}_{j} }$ is the indicator function of the interval $[\underline{y}_j, \overline{y}_{j})$.  Because
$\{e_{j}(y)\}_{j=1,\dots,N}$ is an orthonormal set given
the standard $L^2$ inner product $\langle f,g\rangle = \int \mspace{-6mu}f(x)g(x)dx$, we can view the entries of the vector $\mathbf{H}(T)$ as an approximation
of $(1/\sqrt{\Delta y})$ times the coefficients of the decomposition of the payoff function $H(y, T)$
on the set of orthonormal indicator functions defined by the grid $\{y_1, \dots,y_N\}$.
A similar argument can be applied to the coefficients of the $\mathbf{G}(t;  T)$ matrix: every
row of $\mathbf{G}(t;T)$ is given by an approximation of $\sqrt{\Delta y}$ times the coefficients of the projection of the conditional density
$G(x_i, t; y, T)$ on the same orthonormal set.
The different choices in  the normalization factors for the entries of $\mathbf{G}$ and $\mathbf{H}$
are justified by our wish to interpret all of the quantities appearing in \eqref{eq:matrix} as prices.
We have that $V(x,t)$ and $H(y,T)$ are value functions and $G(x,t;y,T)$ is a state price density.

Overall, we can interpret the numerical approximation of the integral
 \eqref{eq:integral} as a projection of the functions on an orthonormal basis.
Such an interpretation is key
in the generalization of the recursive projection approach to more sophisticated models in the next section. In our case, the functional projection boils down to sampling the given functions
on a grid of $N$ points $\{y_j\}_{j=1,\dots,N}$ via a discrete transform.
From a computational perspective, the entries of $\mathbf{H}(T)$ summarize how payoffs depend on the price of the underlying assets at future exercise dates,
and the entries of $\mathbf{G}(t;T)$ summarize how the price of the underlying assets transits from one state to another according to the elapsed time between exercise dates. Figure \ref{fig:projection-1d} gives a graphical representation of the two computational steps of our fast recursive projection approach, (i) the projection step, and (ii) the recursive step. The value function at $t$, on the right,
 and the state price density for a given value of $x=S_{t}$, on the left, are sampled (the projection step); the obtained arrays of values are multiplied element by element, and the products are summed to obtain the value of
 $V(x, t)$ (the recursive step).

	\medskip
	{\centering	
	[Figure  \ref{fig:projection-1d} about here]\medskip

	}

\subsection{Description of the method: the path-dependent case }\label{sec:exampl_PathDep}
Let us now address the valuation of path-dependent contracts. We start by considering a Bermudan option. We consider a set $\{t_1=t, \dots, t_L=T\}$ of exercise dates. At each $t_l$, the holder of a Bermudan option may decide whether to exercise. He exercises if the intrinsic value \mbox{$H(S_{t_l}, t_l) \defeq (S_{t_l} - K)_+= \max\{S_{t_l} -K,0\}$} is higher than the value of keeping the option, i.e., the continuation value. Bermudan options are the ideal building blocks for studying American call
options on dividend-paying stocks. It is well known that it can be optimal to exercise American call options immediately before ex-dividend dates
$\{t_h, t < t_h < T\}_{h=1,\dots, H}$; for instance, see \cite{pool2008failure} for a discussion on early exercise strategies. The implication is that we must monitor the option value function $V(x,t)$ immediately before the ex-dividend dates, when the intrinsic value $(S_{t_h-\epsilon} - K)_+$  for a small $\epsilon >0$ can be larger than the continuation value $V(S_{t_h},t_h)$. Then an American call option shares with a Bermudan option the feature that its value function must be evaluated only at a finite number of dates.

 The semigroup property of the pricing operator ensures that we compute the value function $V(x,t)$ of a Bermudan option recursively.
The recursion consists of moving backwards in time and computing at each $t_l, l=1, \dots , L-1$:
\begin{equation}\label{eq:rec_exp}
	V(x,t_l) = \max\bigl\{H(x,t_l), \mathbb{E}\bigl[e^{-r(t_{l+1} - t_{l})}V(S_{t_{l+1}}, t_{l+1})|S_{t_l}=x\bigr]\bigr\},
\end{equation}
with the boundary condition $V(y, t_L)=H(y,t_L) $.
To speed up the recursion, we impose the condition that the grid of values  $\{y_j\}_{j=1,\dots,N}$ at which we sample the function $V(y, t_{l+1})$  and the grid  $\{x_i\}_{i=1,\dots, M}$ at which we compute the function $V(x, t_l)$, coincide at each exercise date, which means that  $M=N$. From now on, we use the $x$ variable as a generic conditioning value, e. g. the value of the underlying at date $t$, as in $V(x,t)$. If $x$ takes a specific value in the grid $\{y_{i}\}_{i=1,\dots,N}$, then we write $V(y_{i},t)$.
Then, in the matrix notation of the approximation, we obtain the following:
\begin{equation}\label{eq:rec_matr}
	\mathbf{V}(t_l)  \sim \mathbf{v}(t_l)= \max \bigl\{\mathbf{H}(t_l), \mathbf{G}(t_l;t_{l+1})\mathbf{v}(t_{l+1})\bigr\},
\end{equation}
and the approximation $\mathbf{v}({t_l})$ is the input for the following recursion step to compute an approximation $\mathbf{v}({t_{l-1}})$  of  $\mathbf{V}({t_{l-1}})$.
Figure \ref{fig:projection-1dmatrix} shows how combining projection and recursion steps with identical grids at each exercise date translates the pricing problem of path-dependent options into a sequence of matrix time vector multiplications.

	\medskip
	{\centering	
	[Figure  \ref{fig:projection-1dmatrix} about here]\medskip
	
	}

The convergence properties of $\mathbf{v}({t_l})$ to $\mathbf{V}({t_l})$  are formally established in Proposition \ref{prop:convBS} of Section \ref{sec:RecProj}.
From \eqref{eq:rec_matr}, it is clear that computations only occur at the exercise dates defined in the Bermudan contract, and do not require any input at any other point in time.
Furthermore, under the previous additional constraints on the grids, if the time
interval $\tau = t_{l+1} - t_l, l=1, \dots, L-1$, is constant and if the pricing operator
enjoys a stationarity property (time translation invariance), then the matrix $\mathbf{G}(t_l;t_{l+1})=\mathbf{G}(\tau)$
has constant entries, and the algorithm only involves one single computation of the matrix.

The methodology easily extends  in the presence of discrete dividends
paid on potential exercise dates\footnote{We can easily extend to the case in which the ex-dividend date does not belong to the set of exercise dates. We do not explicitly state this case, given that we are mainly interested in Bermudan options, as a building block for studying American options.}.
The implication is that the set of ex-dividend dates is a subset of the Bermudan exercise dates and that we have
$\{t_{h}\}_{h=1,\dots,H} \subset \{t_{l}\}_{l=1,\dots, L}$.
    We only need to add the dividend $\delta$ to the continuation value in  (\ref{eq:rec_exp}).
Hence, in order to price an American option on a dividend-paying stock, Equation \eqref{eq:rec_matr} must be modified by sampling the state price density $G(x,t_{l};y,t_{l+1})$ at
the grid $\{y_i-\delta(y_{i})\}_{ i=1,\dots, N}$ for the conditioning  value $x$, whenever $t_{l}\in \{t_{h}\}_{h=1,\dots,H}$. The entries of the matrix $\mathbf{G}(t_{l}; t_{l+1})$ then become $\mathbf{G}_{ij}=G(y_i-\delta, t_l; y_j, t_{l+1})\Delta y$. Given the freedom in choosing where to sample $G$, $\delta(x)$ could be any function of $x$. If $\delta(x)=r_d \, x$, then we can accommodate for a proportional dividend. If $\delta(x)=d$, then we
can accommodate for  a discrete dividend amount $d$\ \footnote{When we consider discrete dividends, we rule out arbitrage situations where the dividend is too large with respect to the stock price (see \cite{haug2003back} for a discussion).}. If $\delta(x)=0$, then we return to the Bermudan option case. The value function $\mathbf{V}(t_{h})$ still gives the value of the contract at the grid points $\{y_{1}, \dots, y_{N}\}$; thus we can use its approximation $\mathbf{v}(t_{h})$ as the input for the following step of the algorithm, and the recursive property of the algorithm is maintained.
Figure \ref{fig:dividends}, shows how the recursive scheme changes to accommodate for dividends.

	\medskip
	{\centering	
	[Figure  \ref{fig:dividends} about here]\medskip

	}

The early exercise decision for American long put holders is more complicated. We consider a Bermudan put with exercise dates $\{t_1,\dots,t_L\}$, and assume that the dates $\{t_{h}\}$ at which dividends are paid form a subset of the exercise dates. Then, by taking $L$ large,
our approach also provides a quick approximation for pricing American put options on single stocks
paying discrete dividends.
In the Black-Scholes model, the state price density is known in the following closed form:
\begin{equation}\label{eq:GreenBS}
	G(x,t; y, T)= \frac{1}{y}\frac{e^{-r(T-t)}}{\sqrt{2\pi \sigma^2(T-t)}}\exp\biggl(-\frac{\bigl(\log y - \log x - (r-\sigma^2/2)(T-t)\bigr)^2}{2 \sigma^2(T-t)}\biggr),
\end{equation}
where $\sigma$  is the volatility and the pricing operator satisfies the stationarity property. Hence, we expect to obtain a fast, simple and accurate numerical algorithm.
Before reporting the numerical results, we stress that we have kept the setup of this introductory example as simple as possible to emphasize intuition by limiting technical details. Although the computational results already speak in favor of our construction, it is in the following sections that the recursive projections method deploys its full potentiality with more complex price dynamics.

\subsection{Numerical illustrations in the Black-Scholes model}\label{sec:BSillustration}

As a first numerical example in the Black-Scholes framework, we compare the convergence speed of a binomial tree and of the recursive projections method in pricing an American call option on a dividend-paying stock\footnote{All of the codes are written in C++. The codes are available from the authors upon request.}.
Two popular modeling choices for the dividend payment are a known  cash amount $d$ or a known dividend yield $r_d$. The latter is computationally friendly because it leads to a recombining tree. However, the empirical evidence shows that corporations tend to commit to paying out fixed amounts at regular dates and to smooth their dividends rather than adjusting them  downwards and signaling a decrease in cash flows (for a signaling based theory on dividend policy see, for instance, \cite{Miller1985zr}). These arguments justify our preference for modeling dividends as fixed known amounts rather than as given yields. The known dividend amount assumption does not lead to a recombining tree, and a new tree is originated at each node following an ex-dividend date, increasing the numerical complexity of the problem. The work of \cite{vellekoop2006efficient} provides a recent enhancement of the classical binomial tree method which incorporates discrete dividend payments through an approximation of the continuation value of the option at the ex-dividend dates. This new algorithm has been proven to be substantially faster than the standard non-recombining binomial tree, and is therefore a reliable benchmark for this simulation exercise.

%If  $\nu$ is the number of time discretization steps, the number of nodes of a recombining tree grows as $\nu^2$, and the number of nodes of a non recombining tree grows as $\nu^{H+2}$, where $H$ is the number of ex-dividend dates.
%However, the number of operations in the recursive projection method grows linearly with the number $H$ of dividends paid because every dividend date adds only a single extra step in the recursive algorithm.

%	\medskip
%	{\centering	
%	[Figure  \ref{Figure:BS_amcall_div} about here]\medskip
%
%	}

	\medskip
	{\centering	
	[Figures  \ref{fig:Vellekoop} and  \ref{Figure:BS_amcall_div} about here]\medskip

	}

Figure  \ref{fig:Vellekoop} compares the convergence speed of the enhanced binomial tree and that of the recursive projections method in pricing an American call option on a discrete dividend-paying stock. The option has a maturity of $T=3$ years and a dividend $d=2$ is paid out at the end of each year.  Other parameters, namely the interest rate, volatility and strike price,  are set equal to $r=0.05$, $\sigma=0.2$,  and $K=100$, respectively. We compute 3 prices: at-the-money, in-the-money and out-of-the-money, corresponding to $S_0=80, 100$, and $120$, respectively.
The \emph{true} values of 7.180, 18.526, and 34.033 are obtained with $10000$ time steps in the binomial tree. The graphs show that, across the three different values of $S_{0}$, the recursive projections enjoy an increase of speed of approximately a factor $10$ for a comparable level of precision. The speed advantage is even larger if we consider that a new tree is needed for each value of $S_{0}$. Instead, the recursive projections method delivers the entire value function $\mathbf{v}(0)$ at once in a straightforward manner. This feature is particularly useful in computing Greeks through numerical differentiation.
As an additional benchmark, Figure \ref{Figure:BS_amcall_div} displays the convergence speed of the recursive projections jointly with the one of a standard non-recombining tree. Even though the non-recombining tree is known to be an inefficient method, it is still used as a common reference point in the literature, and we show this graph for comparison purposes. We can see that the gain of speed of the recursive projection is of the order of $10^{4}$.
In Section E of the supplementary online Appendix, we extend the above pricing exercise, and further compare the recursive projections with the Monte Carlo based method by \cite{longstaff2001valuing}. The recursive projections are faster than the Longstaff Schwartz method by at least four orders of magnitude.
As an aside,  for $S_{0}=100$, if we approximate the known constant dividend $d=2$ with a known continuous dividend yield\footnote{The yield is obtained by considering the dividends paid at $t=1$ and $t=2$ only, because the dividend paid at $t=3$ has no impact on the price of the option. Considering a dividend yield of $2\%$ would provide an option value of 16.857, which is a much larger error.} $r_d=0.013$, then a binomial tree with $10000$ steps delivers a value of 18.213 instead of 18.526, with a relative error of approximately $169bp$.
This error is far above observed bid-ask spreads. This simple example points to the importance of using models that can explicitly address discrete dividends in empirical analysis, instead of using approximations based on continuous dividend yields.
In Section \ref{sec:boundaries_empirics}, we more extensively analyze the impact on the exercise boundary of the choice of a continuous dividend yield or a discrete dividend.

Moreover, we have chosen a sampling scheme that is equivalent to projecting the payoff function on a set of basis functions that are well localized, in the sense that their support is a closed interval. The implication is that local features of the payoff function, such as a discontinuity, are described by the coefficients relative to one or at most two basis functions, those lying next to the discontinuity.
This description avoids a noisy approximation induced by spurious oscillations when projecting discontinuities on basis functions defined on the entire domain, such as the Fourier sine-cosine basis or the Hermite polynomial basis.
From a computational perspective, this property translates into an accurate approximation even for payoffs with strong discontinuities, such as a digital payoff  $H(S_{t_l}, t_l) = \mathbb{I}_{S_{t_l>K}}$ in a Bermudan digital call option. The discontinuity may introduce noise at most in the coefficient relative to the indicator function of the interval in which the discontinuity is located.
The noise is completely eliminated if we make sure that that the strike value lies in between two consecutive grid points, so that the discontinuity is at the boundary between two consecutive indicator functions.
In this numerical example, we use the standard binomial tree as a benchmark, since the method of \cite{vellekoop2006efficient} provides no advantage in the absence of dividends.
Figure \ref{Figure:BS_berm_digitalcall} (see the caption of the table for the values of the parameters of the example) shows that the binomial tree has problems capturing the discontinuity in the payoff function. Consequently, an extremely slow convergence of the tree method for at-the-money Bermudan digital call options is yielded. The recursive projections are also at least an order of magnitude faster in pricing the out-of-the-money options. The apparent non-monotonic convergence of the binomial tree for $S_{0} = 120$ is because both methods achieve a quick convergence for in-the-money options, and the graph only displays small oscillations on the order of half a basis point around the true value.

	\medskip
	{\centering	
	[Figure \ref{Figure:BS_berm_digitalcall} about here]\medskip

	}

\section{Valuation by fast recursive projections}\label{sec:RecProj}

In this section, we generalize the approach developed in the introductory example of Section \ref{sec:example_european}. We consider a set of exercise dates $\{t_l\}_{l=1,\dots, L}$, with $t_{L}=T$.
The starting point is the following identity:
\begin{align}\notag
	V(x,t)
&=
 \int \mspace{-3mu} G(x,t ; y, T)H(y,T) dy
=\mspace{-9mu}
\int \mspace{-6mu}
\Bigl(\sum_{i=-\infty}^{\infty} \mspace{-6mu}
\langle G , \varphi_{i}\rangle  \varphi_{i}(y)\Bigr)
\Bigl(\sum_{j=-\infty}^{\infty} \mspace{-6mu}
\langle H , \varphi_{j}\rangle \varphi_{j}(y)\Bigr)
 dy \\
&=\mspace{-9mu}
\sum_{i,j=-\infty}^{\infty} \mspace{-6mu}
\langle G , \varphi_{i}\rangle
\langle H , \varphi_{j}\rangle
 \mspace{-6mu}
 \int \mspace{-6mu} \varphi_{i}(y)\varphi_{j}(y) dy  \label{eq:generalprojection}
	 =
 \sum_{j=-\infty}^{\infty}
\langle G,\varphi_{j}\rangle
\langle H ,\varphi_{j}\rangle,
\end{align}
where $\{\varphi_{j}\}_{i\in \Z}$ is a generic localized orthonormal basis in $L^2$,  that is the support of each function $\varphi_{j}$ is a finite closed interval.  $\langle G(x,t ; \cdot, T) \varphi_{j}(\cdot)\rangle$ and $\langle H(\cdot, T),\varphi_{j}(\cdot)\rangle$ (abbreviated in $\langle G,\varphi_{j}\rangle$ and $\langle H, \phi_{j}\rangle$ respectively) are the coefficients of the linear projection of $G$ and $H$ on $\{\varphi_{j}\}_{i\in \Z}$. The valuation by fast recursive projections is based on two steps: a projection step and a recursive step.
The projection step consists of projecting $G$ and $H$ at time $T$ on the basis $\{\varphi_{j}\}_{j\in \Z}$ and computing the coefficients $\langle G,\varphi_{j}\rangle$ and $\langle H,\varphi_{j}\rangle$. Due to finite support of the localized basis functions $\{\varphi_{j}\}_{j\in \Z}$, all the coefficients $\langle H,\varphi_{j}\rangle$ computed as the inner product between the function $H$ and the localized basis function $\varphi_{j}$ are automatically well defined, even if $H$ is not $L^2$-measurable. The recursive step consists of transmitting the coefficients back in time by computing the final sum in Equation (\ref{eq:generalprojection}).
We develop the method by considering the orthonormal set of indicator functions $\{e_{j}\}_{j\in \Z}$, $e_{j}=1/\sqrt{\Delta y}\, \mathbb{I}_{\underline{y}_{j}, \overline{y}_{j}}$ for an equally spaced grid $\{y_{i}\}$ with step $\Delta y$ and $\underline{y}_{j}=y_{j}-\Delta y/2$, $\overline{y}_{j}=y_{j}+\Delta y/2$ . The set $\{e_{j}\}_{j\in \Z}$ is a basis in $L^2$ when $\Delta y \to 0$.
Then, as shown in Equation \eqref{eq:InnerProduct}, we can approximate the coefficients $\langle H, e_{j}\rangle$ and $\langle G, e_{j}\rangle$ with $\sqrt{\Delta y}H(y_{j},T) $ and  with $\sqrt{\Delta y}\, G(x,t ; y_{j}, T)$. Thus, the projection step simply consists of sampling the relevant functions at the grid points $\{y_{j}\}_{j\in \Z}$.
The recursion step stems from the observation that, if $x$ in Equation \eqref{eq:generalprojection} takes values in the grid $\{y_{j}\}_{j\in \Z}$, then $v(y_{i},t_{L-1}) = \sum_{j=-\infty}^{\infty} G(y_{i},t_{L-1} ; y_{j}, T)H(y_{j}, T)\Delta y$ and $v(y_{i},t_{L-2}) = \sum_{j=-\infty}^{\infty} G(y_{i},t_{L-2} ; y_{j}, t_{L-1})v(y_{j}, t_{L-1})\Delta y$ are approximations of $V(y_{i}, t_{L-1})$ and $V(y_{i}, t_{L-2})$. That is, there is a recursive linear expression that connects the value of the contract at consecutive times and at different points of $\{y_{j}\}_{j\in \Z}$.
Here, all the functions are sampled at the same grid $\{y_{j}\}_{j\in \Z}$ for each potential exercise date $\{t_{l}\}_{l=1,\dots,L}$. Doing so allows the matrix representations of Equation \eqref{eq:rec_matr}.
In the following sections, we characterize the projection step and the recursive step in two different frameworks.
In Sections \ref{sec:ProjectionBS} and \ref{sec:RecursBS}, we consider the case in which the state price density is known in closed form, as in the Black-Scholes example developed in Section \ref{sec:example} or in the Merton jump-diffusion model. In Sections \ref{sec:ProjectionHes} and \ref{sec:RecursHes}, we characterize the projection step and the recursion step when the characteristic function of the state price density has a known analytic form, as in the Heston model.
In our presentation, we generically refer to this second class of models as to the ``stochastic volatility'' case. The methodology developed in Sections \ref{sec:ProjectionHes} and \ref{sec:RecursHes} covers affine jump-diffusion models and Levy models, in which we price by transform analysis.
Here, the key requirement is to be able to numerically compute the characteristic function of the state price density. In Section \ref{sec:NumericalHeston}, we provide numerical examples in the Heston case.

\subsection{Merton-Black-Scholes: the projection step}\label{sec:ProjectionBS}

The projection step is based on an approximation of the payoff function and of the state price density by the set of orthonormal functions $\{ e_{j} (y) \}_{j\in \Z}$:
\begin{align} \label{eq:projection}
	\tilde H(y,T) &= \sqrt{\Delta y}\, \sum_{j=-\infty}^{\infty} H(y_{j},T) e_{j}(y), \\ \label{eq:projection2}
	\tilde G(y_{i},t; y, T) & = \sqrt{\Delta y}\, \sum_{j=-\infty}^{\infty} G(y_{i},t; y_{j}, T) e_{j}(y).
\end{align}
The values of $\tilde H(y,T)$ and $\tilde G(y_{i}, t; y, T)$ in each interval $[\underline{y}_{j}, \overline{y}_{j})$ are given by the values of the functions $G$ and $H$ sampled in the mid point $y_{j}$ of each interval. Then, we have that $\tilde H(y,T)$ and $\tilde G(y_{i}, t; y, T)$ are piecewise constant approximations of $H(y,T)$ and $G(y_{i}, t; y, T)$.
In the algorithm, we only need the quantities $\{H(y_{j},T)\}_{j\in \Z}$ and  $\{G(y_{i},t; y_{j}, T)\}_{j\in \Z}$.
% which are the approximations $\tilde H(y,T)$ and $\tilde G(y_{i},t;y,T)$ evaluated at the grid values $\{y_{j}\}_{j \in \Z}.$
We use the full representations \eqref{eq:projection} and \eqref{eq:projection2} in Section A of the supplementary online Appendix to prove Proposition \ref{prop:convBS}.
There, we show that relying on the sampling approximations \eqref{eq:projection} and \eqref{eq:projection2} instead of the orthogonal projections given by the inner products as in Equation \eqref{eq:generalprojection}, leaves the convergence properties of the pricing algorithm unaffected.

%\subsection{Black-Scholes: the recursion step}
%
%Let $\hat G(x,t;k,T)$ be the Fourier transform of the transition probability density $G(x,t; k, T)$
%so that $\displaystyle G(x,t;y,T) = \frac{1}{2\pi}\int \mspace{-3mu}e^{ \iota k y} \hat G (x,t;k,T) dk $.
% We distinguish two cases. In Section \ref{sec:RecursBS}, we study the case where
%$G(x,t;y,T)$ is known in closed-form, as in the Black-Scholes model.
%In Section \ref{sec:RecursHes}, we tackle the case where
% $\hat G(x,t;k,T)$ is known in closed-form, as in the Heston model and the Variance Gamma model.
%Our methodology is thus suitable for affine models and Levy models
%where pricing is achieved by transform analysis.

\subsection{Merton-Black-Scholes: the recursive step}\label{sec:RecursBS}

The following proposition gives a recursion formula that relates the approximated values of the option at different points of the grid $\{y_{j}\}_{j=-\infty}^{\infty}$ and at different points in the set of exercise dates $\{t_{l}\}_{l=1,\dots,L}$. It also states the rate of convergence of the algorithm.

\begin{proposition}\label{prop:convBS}
Let $H(y,T)$ be such that $|H(y,T)-H(y',T)|<C|y-y'|$ for a positive constant $C$, and for $|y-y'|<\Delta y$.
Furthermore, let $v_{i}(t_{l})$ be defined for a set of dates $\{t_{l}\}_{l=1,\dots,L}$, with $t_{L}=T$, as follows:
\begin{alignat}{3}\label{eq:recursiveBS}
	v_i (t_{l}) &= \max\Bigl\{H(y_{i}, t_{l}), \sum_{j \in \Z} G(y_{i}, t_{l};y_{j}, t_{l+1})H(y_{j}, t_{l+1}) \Delta y \Bigr\},		&	\quad  	& \mathrm{for}	& \quad l&=L-1 ,	\\
	\label{eq:recursiveBS2}
	v_i (t_{l}) &= \max\Bigl\{H(y_{i},t_{l}), \sum_{j\in \Z} G(y_{i}, t_{l};y_{j}, t_{l+1})v_{j}(t_{l+1})\Delta y \Bigr\}, 	&			& \mathrm{for}	& l&=1,\dots, L-2.
\end{alignat}
Then, for each $t_{l}$ in $\{t_{1},\dots, t_{L-1}\},$ the approximated values $v_{i}(t_{l})$ defined in \eqref{eq:recursiveBS} and \eqref{eq:recursiveBS2} converge to the true value $V(y_{i}, t_{l})$ with an approximation error of the order $O\bigl((\Delta y)^{2}\bigr)$.
\end{proposition}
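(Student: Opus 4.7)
The plan is to decompose the global error into a one-step quadrature error and then propagate this bound backwards through the exercise dates by induction on $l$. The one-step estimate combines a midpoint-rule Taylor expansion on subintervals where the integrand is smooth with a direct local-Lipschitz estimate on the at most one subinterval containing a kink of the payoff or of the intermediate value function at the free exercise boundary.

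First, I would analyse a single recursion step. Fix a grid point $y_i$ and a pair of consecutive exercise dates $(t_l, t_{l+1})$. On each subinterval $[\underline{y}_j, \overline{y}_j]$ of length $\Delta y$ over which the integrand $G(y_i, t_l; \cdot, t_{l+1})\, V(\cdot, t_{l+1})$ is $C^2$, a midpoint-rule Taylor expansion gives a local error of $O((\Delta y)^3)$; summation over $j$ yields an $O((\Delta y)^2)$ global contribution, because the Gaussian-type tails of the Black-Scholes (or Merton) Green function make the second derivative integrable and the effective number of non-negligible subintervals is $O(1/\Delta y)$. For the single subinterval containing a kink of the payoff $H(y,T)=(y-K)_+$, the local Lipschitz hypothesis combined with the boundedness of $G$ implies that the integrand is Lipschitz there, so the midpoint approximation incurs an error of order $(\Delta y)^2$ on this interval as well. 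Adding the two contributions, the one-step quadrature error at any $y_i$ is $O((\Delta y)^2)$.

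Next, I would propagate this bound backwards by induction. Define $e_l = \sup_i |v_i(t_l) - V(y_i, t_l)|$; the boundary condition gives $e_L = 0$. For $l < L$, non-expansiveness of $\max$ yields
\begin{equation*}
e_l \leq \sup_i \Bigl| \sum_{j} G(y_i, t_l; y_j, t_{l+1})\, v_j(t_{l+1})\, \Delta y - \int G(y_i, t_l; y, t_{l+1})\, V(y, t_{l+1})\, dy \Bigr|,
\end{equation*}
which splits into (i) the quadrature error applied to the true value function $V(\cdot, t_{l+1})$, bounded by $O((\Delta y)^2)$ via the one-step analysis, and (ii) a propagated term $\bigl(\sum_j G(y_i, t_l; y_j, t_{l+1})\, \Delta y\bigr)\, e_{l+1}$, itself controlled by $e^{-r(t_{l+1}-t_l)} e_{l+1} \leq e_{l+1}$ (up to a quadrature error on the constant function $1$, which is also $O((\Delta y)^2)$). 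Iterating over the $L-1$ steps yields $e_1 \leq C\, L\, (\Delta y)^2$, which delivers the claimed rate for fixed $L$.

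The main obstacle, which underlies step (i) at every stage of the induction, is to verify that each intermediate $V(\cdot, t_{l+1})$ retains the piecewise-$C^2$ structure with only finitely many kinks and a local Lipschitz bound at each kink that the one-step argument requires. This reduces to showing that convolution with the smooth Green function $G$ preserves $C^2$ regularity on the continuation region, while the $\max$ with the Lipschitz payoff adds at most one new kink per step at the free exercise boundary. Both facts are standard for diffusion transition densities, but must be stated carefully so that the constants in the one-step estimate can be taken uniform in $l$; otherwise the backward iteration would not give a clean $O((\Delta y)^2)$ bound independent of the grid.
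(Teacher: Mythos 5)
Your proposal follows essentially the same route as the paper's proof in the supplementary appendix: a one-step rectangle-rule (midpoint) error estimate of order $(\Delta y)^2$, obtained from the smoothness of the Black--Scholes Green function together with the local Lipschitz control of the integrand on the (at most one) cell containing a kink, followed by backward propagation of the error through the $\max$ recursion using the non-expansiveness of $\max$ and the fact that the discounted transition weights sum to at most one. The only point to keep in view is the one you already flag: under the bare local-Lipschitz hypothesis the composite midpoint rule is generically only first-order accurate, so the stated $O\bigl((\Delta y)^{2}\bigr)$ rate genuinely requires the piecewise-smooth structure (finitely many kinks) of the payoff and of each intermediate value function $V(\cdot,t_{l+1})$ — a property that holds for call payoffs and Black--Scholes continuation values and that the paper's argument relies on implicitly as well.
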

\begin{proof}
See Section A of the supplementary online Appendix.
\end{proof}

In principle, the continuity condition on $H(y,T)$ rules out digital payoffs. We need the condition to hold within each interval $\Delta y$ centred on the grid $\{y_{j}\}_{j=-\infty}^{\infty}$. We can still price digital options by ensuring that the strike value lies in between two consecutive grid points. Then the convergence properties stated in Proposition \ref{prop:convBS} remain true.
This procedure is not the same as placing nodes in the quadrature method because the grids remain the same for all dates  $\{t_{l}\}_{l=1,\dots,L}$
provided that the strike price does not change with time.
With a fixed grid, we can price  even more exotic options, e.g. a digital call with a down-and-out feature, provided that the barrier occurs at the same value of the underlying for each $t_{l}$.

The main difference between Equations \eqref{eq:recursiveBS} and \eqref{eq:recursiveBS2} is provided by the following explanation. In the right-hand side of  \eqref{eq:recursiveBS}, we find the exact values taken by the payoff function $H(y,T)$ on the grid $\{y_{j}\}_{j\in \N}$, and there is no approximation of the payoff. On the right-hand side of Equation \eqref{eq:recursiveBS2}, we find the values  $\{v_{j}(t_{l+1})\}_{j\in \N}$ obtained in the previous step of the algorithm, and these are approximations of the true values $\{V(y_{j}, t_{l+1})\}_{j\in \N}$.
Regardless of this fundamental difference, Proposition \ref{prop:convBS} states that the convergence rate is the same for both cases.
In the case of a European option with maturity $T$, by taking $t=t_{l}$ in \eqref{eq:recursiveBS}, we obtain the approximated price of the option at $t$. Equation  \eqref{eq:recursiveBS2} allows us to recursively compute the values of the option at different points in time, and thus to price Bermudan options, American options, and other types of path-dependent options.

In the implementation of Section \ref{sec:example_european}, we project the payoff function on a finite number of indicator functions. Thus, the values $\{H(y_{j}, T)\}_{j=1,\dots,N}$ and $\{\Delta y\, G(y_{i},t; y_{j}, t_{l+1})\}_{j=1,\dots,N}$ are then the entries of the finite dimensions matrices $\mathbf{H}(T)$ and $\mathbf{G}(t_{l};t_{l+1})$, respectively. Hence, we can express Equations \eqref{eq:recursiveBS} and \eqref{eq:recursiveBS2} in exactly the same matrix forms as in \eqref{eq:matrix} and \eqref{eq:rec_matr}.
By choosing $N$ sufficiently large, we can make the error introduced by truncating the infinite summations in  \eqref{eq:projection} and \eqref{eq:projection2} arbitrarily small. The truncation only suppresses the indicator functions localized in regions where the state price density vanishes, and where we can neglect the local contribution to the computed expectation\footnote{Truncating $N$ is equivalent to choosing an $y_{min}$ and $y_{max}$, and to restrict our analysis to $L^2([y_{min}, y_{max}])$.
This is the standard assumption in each quadrature based approach. This also insures that $H\in L^2([y_{min}, y_{max}])$, even though the proofs in Appendices A and B of the supplemental file do not need the measurability of $H$.}.
When the time step is a constant $\tau=t_{l+1} - t_{l}$, we obtain the fast and easily implementable algorithm of the introductory example.

Furthermore, as in Section \ref{sec:exampl_PathDep}, whenever an exercise date $t_{l}$ coincides with a dividend-paying date $t_{h}$,
we only need to replace the entries $\{\Delta y\, G(y_{i},t_{l}; y_{j}, t_{l+1})\}_{j=1,\dots,N}$ of  $\mathbf{G}( t_{l}; t_{l+1})$ with the values $\{\Delta y\, G(y_{i}-d, t_{l}; y_{j}, t_{l+1})\}_{j=1,\dots,N}$, whenever $t_{l}\in \{t_{h}\}_{h=1,\dots,H}$, to accommodate for a discrete dividend $d$.

\subsection{Stochastic volatility: the projection step}\label{sec:ProjectionHes}
In the class of  stochastic volatility models, there are two state variables, the underlying asset $S_{t}$ and the variance $\sigma^{2}_{t}$.
The bivariate state price density $G_{2}(S_{t}, \sigma^{2}_{t}, t; y, w, T)$ describes the discounted transition probability density from the asset level $S_{t}$ and variance level $\sigma^{2}_{t}$ at time $t$ to the asset level $y=S_{T}$ and variance level $w=\sigma^{2}_{T}$ at time $T$.
Its Fourier transform is denoted by $\hat{G}_{2}(S_{t}, \sigma^{2}_{t}, t; \lambda, \kappa, T)$, so that $G_{2}(S_{t}, \sigma^{2}_{t}, t; y, w, T) =\displaystyle \frac{1}{4\pi^{2}}{\iint} d\lambda d\kappa e^{-\iota( \lambda y +\kappa w)}\hat{G}_{2}(x, \xi, t; \lambda, \kappa, T)$, where $\iota$ is the imaginary unit.

For the values taken by the underlying asset, let the grid $\{y_{j}\}_{j\in \N}$ and the orthonormal set $\{e_{j}(y)\}_{j\in \Z}$ be defined as in the Black-Scholes case.  The Fourier transforms of $\{e_{j}(y)\}_{j\in \Z}$ are denoted  by $\{\hat{e}_j(\lambda)\}_{j\in \Z}$ , such that (see Section C of the supplementary online Appendix for the analytic form of $\hat{e}_j(y)$): $e_{j}(y)=\frac{1}{2\pi}\int d\lambda e^{-\iota \lambda y}\hat{e}_{j}(\lambda)$.

For the variance, we use the same equally spaced grid $\{w_{q}\}_{q\in \N}$ for the values taken by both the variables $\sigma^{2}_{t}$ and $w=\sigma^{2}_{T}$, such that $\Delta w= w_{q+1} - w_{q}$. Let $\{\varepsilon_{q}(w)\}_{q\in \N}$ be the normalized indicator functions centered on the grid $\{w_{q}\}_{q\in \N}$ and of width $\Delta w$, and $\{\hat \varepsilon_{q}(\kappa)\}_{q\in \N}$ be their Fourier transforms. Furthermore, let $\{\lambda_{r}\}_{r\in \Z}$ and $\{\kappa_{z}\}_{z\in \Z}$ be two regularly spaced grids of values taken by the transformed variables $\lambda$ and $\kappa$, with constant widths $\Delta \lambda$ and $\Delta \kappa$, respectively.

The approximation of the payoff function $H(y,T)$ is the same as in Equation \eqref{eq:projection}. Because we do not have the analytical form of $G_{2}(S_{t}, \sigma^{2}_{t}, t; y, w, T)$, we cannot directly sample the transition density, thus we must rely on the approximation given below.
The projection step for the bivariate state price density $G_{2}(S_{t}, \sigma^{2}_{t}, t; y, w, T)$ when the conditioning variables take the values $S_{t}=y_{i}$ and $\sigma^{2}_{t} =w_{p}$, is based on the following approximation:
\begin{align}\label{eq:projectionHesG}
	&\tilde G_{2}(y_{i}, w_{p}, t; y, w, T)\\ \notag
	&=\sqrt{\Delta y \Delta w} \mspace{-6mu}\sum_{j=-\infty,q=1}^{\infty} \Bigl( \frac{1}{4\pi^{2}}\mspace{-9mu}\sum_{r,z=-\infty}^{\infty}\mspace{-9mu} \hat{G}_{2}(y_{i},w_{p}, t; \lambda_{r}, \kappa_{z}, T)\hat{e}_{j}(-\lambda_{r})\hat{\varepsilon}_{q}(-\kappa_{z})\Delta \lambda \Delta \kappa\Bigr)e_{j}(y)\varepsilon_{q}(w)	\\ \notag
	& \defeq \sqrt{\Delta y \Delta w} \mspace{-6mu}\sum_{j=-\infty, q=1}^{\infty}\Gamma_2(y_{i}, w_{p},t; y_{j}, w_{q}, T)e_{j}(y)\varepsilon_{q}(w),
\end{align}
where the second equality in \eqref{eq:projectionHesG} defines the quantities $\{\Gamma_2(y_{i}, w_{p},t; y_{j}, w_{q}, T)\}_{j \in \Z, q \in \N}$.
To parallel the discussion following Equation \eqref{eq:projection2} in the Black-Scholes case, the quantities $\{\Gamma_2(y_{i}, w_{p},t; y_{j}, w_{q}, T)\}_{j \in \Z, q \in \N}$ play the same role as $\{G(y_{i}, t;y_{j}, T)\}_{j \in \Z}$.
We can motivate these approximations as follows. The orthogonal projection of the state price density $G_{2}(y_i,w_{p},t; y,w, T)$ on the two orthonormal sets
$\{e_{j}(y)\}_{j\in \N}$ and $\{\varepsilon_{q}(w)\}_{q\in\N}$ is given by the inner products $\iint \mspace{-3mu}dydw\mspace{1mu} G_{2}(y_i,w_{p},t;y,w,T)e_{j}(y)\varepsilon_{q}(w)$. Because we only know the closed form\footnote{For the closed form of $\hat{G}_{2}(y_i,w_{p}, t; \lambda, \kappa, T)$, see \cite{griebsch2013evaluation}.} of
$\hat{G}_{2}(y_i,w_{p}, t; \lambda, \kappa, T)$ and not $G_{2}(y_i,w_{p},t;y,w,T)$, we exploit the following key relationship:
\begin{align}\notag
	&\iint dydw\mspace{1mu} G_{2}(y_i,w_{p},t;y,w,T)e_{j}(y)\varepsilon_{q}(w) \\ \notag
	& = \iint dydw \frac{1}{4\pi^{2}}{\iint} d\lambda d\kappa e^{-\iota( \lambda y +\kappa w)}\hat{G}_{2}(y_{i}, w_{p}, t; \lambda, \kappa, T)e_{j}(y)\varepsilon_{q}(w) \\ \notag
	&= \frac{1}{4\pi^{2}}{\iint} d\lambda d\kappa\, \hat{G}_{2}(y_i,w_{p},t;\lambda, \kappa, T) \int dy e^{-\iota\lambda y}e_{j}(y)\int dw e^{-\iota\kappa w}\varepsilon_{q}(w) \\
	\label{eq:GammaTwo}
	&= \frac{1}{4\pi^{2}}{\iint} d\lambda d\kappa\, \hat{G}_{2}(y_i,w_{p},t;\lambda, \kappa, T) \hat{e}_{j}(-\lambda) \hat{\varepsilon}_{q}(-\kappa).
\end{align}

\noindent
Each $\Gamma_2(y_{i}, w_{p},t; y_{j}, w_{q}, T)$ is an approximation of the last integral appearing in Equation \eqref{eq:GammaTwo}, obtained by a direct sampling of the Fourier tranforms $\hat G_{2}(x_i,w_{p},t;\lambda, \kappa, T)$, $\hat{e}_{j}(-\lambda)$ and $\hat{\varepsilon}_{q}(-\kappa)$ on the bivariate grid $\{(y_{j},w_{q})\}_{j\in\Z,q\in \N}$, and on the univariate grids $\{\lambda_{r}\}_{r\in \Z}$ and $\{\kappa_{z}\}_{z\in \Z}$.

As in Section \ref{sec:ProjectionBS}, only the quantities $\{\Gamma_2(y_{i}, w_{p},t; y_{j}, w_{q}, T)\}_{j\in \N,q \in \Z}$ are the inputs for the pricing algorithm. The representation \eqref{eq:projectionHesG} is only used in Section B of the supplementary online Appendix to prove the convergence properties of the algorithm.

\subsection{Stochastic volatility: the recursive step}\label{sec:RecursHes}
In the stochastic volatility framework, the recursion for a Bermudan option consists of moving backwards in time as in Equation \eqref{eq:rec_exp} with:
\begin{equation}\label{eq:condexpStochVol}
	V(x,\xi, t_{l}) = \max\bigl\{H(x,t_{l}),\E\bigl[e^{-r(t_{l+1}-t_{l})}V(S_{t_{l+1}},\sigma^{2}_{t_{l+1}}, t_{l+1})\big| S_{t_{l}}=x, \sigma^{2}_{t_{l}}=\xi\bigr]\bigr\}.
\end{equation}
Thus, the recursive step in the Heston model is the sampling counterpart of \eqref{eq:condexpStochVol}.
 The following proposition gives a recursion formula that relates the approximated values of the option at different points of the bivariate grid $\{(y_{j}, w_{q})\}_{j\in Z, q \in \N}$ and at different points in time $t_{l+1}$ and $t_{l}$. It also states the rate of convergence of the algorithm. Define \mbox{$\Gamma_{1}(y_{i}, w_{p},t_{l}; y_{j}, t_{l+1})=\displaystyle\sum_{q=1}^{\infty}\Gamma_2(y_{i}, w_{p},t_{l}; y_{j}, w_{q}, t_{l+1})\sqrt{\Delta w}$}.
\begin{proposition}\label{prop:convHes}
Let $H(y,T)$ be such that $|H(y,T)-H(y',T)|<C|y-y'|$ for a positive constant $C$, and for $|y-y'|<\Delta y$. Let $v_{ip}(t_{l})$ be defined for a set of dates $\{t_{l}\}_{l=1,\dots,L}$, with $t_{L}=T$, as follows:
\begin{alignat}{4}\label{eq:recursiveHes}
	v_{ip} (t_{l}) &= \max\Bigl\{ H(y_{i},t_{l}), \sum_{j=1}^{\infty} \Gamma_{1}(y_{i}, w_{p},t_{l}; y_{j}, t_{l+1}) H(y_{j}, t_{l+1}) \sqrt{\Delta y} \Big\}, &	\ & \mathrm{for}	 & \ & l=L-1 ,	\\
	\label{eq:recursiveHes2}
	v_{ip} (t_{l}) &= \max\Big\{H(y_{i},t_{l}), \sum_{j,q=1}^{\infty}\mspace{-3mu}\Gamma_2(y_{i}, w_{p},t_{l}; y_{j}, w_{q}, t_{l+1})\mspace{1mu}v_{jq}(t_{l+1})\mspace{1mu}\sqrt{\Delta y \Delta w} \Big\}, & & \mathrm{for } & & l=1,\dots, L-2.
\end{alignat}

Then, for each $t_{l}$ in $\{t_{1},\dots, t_{L-1}\},$ the approximated values $v_{ip}(t_{l})$ defined in \eqref{eq:recursiveHes} and \eqref{eq:recursiveHes2} converge to the true value $V(y_{i}, w_{p}, t_{l})$ with an approximation error of the order $O\bigl(\underline\Delta^{2}\bigr)$, with $\underline \Delta = \sqrt{(\Delta y)^{2} + (\Delta w)^{2}}$.
\end{proposition}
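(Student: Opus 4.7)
My plan is to mirror the structure of the proof of Proposition \ref{prop:convBS} (Black--Scholes case) but carry the bivariate mesh $(\Delta y, \Delta w)$ through every step, and insert an extra layer to control the fact that here the Green function is only known through its Fourier transform.

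First, I would establish the one-step projection error at the terminal date $t_{L-1}$, i.e.\ bound
\[
 \bigl|V(y_i,w_p,t_{L-1}) - v_{ip}(t_{L-1})\bigr|
\]
by decomposing it into three pieces. Piece (a) is the truncation error from replacing the infinite integral over $(y,w)\in\R\times\R_+$ by the finite bivariate Riemann sum over $\{(y_j,w_q)\}$; under mild tail decay of $G_2$ (integrability and moment bounds coming from the Heston dynamics), truncating at $y_{\min},y_{\max},w_{\min},w_{\max}$ contributes an error that can be made negligible. Piece (b) is the mid-point Riemann sum error on the rectangle grid, which, combined with the Lipschitz assumption on $H$ and standard smoothness of $G_2(y_i,w_p,t_{L-1};\cdot,\cdot,t_L)$ in the interior, contributes $O(\underline\Delta^2)$ by the standard two-dimensional midpoint rule. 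Piece (c), which is the genuinely new ingredient relative to Proposition \ref{prop:convBS}, is the error incurred by replacing $G_2$ by $\Gamma_2$, i.e.\ the error of recovering $G_2$ via a discrete/truncated inverse Fourier transform on the grids $\{\lambda_r\},\{\kappa_z\}$; since $\hat G_2$ and $\hat e_j,\hat\varepsilon_q$ are smooth and decay sufficiently fast in the Heston case, choosing $\Delta\lambda,\Delta\kappa$ and the $\lambda,\kappa$ truncation levels as a function of $\underline\Delta$ makes this piece absorbed into the $O(\underline\Delta^2)$ bound, exactly as in the univariate treatment carried out in Section A of the online appendix.

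Next I would execute the backward induction on $l = L-2, L-3,\dots,1$ using \eqref{eq:recursiveHes2}. Writing $\varepsilon_l = \max_{i,p}|V(y_i,w_p,t_l) - v_{ip}(t_l)|$, I would bound $\varepsilon_l$ by two contributions: the one-step numerical error at $t_l$ (again $O(\underline\Delta^2)$ by the same three-piece decomposition, but now applied to the value function $V(\cdot,\cdot,t_{l+1})$ — which is Lipschitz in $(y,w)$ because it is the max of a Lipschitz payoff and a conditional expectation of a Lipschitz function under the Heston semigroup), plus the propagation error $\le e^{-r(t_{l+1}-t_l)}\varepsilon_{l+1}\le \varepsilon_{l+1}$ coming from feeding the approximate $v_{jq}(t_{l+1})$ into the recursion. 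The non-expansiveness $|\max(a,x)-\max(a,y)|\le |x-y|$ handles the early-exercise maximum without loss. Summing the geometric-in-time bound over a fixed number $L$ of exercise dates keeps the total error at $O(\underline\Delta^2)$.

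The main obstacle, and the step on which I would spend the most care, is the bivariate analogue of piece (c): showing that the four-fold sampling in \eqref{eq:projectionHesG} (sampling $\hat G_2$ in $(\lambda,\kappa)$ and $\hat e_j,\hat\varepsilon_q$ on the same transform grids, then reassembling on the direct bivariate grid) produces $\Gamma_2$ that approximates $G_2$ with an error small enough not to degrade the overall $O(\underline\Delta^2)$ rate. The key lemma I would formulate is a Poisson-summation / aliasing estimate showing that as $\Delta\lambda,\Delta\kappa\to 0$ and the transform truncation goes to infinity at appropriate rates relative to $\underline\Delta$, the resulting error on $\{\Gamma_2(y_i,w_p,t;y_j,w_q,T)\}$ in $\ell^\infty$ is $o(\underline\Delta^2)$; this uses smoothness and decay of $\hat G_2$ in Heston, and the explicit form of $\hat e_j,\hat\varepsilon_q$ derived in Section C of the online appendix. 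Once that lemma is in place, the rest of the argument is a straightforward bivariate adaptation of the Black--Scholes proof.
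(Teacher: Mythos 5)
Your outline matches the paper's strategy: the proof in Section~B of the online appendix likewise extends the Black--Scholes argument of Proposition~\ref{prop:convBS} by (i) bounding the one-step midpoint/projection error at $O(\underline\Delta^{2})$ via the piecewise-constant representations \eqref{eq:projection} and \eqref{eq:projectionHesG}, (ii) controlling the additional error from recovering $\Gamma_2$ out of the sampled Fourier transforms in \eqref{eq:GammaTwo}, and (iii) propagating the error backward through the non-expansive max and the discounted (sub-stochastic) transition weights over the fixed number $L$ of exercise dates. The only point worth making explicit in a write-up is the one you already isolate as your piece (c): the stated $O(\underline\Delta^{2})$ rate implicitly requires the transform-space grids $\Delta\lambda,\Delta\kappa$ and their truncation levels to be refined jointly with $\underline\Delta$, using the decay of $\hat G_{2}$ and of the sinc-type functions $\hat e_{j},\hat\varepsilon_{q}$, exactly as you propose.
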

\begin{proof}
See Section B of the supplementary online Appendix.
\end{proof}

For $t=t_{L-1}$,  Equation \eqref{eq:recursiveHes} gives the price of a European option in the Heston model.
Because the payoff function $H(y,T)$ only depends on the value $y$ taken by the underlying asset at $t_{L}=T$, the computed price $v_{ip}(t_{L-1})$ depends on the stochastic variance only through the conditioning value $\sigma^{2}_{t_{L-1}}=w_{p}$.
For this reason, we can use $\{ \Gamma_{1}(y_{i}, w_{p},t; y_{j}, T)\}_{j \in \N}$ instead of $\{\Gamma_2(y_{i}, w_{p},t; y_{j}, w_{q}, T)\}_{j, q \in \N}$.
Using  the values $\{ \Gamma_{1}(y_{i}, w_{p},t; y_{j}, T)\}_{j \in \N}$ in \eqref{eq:recursiveHes} is equivalent to applying the projection step on the Fourier transform $\hat{G}_{1}(y_{i}, w_{p}, t; \lambda,T) = \int \mspace{-3mu}d\kappa\mspace{1mu} \hat{G}_{2}(y_{i}, w_{p}, t; \lambda,\kappa, T)$. It is the univariate function $\hat{G}_{1}$ and not the bivariate  $\hat{G}_{2}$ that appears, for instance, in the original work by Heston (1993) for a European option.
Figure \ref{fig:projection-2d}  graphically presents the projection and recursion steps in the bivariate case.
 	
	\medskip
	{\centering	
	[Figure  \ref{fig:projection-2d} about here]

	}

In the implementation, we truncate the summations in \eqref{eq:recursiveHes} and \eqref{eq:recursiveHes2}, so that the grid $\{(y_{j},w_{q})\}_{j=1,\dots,N;q=1\dots, W}$ has $N\times W$ points.
The $N\times W$ matrix of computed prices at time $t=t_{l}$  is denoted by $\bm v_{2}(t_{l})$ , that is $\bm{v}_{2, jq}(t_{l})=v_{jq}(t_{l})$. Let $\mathbf{\Gamma}_{2}(y_{i}, w_{p}, t_{l}; t_{l+1})$ be the $N\times W$ matrix of the approximated transition probabilities from the initial point $(y_{i}, w_{p})$ to the end points of the entire grid $\{(y_{j},w_{q})\}_{j=1,\dots,N;q=1\dots, W}$. As in Section \ref{sec:example_european}, we integrate the normalization parameter $\sqrt{\Delta y \Delta w}$ in the definition of the transition matrix. We then have that $\mathbf{\Gamma}_{2, jq}(y_{i}, w_{p}, t_{l}; t_{l+1})=\Gamma_2(y_{i}, w_{p},t_{l}; y_{j}, w_{q}, t_{l+1})\sqrt{\Delta y \Delta w}$.
Let $\phi_{j}=\{\hat{e}_{j}(-\lambda_{r})\}_{r=1,\dots, R}$ and $\varphi_{q}=\{\hat{\varepsilon}_{q}(-\kappa_{z})\}_{z=1,\dots, Z}$ be the values of the functions $\hat e_{j}(-\lambda)$ and $\hat \varepsilon_{q}(-\kappa)$ sampled at the grids $\{\lambda_{r}\}_{r=1,\dots, R}$ and $\{\kappa_{z}\}_{z=1,\dots, Z}$, respectively.
Furthermore, we define the $R \times N$ matrix $\bm{\phi} = (\phi_{1}, \dots, \phi_{N})$, the
$Z \times W$ matrix $\bm{\varphi} = (\varphi_{1}, \dots, \varphi_{W})$, and the $R \times Z$ matrix $\mathbf{\hat G}_{2}(y_{i}, w_{p}, t_{l}; t_{l+1})$ with entries $\mathbf{\hat G}_{2, rz}(y_{i}, w_{p}, t_{l}; t_{l+1})=\hat G_{2}(y_{i}, w_{p}, t_{l}; \lambda_{r}, \kappa_{z}, t_{l+1})$.
Then, we can write the coefficients of the projection step \eqref{eq:projectionHesG}  in matrix form as:
\begin{equation}\label{eq:projectionHesGMatr}
	\mathbf{\Gamma}_{2}(y_{i}, w_{p}, t_{l}; t_{l+1}) = \bm\phi'  \mathbf{\hat G}_{2}(y_{i}, w_{p}, t_{l}; t_{l+1}) \bm\varphi\sqrt{\Delta y \Delta w}.
\end{equation}
The recursive step \eqref{eq:recursiveHes2} becomes the following:
\begin{align}\notag
	v_{ip}(t_{l}) = \max \Bigl\{H(y_i, t_l), & \sum_{j=1}^{N}\sum_{q=1}^{W}  \Gamma_2(y_{i}, w_{p},t_{l}; y_{j}, w_{q}, t_{l+1})v_{jq}(t_{l+1})\sqrt{\Delta y\Delta w}\,\Bigr\}\\
	\label{eq:recursiveHes2Matr}
	=\mspace{9mu} &\max \bigl\{H(y_i, t_l),  \mathbf{\Gamma}_{2}(y_{i}, w_{p}, t_{l}; t_{l+1}) \bm : \bm v_{2}(t_{l+1})\, \bigr\},
\end{align}
where the symbol ``$\bm :$'' denotes the Frobenius, or entry-wise, product.

In Section D of the supplementary online Appendix, we show how to speed up the computation of the matrices $\mathbf{\Gamma}_{2}(y_{i}, w_{p}, t_{l}; t_{l+1})$ by taking advantage of the space translation invariance property of transition densities.
Our method contains the Fast Fourier Transform (FFT) as a special case. In the FFT, the univariate grids for  $\lambda$ and $\kappa$ are automatically set, which can sometimes cause an imprecise reconstruction of the $\mathbf{\Gamma}_{2}(y_{i}, w_{p}, t_{l}; t_{l+1})$ matrices.
 %to obtain a matrix recursion formula like \eqref{eq:rec_matr}.

\subsection{Numerical illustrations in the Heston model}\label{sec:NumericalHeston}
We investigate the performance of our method in a standard affine model such as the \cite{heston1993closed} model.  We study an American option, written on an asset $S_t$, which pays discrete dividends and that evolves according to the following stochastic volatility model:
\begin{equation}\label{eq:stochvol}\begin{split}
        dX_{t} &= \Bigl(r -\frac{1}{2}\sigma_{t}^{2}\Bigr)dt + \sqrt{\sigma_{t}^{2}}\cdot dW_{1t},\\
        d\sigma_{t}^{2} &= \beta\bigl(\sigma_{LT}^2 - \sigma_{t}^{2}\bigr)dt + \omega \sqrt{\sigma_{t}^{2}}\cdot dW_{2t}, \quad E(dW_{1t}\cdot dW_{2t})=\rho dt.
\end{split}\end{equation}
In Equation \eqref{eq:stochvol}, $X_{t}=\log S_{t}$, and $\sigma_{t}^{2}$ is the variance process. We work with $X_t$ to be able to implement the space invariance property of the transition matrices, as outlined in Section D of the supplementary online Appendix. We conduct two simulation studies.
In the first, the American call has a time to maturity of one year, and $3$ dividends worth $d=2$ are distributed at $t_{h}=0.25,0.5,0.75$.
In the second, the time to maturity remains one year, but a single large dividend $d=10$ is paid out after six months.
The process parameter values are the following: $r=0.05$,  $\sigma_{LT}=0.2$, $\beta=2$ and $\omega=0.2$. Moreover we choose the parameter $\rho$ to be equal to zero. We compute the price for an at-the-money option ($S_{0}=K=100$).
The benchmark method in this analysis is a finite-difference (hereafter \emph{FD}) numerical solution of the partial derivatives equation (\textit{PDE}) that describes the evolution
of the price process $V_t$ of the American call. We implement an alternating direction implicit (\emph{ADI}) variant of the finite-difference scheme.
For a recent discussion of schemes similar to  \emph{FD}, see, for instance, \cite{in2010adi}. This implementation is equivalent to a Crank-Nicolson scheme, which in standard problems converges at a rate $O\bigl((\Delta t)^{2}\bigr)$, where $\Delta t$ is the temporal discretization interval.
In both the \emph{FD} scheme and the recursive projections, the evolution of the option price $V_t$ is studied on a rectangular grid in the space $(X,\sigma^{2})$, with $X \in [\log(K) - 10\sigma_{LT}\sqrt{T}, \log(K) + 10\sigma_{LT}\sqrt{T}]$ and $\sigma^{2} \in [0, 0.3]$. In the \emph{FD} scheme, the parameter $m_s$ gives the number of equally spaced grid points in the $X$ direction, and $m_v$ gives the number of equally spaced grid points in the $\sigma^{2}$ direction, so that the grid points are $\{(X_i, \sigma^2_p)\}_{i=1,\dots,m_s; p=1,\dots,m_v}$.  The parameter $L_{T}$ gives the number of time steps used.
In the recursive projections, under a sampling scheme we define $\Delta y=2^{-J}a$, where $a$ is a positive constant that gives the step of the $\{y_{j}\}_{j=1,\dots,N}$ grid when $J=0$. Describing the convergence of the recursive projections in terms of the parameter $J$ emphasizes how the approximation error decreases each time the number of grid points is doubled. Similarly, $\Delta w = 2^{-J_{w}}a_{w}$, where $a_{w}$ is the step with $J_{w}=0$ of the $\{w\}_{p=1,..,W}$ grid in which the $\sigma^{2}_{t}$ variable takes values.

Assuming the contemporaneous correlation $\rho=0$ simplifies the implementation of the \emph{FD} scheme, in the sense that neglecting the correlation between $X_{t}$ and $\sigma^{2}_{t}$ makes the \emph{FD} scheme easier to code and faster.
On the other hand, the speed and complexity of the recursive projection method are unaffected by the value chosen for the parameter $\rho$. The correlation is addressed in the Green function  $G_2(x, \sigma^{2}_{t}, t; y, w, T )$ and consequently in the coefficients of the matrix $\hat{\mathbf{G}}_{2}$. Because the speed of the method depends on the number of entries in the $\hat{\mathbf{G}}_{2}$ matrix, and not on the values taken by the entries, it is clear that the choice of $\rho$
does not affect the convergence rate of the recursive projections. This feature is the first advantage of the recursive projection over finite-difference schemes. This simulation study will then give a lower bound to the difference in speed between the recursive projections and the \emph{FD} scheme.
To price an American option on dividend-paying stocks, we should implement the \emph{FD} scheme-equivalent of the recombining tree. Doing so is practically unfeasible because it would mean computing at each ex-dividend date a new option price at each point of the grid.
Instead,  at each ex-dividend date $t_{h}$ and at each grid point $(X_{i}, \sigma^2_{p})$, we opt for comparing  the intrinsic value $H(X_{i}, t_h)$ with the continuation value $V(\widetilde{X}^{d}_{i}, \sigma^2_{p}, t_h)$, where $\widetilde{X}^{d}_{i}$ is the value of the $X$ grid closest to $\log(e^{X_{i}}-d)$.
This choice amounts to perturbating the \emph{FD} scheme at each ex-dividend rate, which could translate into a convergence slower than the theoretical $O\bigl((\Delta t)^{2}\bigr)$. This feature is a second advantage of the recursive projection over the finite-difference schemes, because, as we explained in Section \ref{sec:BSillustration}, the recursive projections can easily adapt to discrete dividends without their affecting the convergence properties of the algorithm.
The recursive projections achieve convergence quickly in the $\sigma^{2}$ direction. The method does not seem to improve by setting a resolution level greater than $J_{w}=4$; thus, we keep this value fixed throughout our simulations.
The \emph{FD} scheme is also not very sensitive to the number of points used in the $\sigma^{2}$ direction. We find no improvement beyond $m_{v}=31$.

Figure \ref{fig:Hes3divConv} shows the results for the 3-dividend case. The true value used to compute the pricing errors is $7.397$, obtained with the resolution level $J=13$. The graph on the right displays the pricing error of the \emph{FD} scheme as a function of the time discretization parameter $L_{T}$. Each line is relative to a different value of the spatial discretization parameter $m_{s}$. The time labels are all relative to the $m_{s}=3200$ curve.
The \emph{FD} scheme with $L_{T} =2048$ and $m_{s}=6400$ delivers a value within 1$bp$; thus, we assume that the methods have converged when the absolute value of the relative error is within 1$bp$ of $7.397$. The graph on the left plots the relative pricing error of the recursive projections against the resolution level $J$. The regression line on the left graph shows that the estimated slope is almost exactly the slope of $-2$ predicted by the theoretical convergence results of Proposition \ref{prop:convHes}.
 The \emph{FD} is at least one order of magnitude slower. Compare, for instance, the computation time needed to deliver a 4$bp$ error (2$s$ against 65$s$), or a 1$bp$ error (8$s$  against 130$s$). Figure \ref{fig:Hes1divConv} compares the convergence speed of the two methods in the 1-dividend case. The true value of 7.302 is obtained by the recursive projection method with $J=13$. The \emph{FD} scheme requires 48 seconds to reach a $5 bp$ relative error,  with parameters $m_{s}=400$ and $L_{T}=2048$. The bottom curve, relative to $m_{s}=200$, shows that the method does not converge for smaller values of the space discretization parameter.
The small $5 bp$ bias of the \emph{FD} is due to the large value of the dividend $d$ and the perturbation of the scheme at each dividend date. The rate as a function of the resolution level $J$ at which the recursive projections attain the $1 bp$ error band is approximately -2, as theoretically predicted.

The reason for the difference in speed between the recursive projections and the \emph{FD} scheme lies in the fundamentally different way  finite differences and quadrature methods deal with time stepping.
Both methods achieve time stepping through matrix multiplications. But while the number of time steps in the \emph{FD} is of the order of $2^{9}$ or higher, the recursive projections only need 3 or 4 time steps, one per divided payment, plus the expiry date.
The size of the parameters $L_{T}$, $m_{s}$ and $m_{v}$ determines the efficiency of the implementation of the \emph{FD} scheme.
If we compare the magnitude of the parameters $L_{T}$, $m_{s}$ and $m_{v}$ that we need to obtain convergence with the values of the equivalent parameters in \cite{in2010adi}, we find that our implementation is close to the most recent ones in the literature.
While specific implementations could marginally improve on ours, we think that we give a fair representation of the potential of the two techniques.
We remind that the computational time per time step is underestimated in our simulation, since the assumption of $\rho = 0$ reduces the number of intermediate steps in the $ADI$ implementation of the \emph{FD} scheme. Finally, if we include jumps in the process of the underlying stock, as we do in our empirical application, the numerical complexity of the recursive projections remains exactly the same as in the stochastic volatility case.
Introducing jumps in the underlying process while keeping the finite differences viable from a computational point of view asks for technical devices (see for instance \cite{d2005robust}) which are model specific and not yet implemented in conjunction with stochastic volatility\footnote{In Appendix H of the supplemental online Appendix we elaborate further on the implications of the choice $\rho=0$, on the relative advantages of recursive projections and $ADI$, and compare in detail our implementation with the one of \cite{d2005robust}.
}.

Another notable difference between the \emph{FD} and the recursive projection method is that the latter demands far fewer changes to adapt to different pricing problems.
In Equation \eqref{eq:matrix}, the matrix $\mathbf{G}(t;T)$ depends only on the dynamics of the underlying asset and not on the payoff. We can compute it once for all and use it to price different options with different payoffs, because the payoff functional form only impacts the vector $\mathbf{H}(T)$.  Such a design is particularly suited for object-oriented programming, which is often used in quant desks.
In finite-difference schemes, we cannot price options with different payoffs through the use of the same transition matrices, as boundary conditions affect the way the matrices are computed.

\section{Numerical applications and empirics}\label{sec:boundaries_empirics}
\subsection{Numerical comparison of early exercise boundaries}\label{sec:early_boundaries}

In this section, we compare the early exercise boundary implied by the Black-Scholes model with those implied by the Heston stochastic volatility and the Merton jump-diffusion models. We study two cases in which i) the stock distributes a continuous dividend yield and ii) the stock distributes a discrete dividend. Combining cases i) and ii) with the different modelling assumptions for the underlying asset and different maturities leads to very different patterns. For instance, in the discrete dividend case, the early exercise boundary is lower under the Black-Scholes model than under the Heston model, whereas in the continuous dividend case, the opposite is true. Hence, by modelling a discrete dividend as a continuous yield, we can draw misleading conclusions in an empirical evaluation of suboptimal non-exercise.
The exercise boundary $S^*_{t}$ for an American call with a continuous dividend yield is defined as the lowest value of $S_{t}$ such that $S_{t}-K \geq C(S_{t},T,K)$. If the value of the current stock is above $S^*_{t}$, then it is optimal for the call holder to exercise his option. With a discrete dividend, it can only be optimal to exercise the call option on the days immediately before the ex-dividend dates $t_{h}$. The exercise boundary $S^*_{t}$ for an American call option is then defined as the lowest value of $S_{t_{h}}$ such that $S_{t_{h}}-K \geq C(S_{t_{h}}-d,T,K)$.

	\medskip
	{\centering	
	[Figure  \ref{Heston_exercise_boundary} about here]
	
	}

In Panel A of Figure \ref{Heston_exercise_boundary}, we plot the early exercise boundary for the Heston and Black-Scholes models for an American call option with a continuous dividend yield $r_d=0.03$ (right graph) and with an equivalent quarterly discrete dividend $d=1.38$ (left graph). We choose $d=1.38$ to have an equivalent total annual dividend between the continuous dividend yield $r_d=0.03$ and the discrete dividend case. Indeed, $1.38=0.03 S^*/4$, where $S^*=184$ is the critical stock price under the Black-Scholes model in the dividend yield case for maturity $T=0.5$. We use the following set of representative parameters: $T=0.5$, $K=100$, $r=0.05$, $\sigma_0=0.2$, $\omega=0.1$, $\sigma_{LT}=0.3$, $\beta=4$, and $\rho=-0.5$ (\cite{adolfsson2013representation}).
For comparison, we follow \cite{heston1993closed}, and we use the Black-Scholes model with a volatility parameter that matches the (square root of the) variance of the spot return over the life of the option in the Heston model.
When the stock distributes a regular quarterly dividend, there are only two dividend payments during the life time of an option with maturity $T=0.5$, and it is immediately before the payment dates that it can be optimal to exercise the option.
In our example, the two dividend payments occur immediately, at $t=0$, and at $t=0.25$,  corresponding to a time to maturity of $0.5$ and $0.25$, respectively.
At both dates, the value of the exercise boundary is lower under the Black-Scholes model. With a continuous dividend yield, the Heston early exercise boundary is always below the Black-Scholes boundary, whereas with discrete dividends, the opposite is the case.

Although the findings in the continuous dividend case are in line with those of \cite{adolfsson2013representation}, the findings in the discrete dividend case are entirely new.
This difference warrants further intuitive discussion.
Assume there is only one discrete dividend to be paid. The continuation value of the call option immediately after the ex-dividend date is that of a European call with the remaining time to maturity. When $\rho \leq 0$, the price of European options for a deep in-the-money call, where early exercise could be optimal, is higher in the Heston case than in the Black-Scholes case (see \cite{heston1993closed,hull1987pricing}).
For instance, in the left graph of Panel B of Figure \ref{Heston_exercise_boundary}, for a time to maturity of $0.25$, this would be the case in the range of stock prices of approximately $150$. Even by taking into account the dividend drop in computing the continuation value, the ex-dividend stock price should remain in the region where the Heston price is higher.
We can repeat the same argument for a number of discrete dividends sufficiently small (typically of the order of a couple percent) to prevent the stock price from falling in the price range where the call has more value under the Black-Scholes model. The behavior of the boundary with a continuous dividend is less straightforward to grasp.
Following \cite{Kim:1990aa} and \cite{jamshidian1992analysis}, we can decompose the value $V(S_{t},t)$ of an American option into two components, namely, the European value $V^{E}(S_{t},t)$ and the early exercise premium $V^{A}(S_{t},t)$, such that:
\begin{align}\label{eq:EarlyExPremium}
	V(S_{t},t) & = V^{E}(S_{t},t) + V^{A}(S_{t},t)	\\ \notag
			& =e^{-r(T-t)} \mathbb{E}\bigl[ (S_{T}-K)_{+}\big | S_{t}, \sigma^{2}_{t} \bigr] +\int_{t}^{T}\mspace{-9mu}e^{-r(s-t)} \mathbb{E}\bigl[ (r_{d}S_{s} -rK)\mathbb{I}_{(S_{s}>S^{*}_{s})}\big | S_{t}, \sigma^{2}_{t} \bigr]ds,
\end{align}
where $S^{*}_{s}$ is the early exercise boundary at time $s$ and $\mathbb{I}_{(S_{s}>S^{*}_{s})}$ equals one if, at time $s$, the stock is in the exercise region, otherwise zero. We can interpret $V^{A}(S_{t},t)$ as a  continuum of European call options with maturity $T-s$, strike price $S^{*}_{s}$, and payoff $r_{d}S_{s} -rK$.
For each of these European options, we can apply the results of Table II in \cite{hull1987pricing} who compare the values of
European options under general stochastic volatility dynamics with the Black-Scholes price.
Call values under the stochastic volatility assumption are lower when the contracts are at-the-money and $\rho \leq 0$.
The continuum of contracts composing the $V^{A}(S_{t},t)$ are at-the-money when $S_{s}=S^{*}_{s}$. As confirmed from our numerical simulations, the $S^{*}_{s}$ values are distributed in the region immediately above $S=150$, that is, exactly where the price of the American option under the Heston model is lower than that under the Black-Scholes model, and explain the negative bump in the right graph of Panel B of Figure \ref{Heston_exercise_boundary}.

Similarly, we can characterize the early exercise under the Merton jump-diffusion model, where the asset $S_{t}$ evolves according to the following jump-diffusion process:
\begin{equation}
    \frac{dS_t}{S_t}=(r-r_d-\gamma \upsilon)dt + \sigma_M dW_t +(\psi-1) dq_t,\label{eq:Merton}
\end{equation}
 where $r_d$ is the continuous dividend yield paid by the asset, and $\sigma_M^2$ is the instantaneous variance of the return conditional on no jump arrivals. The Poisson process, $q(t)$, is independent of $W_{t}$, and such that there is a probability $\gamma dt$ that a jump occurs in $dt$, and $1-\gamma$ probability that no jump occurs.
The parameter $\gamma$ represents the mean number of jumps per unit of time. The random variable $\psi$ is such that $\psi-1$ describes the percentage change in the stock price if the Poisson event occurs, and $\upsilon=\mathbb{E}[\psi-1]$ is the mean jump size.
We further make the standard assumption (for instance, see \cite{Amin:1993a,bakshi1997empirical}) that $\log(\psi) \sim N(\mu_{\psi},\sigma_{\psi}^2)$. If $\gamma=0$, then we recover the standard Black-Scholes model with no jumps.
We use the following set of representative parameters: $K=40$, $r=0.08$, $\gamma=5$, $\sigma_M^2=0.05$, $\sigma_{\psi}^2=0.05$, $\mu_{\psi}=0$ (\cite{Amin:1993a}).
We set the volatility parameter in the Black-Scholes model  equal to the volatility of the underlying return over the life of the option in the Merton model.

	\medskip
	{\centering	
	[Figure  \ref{Merton_exercise_boundary} about here]
	
	}

In the Panel A of Figure \ref{Merton_exercise_boundary}, we plot the early exercise boundary for the Merton and Black-Scholes models for an American call option with a continuous dividend yield $r_d=0.05$ (right graph) and in the case in which the stock pays an equivalent quarterly discrete dividend \mbox{$d=1.125 $}~\footnote{As before, we take $d=1.125$ because $1.125=0.05 S^*/4$, where $S^*=90$ is the critical stock price with the dividend yield $r_d=0.05$ for maturity $T=0.5$.} (left graph). As for the Heston case, the results on the continuous dividend case are in line with the existing literature, e.~g., \cite{Amin:1993a}, and we provide new insights into the discrete dividend case.
To interpret the graphs in Figure \ref{Merton_exercise_boundary}, we have to make an important distinction. For short maturity options, the jump component in Equation \eqref{eq:Merton} dominates the diffusion component. As explained in \cite{Amin:1993a} and \cite{merton1976option}, the result is higher prices for short maturities under the Merton model than under the Black-Scholes model. We call this effect the jump effect. For longer maturities, the jump effect no longer dominates the diffusion component but instead creates an interplay that makes the jump-diffusion process observationally similar to a stochastic volatility process. For a discrete dividend, both the jump effect and the stochastic volatility effect, as previously discussed in the Heston case, predict a higher boundary in the Merton case than in the Black-Scholes case, which holds true for all maturities. This result is exactly what we find in the left graph of Panel A of Figure \ref{Merton_exercise_boundary}.
For a continuous dividend case, we have that the jump effect dominates for short maturities, giving a higher boundary under the Merton model than under the Black-Scholes model. For longer maturities, the jump effect diminishes, and the boundary behaves as in the stochastic volatility model, that is, taking lower values than in the Black-Scholes case. These insights explain the crossing of the early exercise boundary that we observe in the right graph of Panel A, Figure \ref{Merton_exercise_boundary}.

A key numerical finding of this section is that the early exercise is more likely under the Black-Scholes model when we face discrete dividends. In the next section, we assess the empirical consequences of this finding for the cost of suboptimal non-exercise.

\subsection{Empirical analysis}\label{sec:empirics}
In this section, we apply the recursive projection method to characterize the early exercise boundary of a large sample of call options. The options have a maturity of less than six months and are written on dividend-paying stocks, which are part of the Dow Jones Industrial Average Index (DJIA). The sample comprises daily observations between January 1996 and December 2012. We investigate the early exercise decision of call holders in light of the different values
that the exercise boundary can take under distinct modelling assumptions for the underlying asset.  Following the procedure suggested by \cite{pool2008failure}, we first check which contracts should be exercised by comparing the intrinsic value immediately before the dividend payment with the continuation value on the ex-dividend day. We quantify how much is economically lost in the case of a suboptimal non-exercise decision.
This amount depends on the continuation value and is model-specific. We compare the results obtained under three modelling environments, namely,
Black-Scholes, Merton jump-diffusion, and Merton jump-diffusion with the stochastic volatility dynamics of the Heston model.
\cite{Bates1996} was the first to suggest combining the Merton and Heston models, and therefore, we refer to this process specification as the Bates model.
Finally, whenever we find evidence of a suboptimal non-exercise decision, we show that trading costs alone cannot justify the behavior of investors.

In our empirical analysis, we price by fully taking into account the discrete nature of the dividend distributed by the underlying stocks and the American style of the call options, and we do so for the three pricing models. This feature is a peculiarity of our work, given that the standard empirical literature on options mainly focuses on European $S\&P 500$ options with a dividend yield or limits itself to the Black-Scholes model. We have already shown in an example in Section \ref{sec:BSillustration} how neglecting the discrete cash flow feature of the dividend payment, and approximating it with a continuous dividend yield, leads to a pricing error of $169bp$. In our empirical analysis, we also need to correctly take into account  the time of payment. A popular approach, when dealing with American options on dividend-paying stocks, is the so called ``escrow dividend'' model. Under this approximation, the option is priced as if it were a European contract, valued at the prevailing stock price minus the present value of all the dividends to be distributed during its remaining life. This technique correctly models that a long holder of a call option is unprotected from dividend distributions, but does not properly integrate the early exercise premium of an American option, and tends to underprice the option. As a consequence, an investor following this approach, could underestimate the value of the early exercise boundary, and exercise a contract that he would be better off by keeping alive. For example, on  May 10th, 2006, the Dupont stock closes at $45.71$ dollars. The call option with $K=30$ and $T=0.45$ should not be exercised if the continuation value is computed correctly, but the option should be exercised if the continuation value is approximated with a European price. In this case, if an investor exercises his option wrongly, he suffers a loss of $200 bp$ if the spot price is modelled under the Merton dynamics\footnote{We elaborate further on this example in Section G of the supplementary online Appendix where we show that a correct modelling of the dividend as a discrete cash flow does matter when measuring the cost of sub-optimality in the empirics of the next section.}.

%What \cite{broadie2007model} show in their appendix A is that transforming American options to European ones does not matter for calibration purposes when we face continuous dividend yield since differences in early exercise premia are not so large in that case.
%This is not true with multiple discrete dividend payments.
%The main motivation for using a European approximation for American prices is that, as stated in \cite{broadie2007model}, ``The computation time required for American options makes calibration to a very large set of options impractical.'' Thanks to the recursive projections, we show that this is feasible.

 Our choice for alternative modelling environments follows the empirical findings of \cite{bakshi1997empirical}, who suggest that jumps and stochastic volatility play a dominant role in pricing short-term options whereas modelling stochastic interest rates does not seem to significantly improve the pricing performance.
In addition, the choice of the jump arrival distribution in Equation \eqref{eq:Merton} is motivated by the work of \cite{bajgrowicz2016jumps}, who show that high-frequency data on individual stocks support the hypothesis that jump arrivals follow a simple low-intensity Poisson process.
Their findings also support the assumption in Merton (1976) that the jump component is nonsystematic, i.e., diversifiable, because of the absence of co-jumps affecting all stocks. This explains our choice of the Bates model, where $X_{t}=\log(S_{t})$, with:
\begin{align}\label{eq:MRT+HST}
    & dX_t =(r-r_d-\gamma \upsilon-\frac{1}{2}\sigma^2_{t})dt + \sigma_{t} dW_{1,t} +log(\psi) dq_t,\\ \nonumber
    & d\sigma_{t}^{2} =\beta \bigl(\sigma_{LT}^{2} - \sigma_{ t}^{2}\bigr)dt + \omega \sqrt{\sigma_{t}^{2}}\cdot dW_{2,t}, \quad E(dW_{1,t}\cdot dW_{2,t})=\rho dt,
\end{align}
and $\log(\psi) \sim N(\mu_{\psi},\sigma^2_{\psi})$.
To implement the recursive projections in the Bates model, we need the Green function implied by Equation \eqref{eq:MRT+HST}. Because the jump process is independent of the Brownian motions $ dW_{1,t}$ and $ dW_{2,t}$, we use the property that the characteristic function of a bivariate process is the product of the characteristic functions of the independent univariate processes. The transition matrices are then obtained as in Section \ref{sec:RecursHes}.
The Bates model shows how straightforward it is to adapt the recursive projections to more complex models.
The number of entries in the transition matrices for the Bates model is exactly the same as it is as for the Heston model.
In the calibrations, we have compared the computation of the transition matrices with both the FFT and the full sampling method outlined in Section \ref{sec:ProjectionHes}. Because we have found that the FFT correctly reproduces the $\mathbf{\Gamma}_{2}(y_{i}, w_{p}, t_{l}; t_{l+1})$ matrices, we have opted for using the FFT throughout the empirical exercise to take advantage of the further increase in speed provided by the FFT algorithm.

The daily data on all option attributes, the stock price, and the dividend distribution details are from Optionmetrics. We obtain the daily data on the interest rates from the Treasury constant maturities of the H15 report of the Federal Reserve\footnote{We access the Optionmetrics and the H15 databases through the Wharton Research Data Services (WRDS) research platform.}.
A total of 101,295 series of short-term options written on 30 stocks enter our database. The total number of records is approximately 9.5 million.
This number stresses the importance of a fast and versatile numerical method.
Table \ref{Table:Db_description} reports the number of quotes for each stock with a breakdown by maturity and moneyness. Our study focuses on the early exercise behavior of investors; hence, we focus on the in-the-money options, which are the category of options for which the number of quotes is the highest.

	\medskip
	{\centering	
	[Tables  \ref{Table:Db_description} and \ref{Table:Param} about here]
	
	}

\subsubsection{Estimating the cost of suboptimal non-exercise}\label{sec:cost}

Table \ref{Table:Param} reports the results of the calibration for the three modelling frameworks. We obtain the parameters through the minimization of the implied volatility mean squared error, as in \cite{christoffersen2004importance}\footnote{ Section F of the supplementary online Appendix gives a detailed description of the data and the calibration procedure.}.
The first line of Table  \ref{Table:Param} displays the average values of the parameters calibrated on our sample of single stocks, whereas the second line reports the average values that \cite{bakshi1997empirical} obtain for the same parameters from contracts written on the  $S\&P 500$ index.
The parameters that rule the level of the volatility smile, namely, the Black-Scholes volatility $\sigma_{BS}$, the long term volatility $\sigma_{LT}$, and the spot volatility $\sigma_0$, are much higher in our single stock calibration than in the index calibration, which reflects the well known fact that an index is less volatile than its components.
Indeed, in our sample, the average Black-Scholes volatility is $29\%$, $\sigma_0$ is $28\%$, and the average long-term implied volatility is $32\%$, whereas for the index options, the same parameters take the values of $18.15\%$, $20\%$, and again $20\%$.
The jump parameters in the Bates model show that jumps are on average less frequent in the single stock case than in the index case ($\gamma_{Stocks}=0.5$ against $\gamma_{SP500}=0.61$), but the amplitude and variability are higher for single stocks ($\mu_{\psi,Stocks}=-0.12$ and $\sigma_{\psi,Stocks}=0.18$ respectively) than for the index  ($\mu_{\psi,SP500}=-0.09$ and $\sigma_{\psi,SP500}=0.14$). Given that the index is a diversified portfolio, it displays a jump whenever one of its constituents jumps, but the non-systematic nature of the jumps of single stocks (see \cite{bajgrowicz2016jumps})  attenuates the variability and magnitude at an aggregate level.
The remaining two parameters of the stochastic volatility component of the Bates model,  the correlation parameter $\rho$ and the volatility of volatility $\omega$, have a specific impact on the shape of the implied volatility smile (\cite{hagan2002managing,west2005calibration}). A negative $\rho$ implies a negatively sloped smile. The correlation parameter is in absolute value lower in the single stock case ($\rho_{SP500}=-0.52$ against $\rho_{Stocks}=-0.35$), meaning that the
implied volatility smile for the index is more negatively sloped than for individual stocks.
This finding is consistent with the findings of  \cite{bakshi2003stock} who describe the same relationship between the slopes of the index and of the individual stocks implied volatility smiles. \cite{bollen2004does} also find the same pattern, and explain it by relating the slope of the index smile to the buying pressure for index puts, with the demand for call options driving the shape of the smile of single stocks. The volatility of volatility $\omega$ determines the convexity of the implied volatility smile. The difference in the values taken by $\omega$ is striking. We find $\omega$ to be $75\%$ for stock options, whereas \cite{bakshi1997empirical} find a much smaller value of $40\%$ for short-term index options. This difference is due to the higher convexity of the implied volatility smiles of stock options versus that of index options, another feature also documented in \cite{bollen2004does}. The parameters $\rho$ and $\omega$  are related to the smile shape through the higher moments of the distributions of the returns of the underlying. A more negative $\rho$ generates a more negatively skewed distribution of index returns with respect to stock returns, whereas a higher $\omega $ in the single stock returns leads to a higher kurtosis than in the index return distribution.

In Table I of the supplementary online Appendix, we provide more detailed results, including a breakdown of the calibration by stock, and we show that the values of the calibrated parameters are homogeneous across stocks.

	\medskip
	{\centering	
	[Table  \ref{Table:Param} about here]
	
	}

After having calibrated the models, we are able to compute the price $C(S_{t_{h}-1}-d, K, T)$ on the day previous to the dividend payment date $t_{h}$, by using the recursive projections. The price $C(S_{t_{h}-1}-d, K, T)$ is the continuation value of the option at date $t_{h}$, when the dividend $d$ is distributed. By comparing it with the intrinsic value $S_{t_{h}-1}-K$, we can assess which options should be exercised on $t_{h}-1$. If an option should be exercised (i.e., $C(S_{t_{h}-1}-d,K,T) \leq S_{t_{h}-1}-K$), then a positive open interest at the end of the day before ex-dividend ($OI_{t_{h}-1}>0$) measures the failure of investors to exercise the option. In this case, we calculate the suboptimal non-exercise percentage as the following ratio:
\begin{equation}\label{eq:percentage_unexercise}
    NE_{\%}=\frac{OI_{t_{h}-1}}{OI_{t_{h}-2}},
\end{equation}
i.e., the number of contracts outstanding at the end of the day $t_{h}-1$ to the total number of contracts outstanding at the end of day $t_{h}-2$.
The quantity defined in Equation \eqref{eq:percentage_unexercise} is an approximation of the actual non-exercise ratio, because it neglects a possible issue of new contracts on date $t_{h}-1$. This event is unlikely; indeed \cite{pool2008failure} test the approximation on a subsample of contracts for which they have the real exercise data. They conclude that the approximation (\ref{eq:percentage_unexercise}) is a precise description of the actual exercise behavior of option investors. The total amount of money that is left on the table due to suboptimal
non-exercise is given by the following formula:
\begin{equation}\label{eq:TotalLoss}
TL=100 \times OI_{t-1} \times [(S_{t-1}-K)-C(S_{t-1}-d,K,T)].
\end{equation}
The continuation value $C(S_{t-1}-d,K,T)$ depends on the model used for pricing; hence, the total loss due to suboptimal non-exercise (TL) is itself model-specific.

Table \ref{table:NII_lost} presents the results on the suboptimal non-exercise behavior of investors.

	\medskip
	{\centering	
	[Table  \ref{table:NII_lost} about here]
	
	}

\noindent Table \ref{table:NII_lost} clearly shows that the optimal early exercise decision depends on the model used for the stock price. Under the Black-Scholes model, approximately $9.5\%$ of the outstanding contracts should be exercised, and the percentage decreases (approximately $7.5\%$) under the alternative models.
This result is consistent with the numerical findings of Section \ref{sec:early_boundaries},
where we show that, in the case of discrete dividends, the early exercise boundary under the Black-Scholes model is lower compared to that implied by the Merton and Heston models. As a general rule, contracts that should be exercised under the Merton or Bates models should also be exercised under the Black-Scholes model. In our sample, we find some exceptions to this rule because,
in Section \ref{sec:early_boundaries},
we choose the model parameters such that the total variance of the returns over the life of the option is the same in all models, whereas in real data, this condition may not hold. To give some examples, $4680$ options should be exercised under Black-Scholes but not under the Bates model, whereas the reverse is true only for $249$ contracts.
Similarly, we find that $2872$ options should be exercised under the Black-Scholes model but not under the Merton model, whereas the opposite occurs with only $53$ options. The first important lesson we learn is that, by allowing for more sophisticated models than the Black-Scholes model, the number of contracts that should be optimally exercised decreases by almost $25\%$. The suboptimality figures are model-dependent and may be a consequence of the calibration procedure.
The comparison between our calibration results and the results of \cite{bakshi1997empirical} are reassuring in terms of the reliability of our calibration method. To justify the suboptimal behavior found in our sample, we should obtain unreasonably high values for the jump and intensity parameters.

A second piece of evidence that stands out from Table \ref{table:NII_lost} is that the percentage of investors who leave the options suboptimally non-exercised is higher under the Black-Scholes model than under the other models, $39\%$ versus approximately $30\%$. We compute these percentages in accordance with Definition (\ref{eq:percentage_unexercise}).
If we restrict our attention to the $1965$ contracts in our sample that should be exercised under the Black-Scholes model but not under the Merton model or the Bates model, we find a striking $81\%$ of no-exercise. These results may suggest that investors do not limit themselves to a Black-Scholes world when evaluating their options but rely on more sophisticated models that include jumps or stochastic volatility.
Even if this evidence is a considerable step towards understanding the investor decision-making process, it does not fully solve the puzzle. Indeed, even in the Merton and Bates models, we still find a high percentage of suboptimal non-exercises, which leads to a global loss of approximately $130-140$ million dollars, down approximately
$30\%$ from the loss of $206$ million dollars in the Black-Scholes model.

A second possible explanation of the early exercise puzzle is that investors wait until the options are deeper in-the-money before exercising. If we restrict ourselves to the options that should be exercised only under the Black-Scholes model but not under the Merton or Bates models, then the average moneyness of the contracts, measured by the delta of the options, is $0.962$. If we consider all the contracts that should be exercised under the Black-Scholes model, including those that should also be exercised under the other two model specifications, the average moneyness rises to $0.986$.
Figure \ref{Figure:moneyness} shows how relevant the moneyness of the contract is in the non-exercise decision of investors: the more in-the-money the option is, the smaller the number of suboptimal non-exercises.
Investors may not respond immediately to favorable stock price movements and may take some time before reacting and optimally exercising their option, which would be in line with the behavior on the early prepayment of mortgages documented by \cite{stanton1995rational}.

	{\centering	
	[Figure \ref{Figure:moneyness} about here]
	
	}

Our summary results regarding the exercise decision in the Black-Scholes model are in line with those obtained by \cite{pool2008failure}.
In that work, the authors apply the early exercise decision rule to all options series by using the Black-Scholes model with historical volatility and find that $53.1\%$ of investors leave their options unexercised when instead they should have been exercised. Their data span over ten years (from 1996 to 2006) and to compare our results with theirs, we divide our sample into two subsamples, the first spanning the years 1996-2006 and the second spanning the years 2006-2012. Then, we calculate the average percentage of suboptimal non-exercise in the two subsamples and find that the percentage of suboptimal non-exercise under the Black-Scholes model is approximately $47\%$ in the first subsample and $37\%$ in the second. The decrease in the non-exercise behavior with time intimates that investors become more attentive in monitoring their investments. There is a small difference between our results ($47\%$) and the $53.1\%$ found in
\cite{pool2008failure}.  The explanation is most likely our focus on the constituents of the Dow Jones Industrial Average, whereas \cite{pool2008failure} consider all option series. It is likely that, for large-cap companies, stock and option prices are monitored more closely than they are on average.

Throughout our empirical investigation, we choose a model-based approach to calculate the continuation value of the option $C(S_{t_{h}-1}-d,K,T)$. We could have also used a market-based approach where the continuation value is the market price of the option. The market-based approach checks whether the quantity $C_{MKT}(S_{t_{h}-1}, K, T)-(S_{t_{h}-1}-K)_+$ equals $0$, where $C_{MKT}(S_{t_{h}-1}, K, T)$ is the observed market price at $t=t_h-1$. As discussed in \cite{pool2008failure} and in \cite{barraclough2012early}, the market-based approach has shortcomings. The most important is that it does not make it possible to calculate the total loss due to suboptimal non-exercise, which we do in Equation (\ref{eq:TotalLoss}). In addition, the bid-ask spread and the discreteness of the prices make it difficult to decide which $C_{MKT}$ should be used. For all of these reasons, we follow \cite{pool2008failure} and \cite{barraclough2012early} and use a model-based approach with Equation \eqref{eq:percentage_unexercise} to account for the actual exercise behavior of options investors. \cite{barraclough2012early} only use the market-based approach as a useful model-free test to confirm the presence of suboptimal non-exercise behavior. They find that the market-based approach gives a magnitude of suboptimality that is comparable to that implied by the model-based approach. This last piece of evidence is an additional argument against the possible objection that an incorrect calibration of the model parameters may be the source of the suboptimal exercise figures.

\subsubsection{The role of fees}

According to the recent literature on option prices (\cite{jensen2016early, christoffersen2014illiquidity}), trading costs and financial frictions in general strongly affect both the option prices and the early exercise decision of American options. In this section, we investigate whether the suboptimal non-exercise behavior of investors is due to the trading costs that investors face when exercising their options.

Following \cite{pool2008failure}, we model the costs of exiting a long call position as a per share lump sum $\mathcal{F}$ that the investor must pay at the moment he decides to exercise. The specific value of $\mathcal{F}$ depends on how the exit is accomplished according to the different possible objectives of the investor. The most expensive value of the fee $\mathcal{F}$ is attained when the investor wants to exercise the option and reenter into the same call position. \cite{pool2008failure} estimate an average value for the rollover costs $\mathcal{F}$ by using the commissions of the high-cost brokers, and they obtain a very conservative amount of $\mathcal{F}=0.4446$ dollar per share. A detailed description of the components of the fee $\mathcal{F}$ can be found in \cite{pool2008failure}.

To understand the role of the fees in the early exercise decision, we perform two different empirical exercises.
As a first check, we re-perform the exercise of Section \ref{sec:cost} and compute the loss due to a suboptimal non-exercise decision, but this time using $C(S_{t_{h}-1}-d,K+\mathcal{F},T) $ as the continuation value, and $(S_{t_{h}-1}-K-\mathcal{F}) $ as the intrinsic value. The fee value  is  $\mathcal{F}=0.4446$.
The fee $\mathcal{F}$ enters both in the exercise proceeds and in the continuation value. Indeed, at the moment of the exercise decision, the investor should decide whether to exercise and hence pay the exercise fee immediately or not exercise and postpone the payment of the exercise fee to a future date.
Accordingly, the calculation of the total amount of money that is lost due to suboptimal non-exercise is given by the following formula:
\begin{equation}\label{eq:TotalLossFee}
TL_{\mathcal{F}}=100 \times OI_{t-1} \times [(S_{t-1}-K-\mathcal{F})-C(S_{t-1}-d,K+\mathcal{F},T)].
\end{equation}
The second column in Table \ref{table:NII_lost} shows the summary results including the fee. They are not very different from those obtained without considering the fee (first column of Table \ref{table:NII_lost}). We can conclude that the inclusion of trading costs does not change the big picture on the suboptimal non-exercise of investors, as outlined in the previous paragraph.

As a second empirical exercise, we calculate the value of the fee that would justify the non-exercise decision of investors to detect possible additional costs that are not taken into consideration in the fee  $\mathcal{F}$. To do so, for each option for which $C(S_{t_{h}-1}-d,K,T)<(S_{t_{h}-1}-K)$, but that is not optimally exercised by some of the investors, we compute the value of the implied fee $\mathcal{F}$ that would justify the non-exercise decision. It amounts to numerically finding the zero of the following function:
\begin{equation}
    f(\mathcal{F})=C(S_{{t_{h}-1}}-d,K+\mathcal{F},T)-(S_{t_{h}-1}-K-\mathcal{F}).
\end{equation}
The results are reported in Table \ref{table:Fee}. The average implied fee is between $7$ and $8$ dollars per share, an incredibly high amount compared to the already conservative fee of $0.4446$ dollar per share estimated by \cite{pool2008failure}. No realistic hidden fees can sum up to 7 dollars per share, and the trading costs of exiting a long call option position cannot fully justify the suboptimal non-exercise behavior of investors.
We can interpret the difference between the implied fee of 8 dollars and the conservative fee of 0.4446 as an implied opportunity cost for the holder of the option to monitor the optimal exercise of the American option. The holder of the option chooses to spend this amount on alternative activities.

	{\centering	
	[Figure \ref{table:Fee} about here]
	
	}

\section{Concluding remarks}\label{sec:conclusion}

We investigate the exercise behaviour of investors in a large database of 101,295 series of short-term American call options. \cite{pool2008failure} find that more than 40\% of the investors fail to optimally exercise their contracts.
We extend their analysis by including stochastic volatility and jumps to the process of the underlying stock.  In order to deal with the large option database and the repeated calculations required for the calibration and pricing, we develop an option pricing technique which is at the same time fast, precise and which can handle both multidimensional dynamics and cash dividend distributions. We start from the observation that, by monitoring the value of an option at discrete times and by sampling the value function of the contract on a finite grid of values of the underlying asset, we can describe the evolution of the price process in terms of elementary matrix operators.
The interpretation of the elements of such a matrix in terms of a functional projection allows us to extend the matrix approach to the pricing of contracts written on assets following processes whose transition probabilities do not have a known analytical expression in the direct space. The recursive projection method owes its speed to the simplicity of its algorithm, which is based on sampling. Moreover, our approach allows us to derive transition matrices that relate the option prices at points in time that are arbitrarily distant.
The number of time steps required is solely driven by features of the contracts, such as dates at which the contract needs to be monitored, and this is the main feature that makes our method faster than existing alternatives.

By applying our technique to the dataset, we can explain up to 25\% of the gain forgone due to suboptimal exercise decisions, as computed in \cite{pool2008failure}.
This result confirms the insights we obtain from the theoretical part of the paper. Indeed, we show that the exercise boundary is higher under the Merton and Heston models than under the Black-Scholes model if the dividend is discrete.
This result underlines the importance of the correct modelling of the dividend distribution.
We show that by modelling the dividend as a continuous yield instead of a discrete cash flow, the exercise frontier in the Bates model would have been lower instead of higher, and the suboptimal exercise behavior of  \cite{pool2008failure} would have been reinforced instead of mitigated.
We further try to check whether we can explain the remaining 75\% of suboptimal behavior in terms of transaction costs (\cite{jensen2016early}). We show that hidden transaction costs would need to be unrealistically large to explain the entire amount foregone by investors.
This observation leads us to interpret the implied transaction fee as a monitoring cost.

We can envisage two lines of research for further investigation: other possible applications of our method and improving the algorithm.
American option pricing is only a particular case of stochastic optimal control problems. We can think of applying the recursive projection method to other problems, such as the optimal portfolio allocation involving complex and path-dependent financial assets.
Currently, these types of complex problems are solved by using Monte Carlo simulations  (\cite{Detemple2003monte}) and our method could offer a more efficient computational alternative. Similarly our method makes possible other types of analysis with options based on individual stocks paying discrete dividends and sophisticated pricing models such as, e.g., the recent studies of tail risk (\cite{andersen2015risk, Andersen2016aa}), political uncertainty (\cite{Kelly2016a}), sector-wide government guarantees (\cite{kelly2016price}), and factor structure (\cite{christoffersen2015factor}). We can also use our method to extend the optimal control problem of \cite{hodder2007} to more complex dynamics for the decision variable.
The second possible extension of our work is the use of different function bases in the projection step. Candidate basis systems are functions that share the localization property with the indicator function but that may display higher regularity and enhance the convergence speed.
Faster convergence should not come at the cost of a more complicated numerical implementation, and therefore, research should go in the direction of functional projections that result in linear transformations of the sampled values (\cite{Sweldens1996lifting, Sweldens1998lifting}).

% ----------------------------------------------------------------------------------
%      THE BIBLIOGRAPHY
% ---------------------------------------------------------------------------------

\bibliography{FRPpaper}

\begin{thebibliography}{95}
\expandafter\ifx\csname natexlab\endcsname\relax\def\natexlab#1{#1}\fi

\bibitem[{Adolfsson et~al.(2013)Adolfsson, Chiarella, Ziogas, and
  Ziveyi}]{adolfsson2013representation}
Adolfsson, T., Chiarella, C., Ziogas, A., Ziveyi, J., 2013. Representation and
  numerical approximation of {A}merican option prices under {H}eston stochastic
  volatility dynamics. Research Paper 327, Quantitative finance research
  centre, University of Technology Sidney.

\bibitem[{A\"it-Sahalia and Kimmel(2007)}]{ait2007maximum}
A\"it-Sahalia, Y., Kimmel, R.~L., 2007. Maximum likelihood estimation of
  stochastic volatility models. Journal of Financial Economics 83, 413--452.

\bibitem[{A\"it-Sahalia and Kimmel(2010)}]{ait2010estimating}
A\"it-Sahalia, Y., Kimmel, R.~L., 2010. Estimating affine multifactor term
  structure models using closed-form likelihood expansions. Journal of
  Financial Economics 98, 113--144.

\bibitem[{A\"it-Sahalia and Yu(2006)}]{ai2006saddlepoint}
A\"it-Sahalia, Y., Yu, J., 2006. Saddlepoint approximations for continuous-time
  {M}arkov processes. Journal of Econometrics 134, 507--551.

\bibitem[{Amin(1993)}]{Amin:1993a}
Amin, K.~I., 1993. Jump diffusion option valuation in discrete time. The
  Journal of Finance 48, 1833--1863.

\bibitem[{Andersen and Broadie(2004)}]{andersen2004primal}
Andersen, L., Broadie, M., 2004. Primal-dual simulation algorithm for pricing
  multidimensional {A}merican options. Management Science 50, 1222--1234.

\bibitem[{Andersen et~al.(2015)Andersen, Fusari, and
  Todorov}]{andersen2015risk}
Andersen, T.~G., Fusari, N., Todorov, V., 2015. The risk premia embedded in
  index options. Journal of Financial Economics 117, 558--584.

\bibitem[{Andersen et~al.(2016)Andersen, Fusari, and Todorov}]{Andersen2016aa}
Andersen, T.~G., Fusari, N., Todorov, V., 2016. Short-term market risk implied
  by weekly options. Journal of Finance \!\!, forthcoming.

\bibitem[{Andricopoulos et~al.(2003)Andricopoulos, Widdicks, Duck, and
  Newton}]{Andricopoulos2003447}
Andricopoulos, A.~D., Widdicks, M., Duck, P.~W., Newton, D.~P., 2003. Universal
  option valuation using quadrature methods. Journal of Financial Economics 67,
  447 -- 471.

\bibitem[{Andricopoulos et~al.(2007)Andricopoulos, Widdicks, Newton, and
  Duck}]{andricopoulos2007extending}
Andricopoulos, A.~D., Widdicks, M., Newton, D.~P., Duck, P.~W., 2007. Extending
  quadrature methods to value multi-asset and complex path dependent options.
  Journal of Financial Economics 83, 471--499.

\bibitem[{Bajgrowicz et~al.(2015)Bajgrowicz, Scaillet, and
  Treccani}]{bajgrowicz2016jumps}
Bajgrowicz, P., Scaillet, O., Treccani, A., 2015. Jumps in high-frequency data:
  Spurious detections, dynamics, and news. Management Science 62, 2198--2217.

\bibitem[{Bakshi et~al.(1997)Bakshi, Cao, and Chen}]{bakshi1997empirical}
Bakshi, G., Cao, C., Chen, Z., 1997. Empirical performance of alternative
  option pricing models. The Journal of Finance 52, 2003--2049.

\bibitem[{Bakshi et~al.(2003)Bakshi, Kapadia, and Madan}]{bakshi2003stock}
Bakshi, G., Kapadia, N., Madan, D., 2003. Stock return characteristics, skew
  laws, and the differential pricing of individual equity options. Review of
  Financial Studies 16, 101--143.

\bibitem[{Barone-Adesi and Whaley(1987)}]{Barone-Adesi:1987aa}
Barone-Adesi, G., Whaley, R.~E., 1987. Efficient analytic approximation of
  {A}merican option values. The Journal of Finance 42, 301--320.

\bibitem[{Barraclough and Whaley(2012)}]{barraclough2012early}
Barraclough, K., Whaley, R.~E., 2012. Early exercise of put options on stocks.
  The Journal of Finance 67, 1423--1456.

\bibitem[{Bates(1996)}]{Bates1996}
Bates, D.~S., 1996. Jumps and stochastic volatility: Exchange rate processes
  implicit in deutsche mark options. Review of Financial Studies 9, 69--107.

\bibitem[{Battauz et~al.(2014)Battauz, De~Donno, and Sbuelz}]{Battauz2014real}
Battauz, A., De~Donno, M., Sbuelz, A., 2014. Real options and {A}merican
  derivatives: The double continuation region. Management Science 61,
  1094--1107.

\bibitem[{Ben-Ameur et~al.(2009)Ben-Ameur, Breton, and
  Martinez}]{ben2009dynamic}
Ben-Ameur, H., Breton, M., Martinez, J.-M., 2009. Dynamic programming approach
  for valuing options in the {GARCH} model. Management Science 55, 252--266.

\bibitem[{Bollen and Whaley(2004)}]{bollen2004does}
Bollen, N. P.~B., Whaley, R.~E., 2004. Does net buying pressure affect the
  shape of implied volatility functions? The Journal of Finance 59, 711--753.

\bibitem[{Brennan and Schwartz(1977)}]{brennan1977valuation}
Brennan, M.~J., Schwartz, E.~S., 1977. The valuation of {A}merican put options.
  The Journal of Finance 32, 449--462.

\bibitem[{Broadie et~al.(2007)Broadie, Chernov, and
  Johannes}]{broadie2007model}
Broadie, M., Chernov, M., Johannes, M., 2007. Model specification and risk
  premia: Evidence from futures options. The Journal of Finance 62, 1453--1490.

\bibitem[{Broadie et~al.(2009)Broadie, Chernov, and Johannes}]{Broadie2009}
Broadie, M., Chernov, M., Johannes, M., 2009. Understanding index option
  returns. Review of Financial Studies 22, 4493--4529.

\bibitem[{Broadie and Detemple(1996)}]{Broadie:1996aa}
Broadie, M., Detemple, J., 1996. American option valuation: New bounds,
  approximations, and a comparison of existing methods. Review of Financial
  Studies 9, 1211--1250.

\bibitem[{Broadie and Glasserman(1997)}]{broadie1997pricing}
Broadie, M., Glasserman, P., 1997. Pricing {A}merican-style securities using
  simulation. Journal of Economic Dynamics and Control 21, 1323--1352.

\bibitem[{Bunch and Johnson(2000)}]{Bunch:2000aa}
Bunch, D.~S., Johnson, H., 2000. The {A}merican put option and its critical
  stock price. The Journal of Finance 55, 2333--2356.

\bibitem[{Carr et~al.(2003)Carr, Geman, Madan, and Yor}]{carr2003stochastic}
Carr, P., Geman, H., Madan, D.~B., Yor, M., 2003. Stochastic volatility for
  {L}\'evy processes. Mathematical Finance 13, 345--382.

\bibitem[{Carr et~al.(1992)Carr, Jarrow, and Myneni}]{carr1992alternative}
Carr, P., Jarrow, R., Myneni, R., 1992. Alternative characterizations of
  american put options. Mathematical Finance 2, 87--106.

\bibitem[{Chen et~al.(2014)Chen, H\"ark\"onen, and Newton}]{chen2014advancing}
Chen, D., H\"ark\"onen, H.~J., Newton, D.~P., 2014. Advancing the universality
  of quadrature methods to any underlying process for option pricing. Journal
  of Financial Economics 114, 600--612.

\bibitem[{Cheng and Scaillet(2007)}]{cheng2007linear}
Cheng, P., Scaillet, O., 2007. Linear-quadratic jump-diffusion modeling.
  Mathematical Finance 17, 575--598.

\bibitem[{Chiarella et~al.(1999)Chiarella, El-Hassan, and
  Kucera}]{chiarella1999evaluation}
Chiarella, C., El-Hassan, N., Kucera, A., 1999. Evaluation of {A}merican option
  prices in a path integral framework using {F}ourier--{H}ermite series
  expansions. Journal of Economic Dynamics and Control 23, 1387--1424.

\bibitem[{Chiarella and Ziogas(2005)}]{chiarella2005pricing}
Chiarella, C., Ziogas, A., 2005. Pricing {A}merican options on jump-diffusion
  processes using {F}ourier {H}ermite series expansions. Quantitative Finance
  Research Centre Research Paper .

\bibitem[{Christoffersen et~al.(2015{\natexlab{a}})Christoffersen, Fournier,
  and Jacobs}]{christoffersen2015factor}
Christoffersen, P., Fournier, M., Jacobs, K., 2015{\natexlab{a}}. The factor
  structure in equity options. Rotman School of Management Working Paper.

\bibitem[{Christoffersen et~al.(2015{\natexlab{b}})Christoffersen, Goyenko,
  Jacobs, and Karoui}]{christoffersen2014illiquidity}
Christoffersen, P., Goyenko, R., Jacobs, K., Karoui, M., 2015{\natexlab{b}}.
  Illiquidity premia in the equity options market. Available at SSRN 1784868 .

\bibitem[{Christoffersen and Jacobs(2004)}]{christoffersen2004importance}
Christoffersen, P., Jacobs, K., 2004. The importance of the loss function in
  option valuation. Journal of Financial Economics 72, 291--318.

\bibitem[{Clarke and Parrott(1999)}]{clarke1999multigrid}
Clarke, N., Parrott, K., 1999. Multigrid for {A}merican option pricing with
  stochastic volatility. Applied Mathematical Finance 6, 177--195.

\bibitem[{Cox et~al.(1979)Cox, Ross, and Rubinstein}]{cox1979option}
Cox, J.~C., Ross, S.~A., Rubinstein, M., 1979. Option pricing: A simplified
  approach. Journal of Financial Economics 7, 229--263.

\bibitem[{Darolles and Laurent(2000)}]{darolles2000approximating}
Darolles, S., Laurent, J.-P., 2000. Approximating payoffs and pricing formulas.
  Journal of Economic Dynamics and Control 24, 1721--1746.

\bibitem[{Desai et~al.(2012)Desai, Farias, and Moallemi}]{desai2012pathwise}
Desai, V.~V., Farias, V.~F., Moallemi, C.~C., 2012. Pathwise optimization for
  optimal stopping problems. Management Science 58, 2292--2308.

\bibitem[{Detemple(2005)}]{detemple2005american}
Detemple, J., 2005. {A}merican-style derivatives: {V}aluation and computation.
  Financial mathematics series, Chapman \& Hall/CRC Press, Boca Raton, Florida,
  USA.

\bibitem[{Detemple and Tian(2002)}]{detemple2002valuation}
Detemple, J., Tian, W., 2002. The valuation of american options for a class of
  diffusion processes. Management Science 48, 917--937.

\bibitem[{Detemple et~al.(2003)Detemple, Garcia, and
  Rindisbacher}]{Detemple2003monte}
Detemple, J.~B., Garcia, R., Rindisbacher, M., 2003. A {M}onte {C}arlo method
  for optimal portfolios. The Journal of Finance 58, 401--446.

\bibitem[{d'Halluin et~al.(2005)d'Halluin, Forsyth, and Vetzal}]{d2005robust}
d'Halluin, Y., Forsyth, P.~A., Vetzal, K.~R., 2005. Robust numerical methods
  for contingent claims under jump diffusion processes. IMA Journal of
  Numerical Analysis 25, 87--112.

\bibitem[{Duffie et~al.(2000)Duffie, Pan, and Singleton}]{Duffie:2000a}
Duffie, D., Pan, J., Singleton, K., 2000. Transform analysis and asset pricing
  for affine jump-diffusions. Econometrica 68, 1343--1376.

\bibitem[{Eraker et~al.(2003)Eraker, Johannes, and Polson}]{Eraker2003}
Eraker, B., Johannes, M., Polson, N., 2003. The impact of jumps in volatility
  and returns. The Journal of Finance 58, 1269--1300.

\bibitem[{Fang and Oosterlee(2011)}]{fang2011fourier}
Fang, F., Oosterlee, C.~W., 2011. A fourier-based valuation method for bermudan
  and barrier options under heston's model. SIAM Journal on Financial
  Mathematics 2, 439--463.

\bibitem[{Garman(1985)}]{Garman:1985aa}
Garman, M.~B., 1985. Towards a semigroup pricing theory. The Journal of Finance
  40, 847--861.

\bibitem[{Geske(1979)}]{geske1979note}
Geske, R., 1979. A note on an analytical valuation formula for unprotected
  {A}merican call options on stocks with known dividends. Journal of Financial
  Economics 7, 375--380.

\bibitem[{Geske and Johnson(1984)}]{Geske:1984aa}
Geske, R., Johnson, H.~E., 1984. The american put option valued analytically.
  The Journal of Finance 39, 1511--1524.

\bibitem[{Griebsch(2013)}]{griebsch2013evaluation}
Griebsch, S.~A., 2013. The evaluation of {E}uropean compound option prices
  under stochastic volatility using {F}ourier transform techniques. Review of
  Derivatives Research 16, 135--165.

\bibitem[{Guay and Schwenkler(2016)}]{Guay2016aa}
Guay, F., Schwenkler, G., 2016. Efficient parameter estimation for multivariate
  jump-diffusions. Unpublished manuscript, boston University.

\bibitem[{Hagan et~al.(2002)Hagan, Kumar, Lesniewski, and
  Woodward}]{hagan2002managing}
Hagan, P.~S., Kumar, D., Lesniewski, A.~S., Woodward, D.~E., 2002. Managing
  smile risk. WILMOTT Magazine pp. 84--108.

\bibitem[{Hansen and Scheinkman(2009)}]{Hansen:2009aa}
Hansen, L.~P., Scheinkman, J.~A., 2009. Long-term risk: An operator approach.
  Econometrica 77, 177--234.

\bibitem[{Haug et~al.(2003)Haug, Haug, and Lewis}]{haug2003back}
Haug, E.~G., Haug, J., Lewis, A., 2003. Back to basics: a new approach to the
  discrete dividend problem. Wilmott magazine 9, 37--47.

\bibitem[{Haugh and Kogan(2004)}]{haugh2004pricing}
Haugh, M.~B., Kogan, L., 2004. Pricing {A}merican options: a duality approach.
  Operations Research 52, 258--270.

\bibitem[{Heston(1993)}]{heston1993closed}
Heston, S.~L., 1993. A closed-form solution for options with stochastic
  volatility with applications to bond and currency options. Review of
  Financial Studies 6, 327--343.

\bibitem[{Hodder and Jackwerth(2007)}]{hodder2007}
Hodder, J.~E., Jackwerth, J.~C., 2007. Incentive contracts and hedge fund
  management. Journal of Financial and Quantitative Analysis 42, 811--826.

\bibitem[{Huang et~al.(1996)Huang, Subrahmanyam, and Yu}]{huang1996pricing}
Huang, J.-z., Subrahmanyam, M.~G., Yu, G.~G., 1996. Pricing and hedging
  {A}merican options: a recursive investigation method. Review of Financial
  Studies 9, 277--300.

\bibitem[{Hull and White(1987)}]{hull1987pricing}
Hull, J., White, A., 1987. The pricing of options on assets with stochastic
  volatilities. The Journal of Finance 42, 281--300.

\bibitem[{Ib{\'a}{\~n}ez(2003)}]{ibanez2003robust}
Ib{\'a}{\~n}ez, A., 2003. Robust pricing of the american put option: A note on
  {R}ichardson extrapolation and the early exercise premium. Management Science
  49, 1210--1228.

\bibitem[{Ikonen and Toivanen(2008)}]{ikonen2008efficient}
Ikonen, S., Toivanen, J., 2008. Efficient numerical methods for pricing
  {A}merican options under stochastic volatility. Numerical Methods for Partial
  Differential Equations 24, 104--126.

\bibitem[{in't Hout and Foulon(2010)}]{in2010adi}
in't Hout, K.~J., Foulon, S., 2010. {ADI} finite difference schemes for option
  pricing in the {H}eston model with correlation. International Journal of
  Numerical Analysis and Modeling 7, 303--320.

\bibitem[{Jackson et~al.(2008)Jackson, Jaimungal, and
  Surkov}]{jackson2008fourier}
Jackson, K.~R., Jaimungal, S., Surkov, V., 2008. Fourier space time-stepping
  for option pricing with {L}\'evy models. Journal of Computational Finance 12,
  1--29.

\bibitem[{Jamshidian(1992)}]{jamshidian1992analysis}
Jamshidian, F., 1992. An analysis of {A}merican options. Review of Futures
  Markets 11, 72--80.

\bibitem[{Jensen and Pedersen(2016)}]{jensen2016early}
Jensen, M.~V., Pedersen, L.~H., 2016. Early option exercise: Never say never.
  Journal of Financial Economics 121, 278--299.

\bibitem[{Ju(1998)}]{Ju:1998aa}
Ju, N., 1998. Pricing an {A}merican option by approximating its early exercise
  boundary as a multipiece exponential function. Review of Financial Studies
  11, 627--646.

\bibitem[{Kelly et~al.(2016{\natexlab{a}})Kelly, P\'astor, and
  Veronesi}]{kelly2016price}
Kelly, B., P\'astor, L., Veronesi, P., 2016{\natexlab{a}}. The price of
  political uncertainty: Theory and evidence from the option market. The
  Journal of Finance 71, 2417--2480.

\bibitem[{Kelly et~al.(2016{\natexlab{b}})Kelly, Lustig, and
  Van~Nieuwerburgh}]{Kelly2016a}
Kelly, B.~T., Lustig, H., Van~Nieuwerburgh, S., 2016{\natexlab{b}}.
  Too-systemic-to-fail: What option markets imply about sector-wide government
  guarantees. American Economic Review 106, 1278--1319.

\bibitem[{Kim(1990)}]{Kim:1990aa}
Kim, I., 1990. The analytic valuation of {A}merican options. Review of
  Financial Studies 3, 547--572.

\bibitem[{Kristensen and Mele(2011)}]{kristensen2011adding}
Kristensen, D., Mele, A., 2011. Adding and subtracting {B}lack-{S}choles: a new
  approach to approximating derivative prices in continuous-time models.
  Journal of Financial Ecomonics 102, 390--415.

\bibitem[{Lacoste(1996)}]{lacoste1996wiener}
Lacoste, V., 1996. Wiener chaos: a new approach to option hedging. Mathematical
  Finance 6, 197--213.

\bibitem[{Lamberton and Villeneuve(2003)}]{Lamberton:2003aa}
Lamberton, D., Villeneuve, S., 2003. Critical price near maturity for an
  american option on a dividend-paying stock. The Annals of Applied Probability
  13, 800--815.

\bibitem[{Leippold and Wu(2002)}]{Leippold:2002a}
Leippold, M., Wu, L., 2002. Asset pricing under the quadratic class. The
  Journal of Financial and Quantitative Analysis 37, 271--295.

\bibitem[{Li(2013)}]{li2013maximum}
Li, C., 2013. Maximum-likelihood estimation for diffusion processes via
  closed-form density expansions. The Annals of Statistics 41, 1350--1380.

\bibitem[{Linetsky(1997)}]{linetsky1997path}
Linetsky, V., 1997. The path integral approach to financial modeling and
  options pricing. Computational Economics 11, 129--163.

\bibitem[{Longstaff and Schwartz(2001)}]{longstaff2001valuing}
Longstaff, F.~A., Schwartz, E.~S., 2001. Valuing {A}merican options by
  simulation: a simple least-squares approach. Review of Financial Studies 14,
  113--147.

\bibitem[{Lord et~al.(2008)Lord, Fang, Bervoets, and Oosterlee}]{lord2008fast}
Lord, R., Fang, F., Bervoets, F., Oosterlee, C.~W., 2008. A fast and accurate
  {FFT}-based method for pricing early-exercise options under {L}{\'e}vy
  processes. SIAM Journal on Scientific Computing 30, 1678--1705.

\bibitem[{Madan et~al.(1998)Madan, Carr, and Chang}]{madan1998variance}
Madan, D.~B., Carr, P.~P., Chang, E.~C., 1998. The variance gamma process and
  option pricing. European Finance Review 2, 79--105.

\bibitem[{Madan and Milne(1994)}]{madan1994contingent}
Madan, D.~B., Milne, F., 1994. Contingent claims valued and hedged by pricing
  and investing in a basis. Mathematical Finance 4, 223--245.

\bibitem[{Medvedev and Scaillet(2010)}]{medvedev2010pricing}
Medvedev, A., Scaillet, O., 2010. Pricing {A}merican options under stochastic
  volatility and stochastic interest rates. Journal of Financial Economics 98,
  145--159.

\bibitem[{Merton(1976)}]{merton1976option}
Merton, R.~C., 1976. Option pricing when underlying stock returns are
  discontinuous. Journal of Financial Economics 3, 125--144.

\bibitem[{Miller and Rock(1985)}]{Miller1985zr}
Miller, M.~H., Rock, K., 1985. Dividend policy under asymmetric information.
  The Journal of Finance 40, 1031--1051.

\bibitem[{O'Sullivan(2005)}]{o2005path}
O'Sullivan, C., 2005. Path dependent option pricing under {L}{\'e}vy processes.
  EFA 2005 Moscow Meetings Paper .

\bibitem[{Pool et~al.(2008)Pool, Stoll, and Whaley}]{pool2008failure}
Pool, V.~K., Stoll, H.~R., Whaley, R.~E., 2008. Failure to exercise call
  options: An anomaly and a trading game. Journal of Financial Markets 11,
  1--35.

\bibitem[{Rogers(2002)}]{rogers2002monte}
Rogers, L. C.~G., 2002. Monte {C}arlo valuation of {A}merican options.
  Mathematical Finance 12, 271--286.

\bibitem[{Roll(1977)}]{roll1977analytic}
Roll, R., 1977. An analytic valuation formula for unprotected {A}merican call
  options on stocks with known dividends. Journal of Financial Economics 5,
  251--258.

\bibitem[{Rubinstein(2000)}]{rubinstein2000relation}
Rubinstein, M., 2000. On the relation between binomial and trinomial option
  pricing models. Journal of {D}erivatives 8, 47--50.

\bibitem[{Simonato(2016)}]{simonato2016simplified}
Simonato, J.-G., 2016. A simplified quadrature approach for computing
  {B}ermudan option prices. International Review of Finance 16, 647--658.

\bibitem[{Stanton(1995)}]{stanton1995rational}
Stanton, R., 1995. Rational prepayment and the valuation of mortgage-backed
  securities. Review of Financial Studies 8, 677--708.

\bibitem[{Sullivan(2000)}]{sullivan2000valuing}
Sullivan, M.~A., 2000. Valuing {A}merican put options using gaussian
  quadrature. Review of Financial Studies 13, 75--94.

\bibitem[{Sweldens(1996)}]{Sweldens1996lifting}
Sweldens, W., 1996. The lifting scheme: A custom-design construction of
  biorthogonal wavelets. Applied and Computational Harmonic Analysis 3,
  186--200.

\bibitem[{Sweldens(1998)}]{Sweldens1998lifting}
Sweldens, W., 1998. The lifting scheme: A construction of second generation
  wavelets. SIAM Journal on Mathematical Analysis 29, 511--546.

\bibitem[{Vellekoop and Nieuwenhuis(2006)}]{vellekoop2006efficient}
Vellekoop, M.~H., Nieuwenhuis, J.~W., 2006. Efficient pricing of derivatives on
  assets with discrete dividends. Applied Mathematical Finance 13, 265--284.

\bibitem[{West(2005)}]{west2005calibration}
West, G., 2005. Calibration of the {SABR} model in illiquid markets. Applied
  Mathematical Finance 12, 371--385.

\bibitem[{Whaley(1981)}]{whaley1981valuation}
Whaley, R.~E., 1981. On the valuation of {A}merican call options on stocks with
  known dividends. Journal of Financial Economics 9, 207--211.

\bibitem[{Xiu(2014)}]{xiu2014hermite}
Xiu, D., 2014. Hermite polynomial based expansion of {E}uropean option prices.
  Journal of Econometrics 179, 158--177.

\end{thebibliography}
\bibliographystyle{jfe}

% ----------------------------------------------------------------------------------
%      APPENDIXES
% ---------------------------------------------------------------------------------

%----------------------------------------------------------------------------------------------------------------------
%	Figures from now on
%----------------------------------------------------------------------------------------------------------------------

%------------------------------------------------------------------------------------------- Figure 1

% Projection step and recursive step with 1 conditioning value, BS case

\begin{sidewaysfigure}
	\begin{center}
	\begin{tikzpicture}[scale=0.9]
		\input{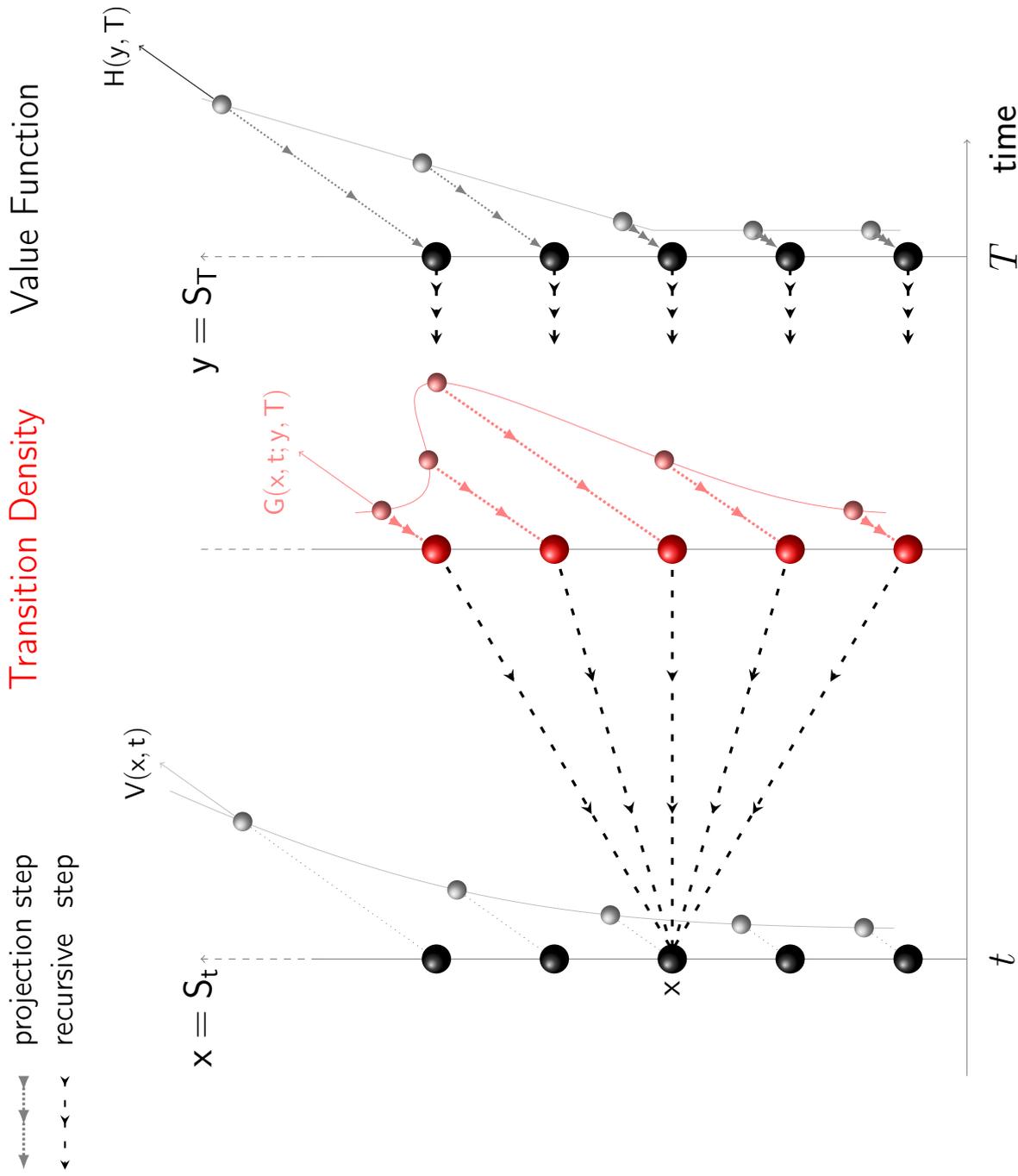}
	\end{tikzpicture}
	    \caption{\label{fig:projection-1d}Recursive Projection scheme in the Black-Scholes case is composed of two steps. First the projection step: the value function $H(y, T)$ at $T$ and state price density functions $G(x,t; y, T)$ are  sampled (\emph{dotted lines}). Second the recursive step: the sampled values are multiplied by the transition weights to obtain the value function at $t$  (\emph{dashed lines}), which in turn will be the input for the following step of the algorithm.}
	\end{center}
\end{sidewaysfigure}

%------------------------------------------------------------------------------------------- Figure 2

% Matrix form of the recursive projections, BS case

\begin{sidewaysfigure}
  \begin{center}
  \begin{tikzpicture}[xscale=2.5, yscale=3] %when scaling, height of parenthesis is the only that needs to be corrected manually in 1d-MatrixTransition.tex
    \input{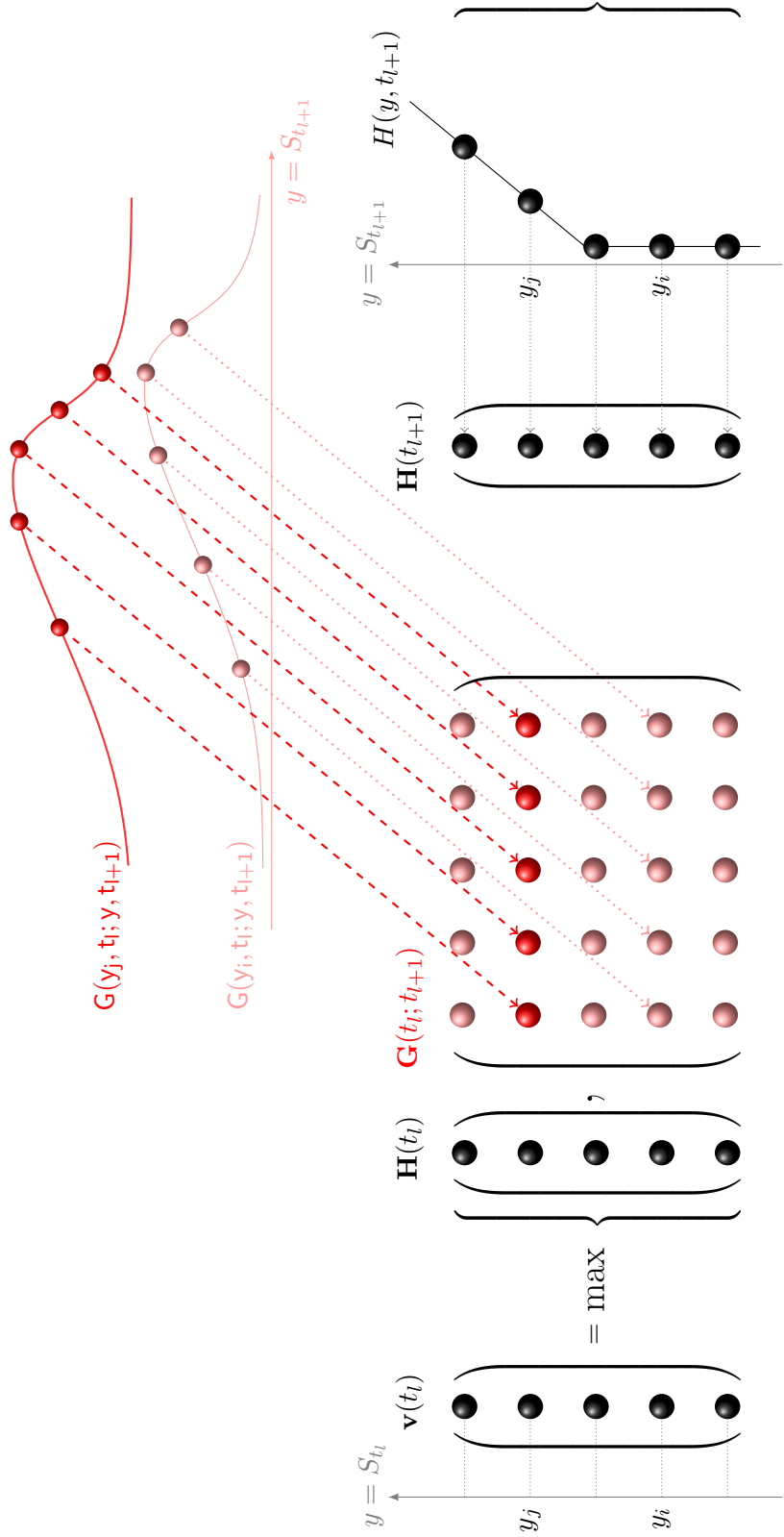}
  \end{tikzpicture}
  \caption{\label{fig:projection-1dmatrix}Recursive Projection scheme in the Black-Scholes case when grids are the same at each exercise date $t_{l}$. The rows of the $\mathbf{G}(t_{l}; t_{l+1})$ matrix are given by the values sampled from the transition densities $G(y_{i}, t_{l}; y_{j}, t_{l+1})$ for conditioning values $\{y_{1}, \dots, y_{N}\}$. In this way, the output of the vector times matrix multiplication contains the continuation values at the grid points $\{y_{1}, \dots, y_{N}\}$. Each continuation value at $t_{l}$ is compared with the intrinsic value $H(y_{i}, t_{l})$, to obtain the vector $\mathbf{v}(t_{l})$, whose elements are the approximation of $V(y,t_{l})$ at the same grid points $\{y_{1}, \dots, y_{N}\}$, and that is the input for the following recursive step.}
  \end{center}
\end{sidewaysfigure}

%------------------------------------------------------------------------------------------- Figure 3

% Recursive projections without and with dividends

\begin{figure}
	\begin{center}
	\Large{\bf Panel A}
	\end{center}
	\begin{tikzpicture}[scale=0.7]
		\input{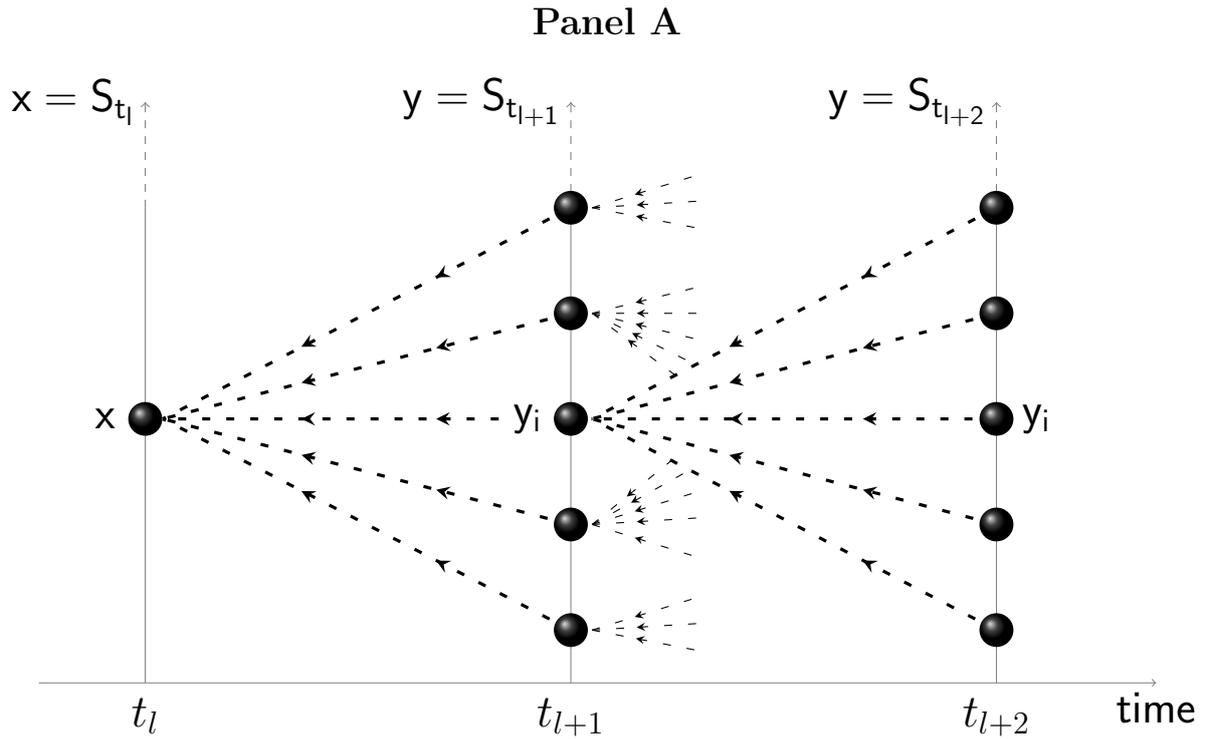}
	\end{tikzpicture}
	\bigskip
	\begin{center}
	\Large{\bf Panel B}
	\end{center}
	\begin{tikzpicture}[scale=.7] %
		\input{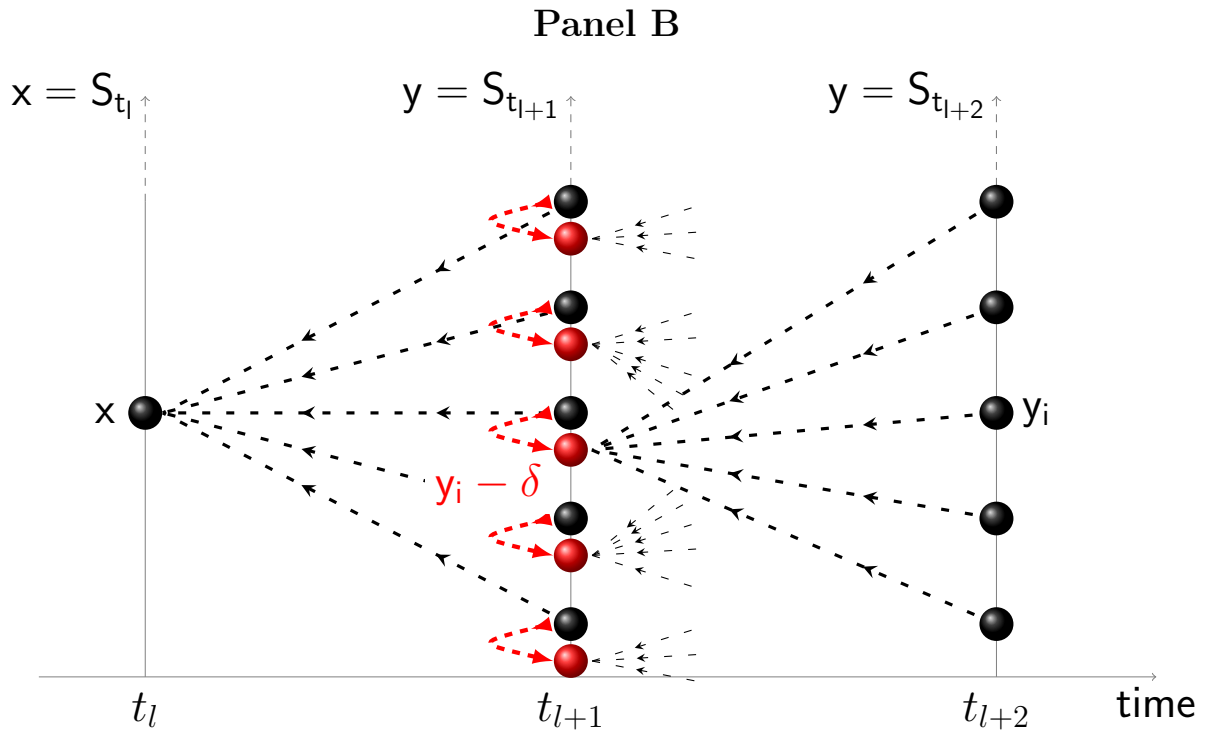}	%1d-transition_2step_dividend
	\end{tikzpicture}
	\caption{\label{fig:dividends}Recursive scheme without dividends (Panel A) and with discrete dividends (Panel B). In Panel A, at date $t=t_{l+1}$, the intrinsic value $H(y_{i}, t_{l+1})$ is compared with the continuation value $V(y_{i}, t_{l+1})$ computed at the same grid point $y_{i}$ (black ball). In Panel B, at the ex-dividend date $t_{h}=t_{l+1}$, the intrinsic value $H(y_{i}, t_{l+1})$ at the grid point $y_{i}$ (black ball) is compared with the continuation value $V(y_i -  d, t_{l+1})$ at $y_{i}- d$ (red ball).}
\end{figure}

%------------------------------------------------------------------------------------------- Figure 4

% Black-Scholes Call with Dividend, numerics

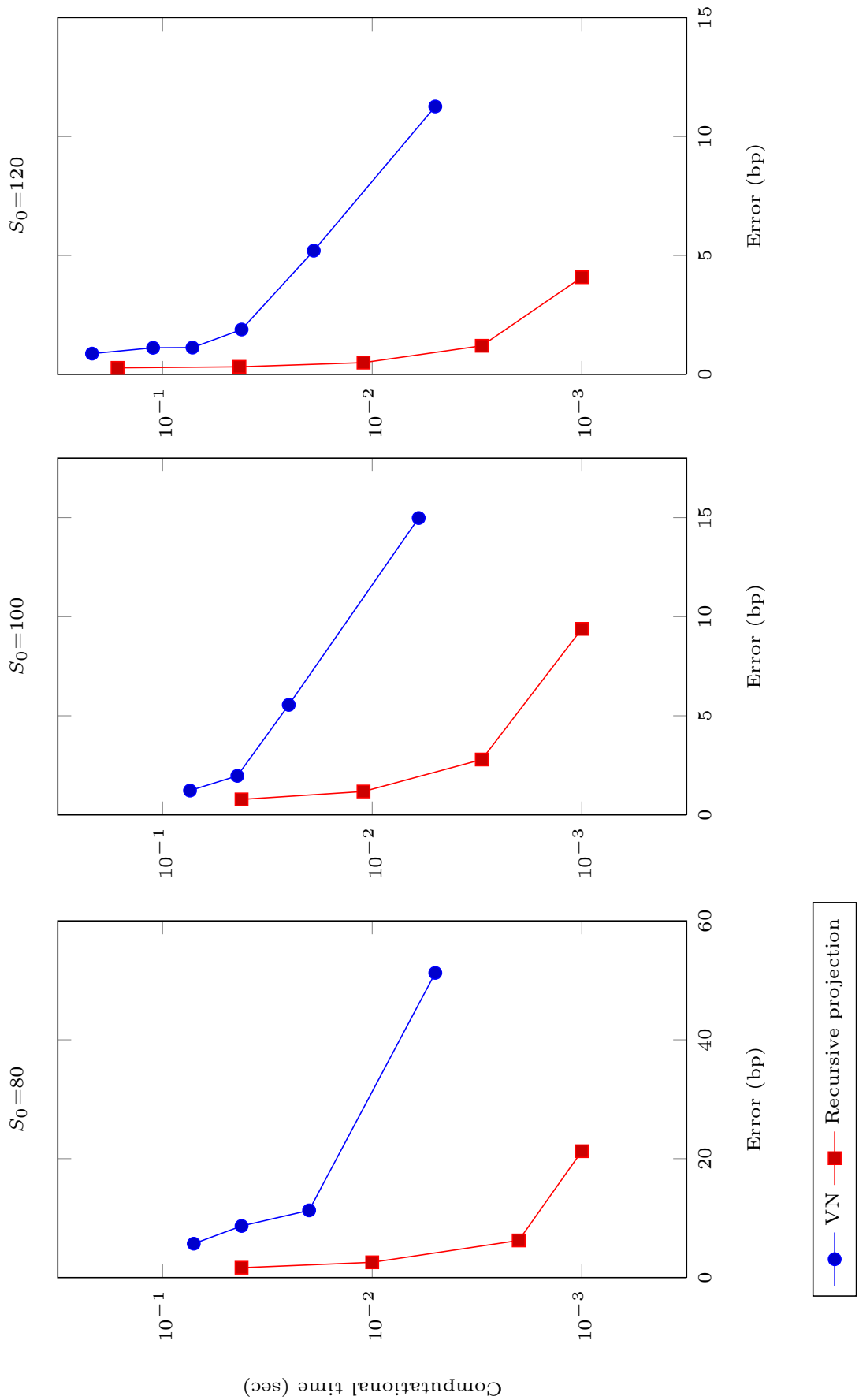
\begin{sidewaysfigure}
    \centering
   {		\pgfplotsset{domain=-1:1,
		legend style={
		at={(0.5,-0.2)}, anchor=north, legend columns=2, font=\tiny,
		},
		}
		\begin{tikzpicture}[baseline, scale=1.5]
		    \begin{axis}[
		        title = {$S_{0}$=80},
		        scale only axis,
                xmin=0,
                xmax=60,
                xlabel={Error (bp)},
                ymin=-3.5,
                ymax=-0.5,
                ytick={-3,-2,-1},
                yticklabels={$10^{-3}$,$10^{-2}$,$10^{-1}$},
		        font=\tiny,
		        xlabel=Error (bp),
		        ylabel= Computational time (sec),
		        width=4cm, height=7cm
		            ]
		    \addplot plot coordinates {
            (51.2491, -2.301029996)
            (11.3064, -1.698970004)
            (8.6846, -1.37675071)
            (5.7013, -1.148741651)
             };

		    \addplot plot coordinates {
            (21.261956, -3)
            (6.250285, -2.69897)
            (2.574474, -2)
            (1.664213, -1.37675071)
             };

		    \legend{VN\\Recursive projection\\}			\end{axis}
		\end{tikzpicture}
		\begin{tikzpicture}[baseline, scale=1.5]
		    \begin{axis}[
		        title = {$S_{0}$=100},
		        scale only axis,
                xmin=0,
                xmax=18,
                xlabel={Error (bp)},
                ymin=-3.5,
                ymax=-0.5,
                ytick={-3,-2,-1},
                yticklabels={$10^{-3}$,$10^{-2}$,$10^{-1}$},
		        font=\tiny,
		        xlabel=Error (bp),
		       % ylabel= Computational time (sec),
		        width=4cm, height=7cm,
		        legend style={legend columns=1}
		    ]
		
		    \addplot coordinates {
            (14.977, -2.22184875)
            (5.5498, -1.602059991)
            (1.9682, -1.356547324)
            (1.2252, -1.13076828)};
		    \addplot coordinates {
            (9.384926, -3)
            (2.793496, -2.522878745)
            (1.179349, -1.958607315)
            (0.779497, -1.37675071)};
		    \end{axis}
		\end{tikzpicture}
		\begin{tikzpicture}[baseline, scale=1.5]
		    \begin{axis}[
		        title = {$S_{0}$=120},
		        scale only axis,
                xmin=0,
                xmax=15,
                xlabel={Error (bp)},
                ymin=-3.5,
                ymax=-0.5,
                ytick={-3,-2,-1},
                yticklabels={$10^{-3}$,$10^{-2}$,$10^{-1}$},
		        font=\tiny,
		        xlabel=Error (bp),
		       % ylabel= Computational time (sec),
		        width=4cm, height=7cm,
		        legend style={legend columns=1}
		    ]
		
		    \addplot coordinates {
            (11.2649, -2.301029996)
            (5.1999, -1.721246399)
            (1.8857, -1.37675071)
            (1.1255, -1.142667504)
            (1.1173, -0.954677021)
            (0.8728, -0.663540266)};
		    \addplot coordinates {
            (4.082874, -3)
            (1.201872, -2.522878745)
            (0.49386, -1.958607315)
            (0.317633, -1.366531544)
            (0.273384, -0.785156152)};
		    \end{axis}
		\end{tikzpicture} }
    \caption{Comparison between the approximated binomial tree method (VN) of Vellekoop and Nieuwenhuis (2009) and the recursive projections on an American call option written on a dividend-paying stock in the Black-Scholes case. The option has a maturity of 3 years and a dividend $d=2$ is paid at the end of each year. Other parameters set equal to $r=0.05, \sigma=0.2, K=100$.}
    \label{fig:Vellekoop}
\end{sidewaysfigure}

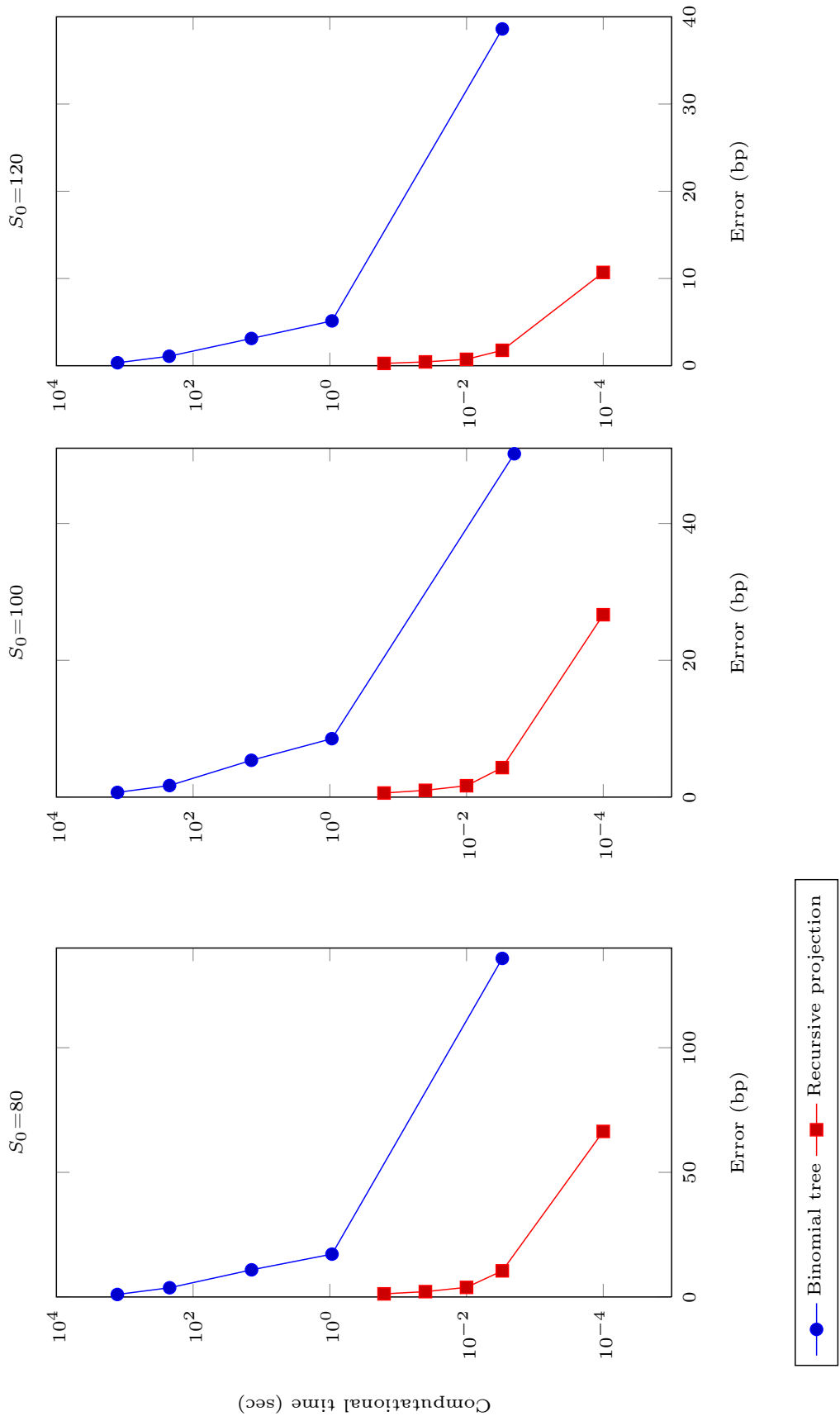
\begin{sidewaysfigure}
    \centering
 {		\pgfplotsset{domain=-1:1,
		legend style={
		at={(0.5,-0.2)}, anchor=north, legend columns=2, font=\tiny,
		},
		}
		\begin{tikzpicture}[baseline, scale=1.4]
		    \begin{axis}[
		        title = {$S_{0}$=80},
		        scale only axis,
                xmin=0,
                xmax=140,
                xlabel={Error (bp)},
                ymin=-5,
                ymax=4,
                ytick={-4,-2,0,2,4},
                yticklabels={$10^{-4}$,$10^{-2}$,$10^0$,$10^{2}$,$10^4$},
		        font=\tiny,
		        xlabel=Error (bp),
		        ylabel= Computational time (sec),
		        width=4cm, height=7cm
		            ]
		    \addplot plot coordinates {
            (135.834, -2.522878745)
            (17.181, -0.031517051)
            (10.883, 1.14500984)
            (3.696, 2.343770847)
            (1, 3.108838232)
             };

		    \addplot plot coordinates {
            (66.422510, -4)
            (10.485074, -2.522878745)
            (3.843822, -2)
            (2.137740, -1.397940009)
            (1.244239, -0.793174124)
             };

		    \legend{Binomial tree\\Recursive projection\\}		
		
		\end{axis}
		\end{tikzpicture}
		\begin{tikzpicture}[baseline, scale=1.4]
		    \begin{axis}[
		        title = {$S_{0}$=100},
		        scale only axis,
                xmin=0,
                xmax=51,
                xlabel={Error (bp)},
                ymin=-5,
                ymax=4,
                ytick={-4,-2,0,2,4},
                yticklabels={$10^{-4}$,$10^{-2}$,$10^0$,$10^{2}$,$10^4$},
		        font=\tiny,
		        xlabel=Error (bp),
		       % ylabel= Computational time (sec),
		        width=4cm, height=7cm,
		        legend style={legend columns=1}
		    ]
		
		    \addplot coordinates {
                (50.186, -2.698970004)
                (8.544, -0.029653124)
                (5.385, 1.147738141)
                (1.692, 2.345307062)
                (0.691, 3.107624387)};
		    \addplot coordinates {
                (26.662105, -4)
                (4.317101, -2.522878745)
                (1.665226, -2)
                (0.995119, -1.397940009)
                (0.599089, -0.793174124)};
		    \end{axis}
		\end{tikzpicture}
		\begin{tikzpicture}[baseline, scale=1.4]
		    \begin{axis}[
		        title = {$S_{0}$=120},
		        scale only axis,
                xmin=0,
                xmax=40,
                xlabel={Error (bp)},
                ymin=-5,
                ymax=4,
                ytick={-4,-2,0,2,4},
                yticklabels={$10^{-4}$,$10^{-2}$,$10^0$,$10^{2}$,$10^4$},
		        font=\tiny,
		        xlabel=Error (bp),
		       % ylabel= Computational time (sec),
		        width=4cm, height=7cm,
		        legend style={legend columns=1}
		    ]
		
		    \addplot coordinates {
                (38.610, -2.522878745)
                (5.138, -0.031517051)
                (3.125, 1.146995757)
                (1.09, 2.350348825)
                (0.34, 3.106145663)};
		    \addplot coordinates {
                (10.698617, -4)
                (1.768145, -2.522878745)
                (0.735743, -2)
                (0.439623, -1.397940009)
                (0.265091, -0.793174124)};
		    \end{axis}
		\end{tikzpicture} }
 \caption{Comparison between the binomial tree and the recursive projection method on an American call option written on a dividend-paying stock in the Black-Scholes case.
The option has a maturity of 3 years and a dividend $d=2$ is paid at the end of each year.
 Other parameters are set equal to $r=0.05$, $\sigma=0.2$, $K=100$.
 \label{Figure:BS_amcall_div}}
 \end{sidewaysfigure}

%------------------------------------------------------------------------------------------- Figure 5

% Figure Black-Scholes Bermudan Digital Call, numerics

\begin{sidewaysfigure}
    \centering
 {		\pgfplotsset{domain=-1:1,
		legend style={
		at={(0.5,-0.2)}, anchor=north, legend columns=2, font=\tiny,
		},
		}
		\begin{tikzpicture}[baseline, scale=1.5]
		    \begin{axis}[
		        title = {$S_{0}$=80},
		        scale only axis,
                xmin=0,
                xmax=60,
                ymin=-2,
                ymax=4,
                ytick={-1,0,1,2,3},
                yticklabels={$10^{-1}$,$10^{0}$,$10^1$,$10^{2}$,$10^3$},
		        font=\tiny,
		        xlabel=Error (bp),
		        ylabel= Computational time (sec),
		        width=4cm, height=7cm
		            ]
		    \addplot plot coordinates {
            (16.619579, -0.293282218)
            (6.5143, 1.430800425)
            (9.9064, 2.099708258)
            (9.5363, 2.478081431)
            (9.5795, 2.744292201)
             };

		    \addplot plot coordinates {
            (51.838244, -1.366531544)
            (15.015252, -0.966576245)
            (5.973541, -0.632644079)
            (3.724238, -0.289036881)
            (3.162762, 0.067814511)
             };

		    \legend{Binomial tree\\Recursive projection\\}		
		
		\end{axis}
		\end{tikzpicture}
		\begin{tikzpicture}[baseline, scale=1.5]
		    \begin{axis}[
		        title = {$S_{0}$=100},
		        scale only axis,
                xmin=0,
                xmax=50,
                xlabel={Error (bp)},
                ymin=-2,
                ymax=4,
                ytick={-1,0,1,2,3},
                yticklabels={$10^{-1}$,$10^{0}$,$10^1$,$10^{2}$,$10^3$},
		        font=\tiny,
		        xlabel=Error (bp),
		       % ylabel= Computational time (sec),
		        width=4cm, height=7cm,
		        legend style={legend columns=1}
		    ]
		
		    \addplot coordinates {
                    (44.4585, -0.297569464)
                    (19.1589, 1.436130829)
                    (14.5978, 2.097271196)
                    (10.6728, 2.487760371)
                    (10.5761, 2.743393728)};
		    \addplot coordinates {
                    (17.844330, -1.366531544)
                    (3.642263, -0.966576245)
                    (0.309681, -0.632644079)
                    (0.510702, -0.289036881)
                    (0.714960, 0.067814511)};
		    \end{axis}
		\end{tikzpicture}
		\begin{tikzpicture}[baseline, scale=1.5]
		    \begin{axis}[
		        title = {$S_{0}$=120},
		        scale only axis,
                xmin=0,
                xmax=2,
                xlabel={Error (bp)},
                ymin=-2,
                ymax=4,
                ytick={-1,0,1,2,3},
                yticklabels={$10^{-1}$,$10^{0}$,$10^1$,$10^{2}$,$10^3$},
		        font=\tiny,
		        xlabel=Error (bp),
		       % ylabel= Computational time (sec),
		        width=4cm, height=7cm,
		        legend style={legend columns=1}
		    ]
		
		    \addplot coordinates {
                    (0.7399, -0.301029996)
                    (0.9424, 1.437718861)
                    (0.1287, 2.097062858)
                    (0.5679, 2.554635717)
                    (0.1223, 2.743381966)};
		    \addplot coordinates {
                (0.925933, -1.366531544)
                (0.264783, -0.966576245)
                (0.106413, -0.632644079)
                (0.067195, -0.289036881)
                (0.057415, 0.067814511)};
		    \end{axis}
		\end{tikzpicture} }
\caption{Comparison between the binomial tree and the recursive projection method on a Bermudan digital call option in the Black-Scholes case.
The option has a maturity of 10 years and can be exercised
 4 times per year. Other parameters are set equal to $r=0.1$, $\sigma=0.2$, and $K=100$.  \label{Figure:BS_berm_digitalcall}}
 \end{sidewaysfigure}
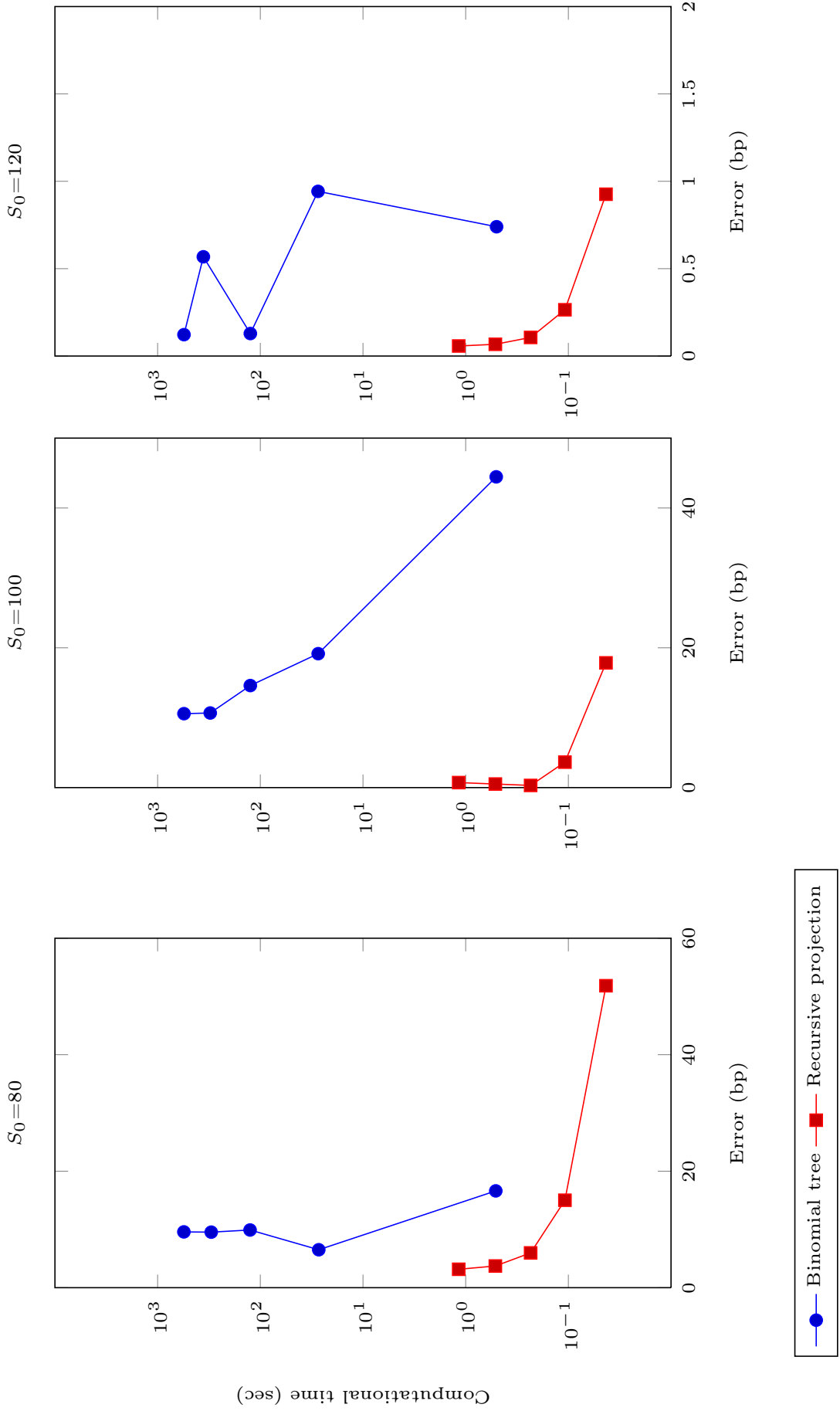

%------------------------------------------------------------------------------------------- Figure 6

% Projection step and recursive step with 1 conditioning value, 2D/Heston case

\begin{sidewaysfigure}
	\includegraphics[scale=0.9]{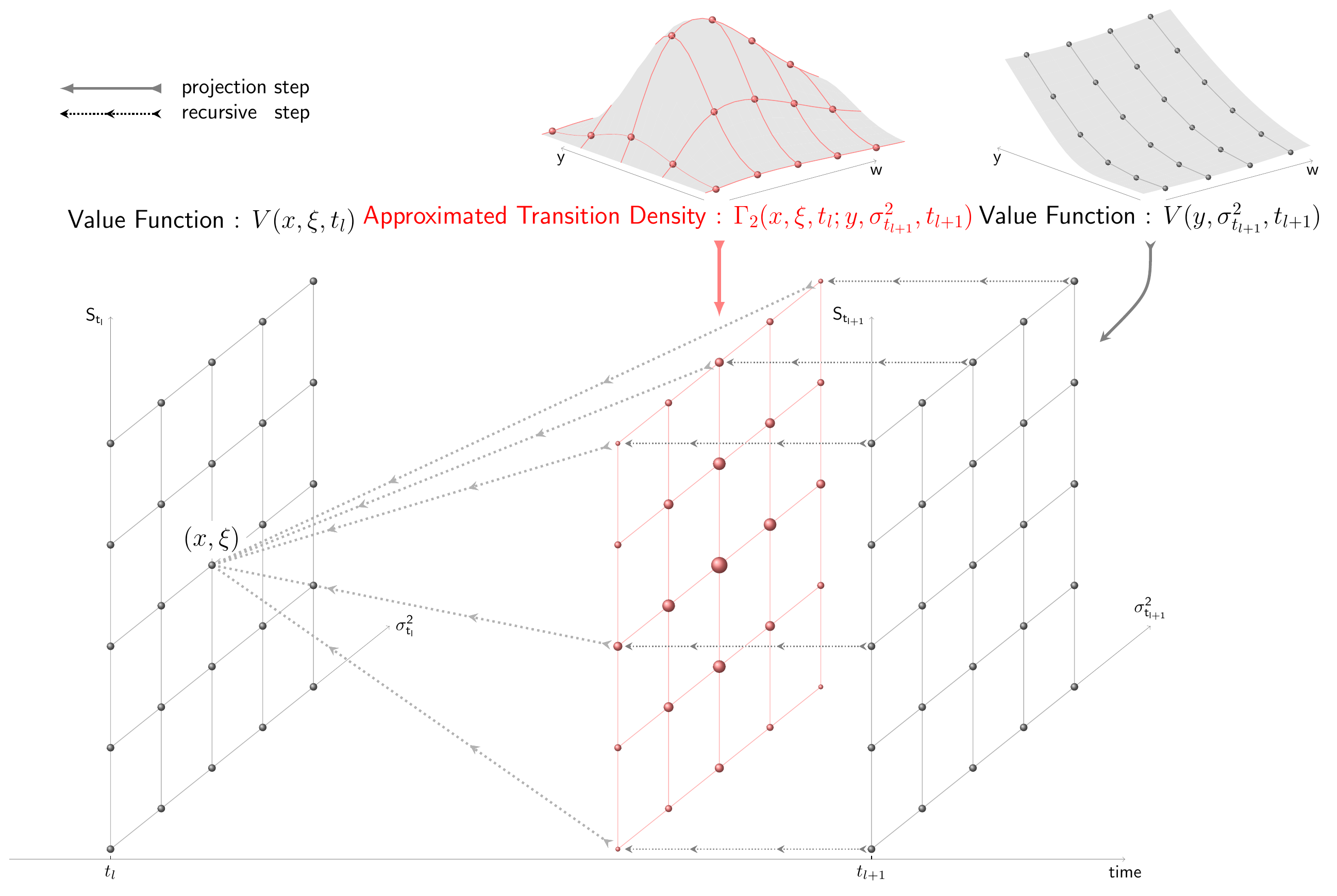}
	    \caption{
	    Recursive Projection scheme in the stochastic volatility case is composed of two steps. First the projection step (\emph{thick arrows}): the value function $V(y, w, t)$ at $t=t_{l+1}$ is sampled and the state price density function $G_{2}(x, \xi, t_{l}; y, w, t_{l+1})$ is approximated at $(y_{j}, w_{q})$ by $\Gamma_{2}(x, \xi, t_{l}; y_{j}, w_{q}, t_{l+1})$. Second the recursive step: the sampled values are multiplied by the transition weights to obtain the value function $V(x=y_{i}, \xi=w_{p}, t_{l})$ at $t=t_{l}$  (\emph{thin arrows}), which in turn will be the input for the following step of the algorithm. \label{fig:projection-2d}}
\end{sidewaysfigure}

%------------------------------------------------------------------------------------------- Figure 7

% Heston 3 dividend case: convergence rates

	\begin{sidewaysfigure}		
		\pgfplotsset{domain=-1:1,
		legend style={
		at={(0.5,-0.3)}, anchor=north, legend columns=2, font=\tiny,
		},
		}		
		\begin{tikzpicture}[baseline, scale=2]
		
		    \begin{semilogyaxis}[
		    	title= Recursive Projections,
		        xlabel=Resolution level $J$,
		        ylabel=Absolute Relative Error (bp),
		        width=5cm, height=6cm,
		        legend style={legend columns=1},
		        log basis x= 2, log basis y=2, ytickten={-2,0, ...,6},xtickten={6,8,10},
		        font=\tiny,
		    ]
		
		    \addplot plot coordinates {
            (6, 225.3234)
			(7, 64.306165526595905)
			(8, 14.884584207645688)
			(9, 3.887641180947537)
			(10, 0.939904985045580)
			(11, 0.220514201184560)
               };
			\node[font=\tiny, coordinate,pin=right:{15bp, 0.8$s$}] at (axis cs:8,  14.884584207645688)	{};
			\node[font=\tiny, coordinate,pin=left:{4bp, 2$s$}] at (axis cs:9, 3.887641180947537)	{};
			\node[font=\tiny, coordinate,pin=left:{1bp, 8$s$}] at (axis cs:10, 0.939904985045580)	{};
            \node[font=\tiny, coordinate,pin=left:{0.2bp, 32$s$}] at (axis cs:11, 0.220514201184560)	{};
		
		% regression line
			\addplot[domain=6:11, color= red, thin, dashed] {2^(-2.078572297076128*x + 20.597050222584251)} ;
			\node[color=red] at (axis cs:8.5, 200) {\scalebox{0.8}{$\widehat{\log_{2}(err)} = -2.08J+20.6$}};

		    \end{semilogyaxis}
		\end{tikzpicture}\hspace{2cm}
		\begin{tikzpicture}[baseline, scale=2]
		
		    \begin{loglogaxis}[
			title= Finite Differences,
		        xlabel=Time steps $L_T$,
		        width=5cm, height=6cm,
		        log basis x=2, log basis y=2,  xtickten={6, 8,10},
		        font=\tiny,    ]
		
		    \addplot plot coordinates {
			(64, 254.7586)
			(128,  193.0398)
			(256,   161.5917)
			(512,  146.1948)
			(1024,   138.5890)
			(2048,  134.7347)
		    };
		
		    \addplot plot coordinates {
			(64,  72.2534)
			(128,  33.7862)
			(256,    12.2868)
			(512,   4.3254)
			(1024,    0.7252)
			(2048,   1.0782)
		    };
		
		    \addplot[color=brown,mark=triangle*, mark options={solid}] plot coordinates {
			(64,  53.0495)
			(128,   25.0731)
			(256,   11.6392)
			(512,   4.4119)
			(1024,    0.8199)
			(2048,   0.9836)
		    };
		
		    \node[coordinate, font=\tiny,pin=left:{{\color{red}1bp, 130s}}] at (axis cs:1024, 0.7252)	{};
		    \node[coordinate, font=\tiny,pin=left:{{\color{red}4bp, 65s}}] at (axis cs:512, 4.3254)	{};	

		    \legend{$m_{s}=1600$\\$m_{s}=3200$\\$m_{s}=6400$\\}		
		    \end{loglogaxis} \hspace{-5cm}
		\end{tikzpicture}
	\caption{Comparison between the finite difference scheme and the recursive projections on an American call option written on a dividend-paying stock in the Heston case. The option has a maturity of 1 year and a dividend $d = 2$ is paid at dates $t_{l}=0.25, 0.5, 0.75$. Other parameters are set equal to $S_{0}=100$, $K=100$, $r = 0.05$, $\sigma_{LT}=0.2$, $\beta=2$, $\omega = 0.2$. The parameter $m_{s}$ gives the umber of points in the $X_{t}= \log S_{t}$ grid for the \emph{FD} scheme, while the $\sigma^{2}$ grid has $m_{w}=31$ points. The sampling grid for the $X_{t}= \log S_{t}$ variable in the recursive projections has size $2^{J}$. The resolution level in the $\sigma^{2}$ dimension is $J_{w}=4$.
The dashed line in the left panel corresponds to a fitted linear regression, and shows that the estimated slope is close to the slope of $-2$ predicted by the theoretical convergence results.	
	\label{fig:Hes3divConv}}
	\end{sidewaysfigure}

%------------------------------------------------------------------------------------------- Figure 8

% Heston 1 dividend case: convergence rates

	\begin{sidewaysfigure}
		\pgfplotsset{domain=-1:1,
		legend style={
		at={(0.5,-0.3)}, anchor=north, legend columns=2, font=\tiny,
		},
		}
		
		\begin{tikzpicture}[baseline, scale=2]
		    \begin{semilogyaxis}[
		        title = Recursive Projections,
		        log basis y=2, ytickten={-4,-2, ...,6, 8},
		        font=\tiny,
		        xlabel=Resolution level $J$,
		        ylabel= Absolute Relative Error (bp),
		        width=5cm, height=6cm,
		        legend style={legend columns=1}
		    ]
		
		    \addplot coordinates {
			(6, 93.012748850687)
			(7, 28.060331654882)
			(8, 6.238016986234)
			(9, 1.427137835960)
			(10, 0.403207161629)
			(11, 0.114928765042)   };
		%	Time labels
				\node[pin={right:6bp, 0.3$s$}] at (axis cs:8, 6.23)	{};
				\node[font=\tiny, coordinate,pin=left:{1bp, 0.8$s$}] at (axis cs:9, 1.427)	{};
				\node[font=\tiny, coordinate,pin=left:{0.4bp, 3$s$}] at (axis cs:10, 0.403)	{};
				\node[font=\tiny, coordinate,pin=left:{0.1bp, 10$s$}] at (axis cs:11, 0.1149)	{};
		% trend line
			\addplot[domain=6:11, color= red, thin, dashed] {2^(-2.062089632466064*x + 19.186607439706588)} ;
			\node[font=\tiny, color=red] at (axis cs:8.5, 64) {\scalebox{0.8}{$\widehat{\log_{2}(err)} = -2.1J+19$}};
		    \end{semilogyaxis}
		\end{tikzpicture}
		\hspace{2cm}
		\begin{tikzpicture}[baseline, scale=2]
		
		    \begin{loglogaxis}[
	          	title = Finite Differences,
		        xlabel=Time steps $L_T$,
		        width=5cm, height=6cm,
		        log basis y=2, log basis x=2, xtickten={6,8,...,11},
		        font=\tiny
		            ]
		    \addplot plot coordinates {
			(64,98.071571336248198)
			(128,36.600568803175825)
			(256, 5.859589814083233)
			(512, 9.512269111101862)
			(1024, 17.198883289014720)
            (2048,21.041505662650838)
             };

		    \addplot plot coordinates {
			(64,125.3834960047469)
			(128,63.5769829650675)
			(256, 32.6114173511439)
			(512, 17.1231568216251)
			(1024, 9.3215104700180)
            (2048,5.4213720095353)
             };

		    \addplot[color=brown,mark=triangle*, mark options={solid}] plot coordinates {
			(64,129.1905131817646)
			(128,67.0772476789007)
			(256, 36.0473188249336)
			(512, 20.5289308213515)
			(1024, 12.7409787761372)
            (2048,8.8449486075715)
             };    	

			\node[coordinate, font=\tiny, pin=left:{\color{red} 5bp, 48s}] at (axis cs:2048, 5.4213720095353)	{};
            \node[coordinate, font=\tiny, pin=85:{\color{red} 17bp, 12s}] at (axis cs:512, 17.1231568216251)	{};
		
		    \legend{$m_{s}=200$\\$m_{s}=400$\\$m_{s}=800$\\}		
		
		\end{loglogaxis}
		\end{tikzpicture}		
		\caption{Comparison between the finite difference scheme and the recursive projections on an American call option written on a dividend-paying stock in the Heston case. The option has a maturity of 1 year and a dividend $d = 10$ is paid at $t_{l}=0.5$. Other parameters are set equal to $S_{0}=100$, $K=100$, $r = 0.05$, $\sigma_{LT}=0.2$, $\beta=2$, $\omega = 0.2$. The parameter $m_{s}$ gives the number of points in the $X_{t}= \log S_{t}$ grid for the \emph{FD} scheme, while the $\sigma^{2}$ grid has $m_{w}=31$ points. The sampling grid for the $X_{t}= \log S_{t}$ variable in the recursive projections has size $2^{J}$. The resolution level in the $\sigma^{2}$ dimension is $J_{w}=4$.
The dashed line in the left panel corresponds to a fitted linear regression, and shows that the estimated slope is close to the slope of $-2$ predicted by the theoretical convergence results.
		\label{fig:Hes1divConv}}
\end{sidewaysfigure}

%------------------------------------------------------------------------------------------- Figure 9

% Early Exercise Boundary: Black-Scholes vs Heston

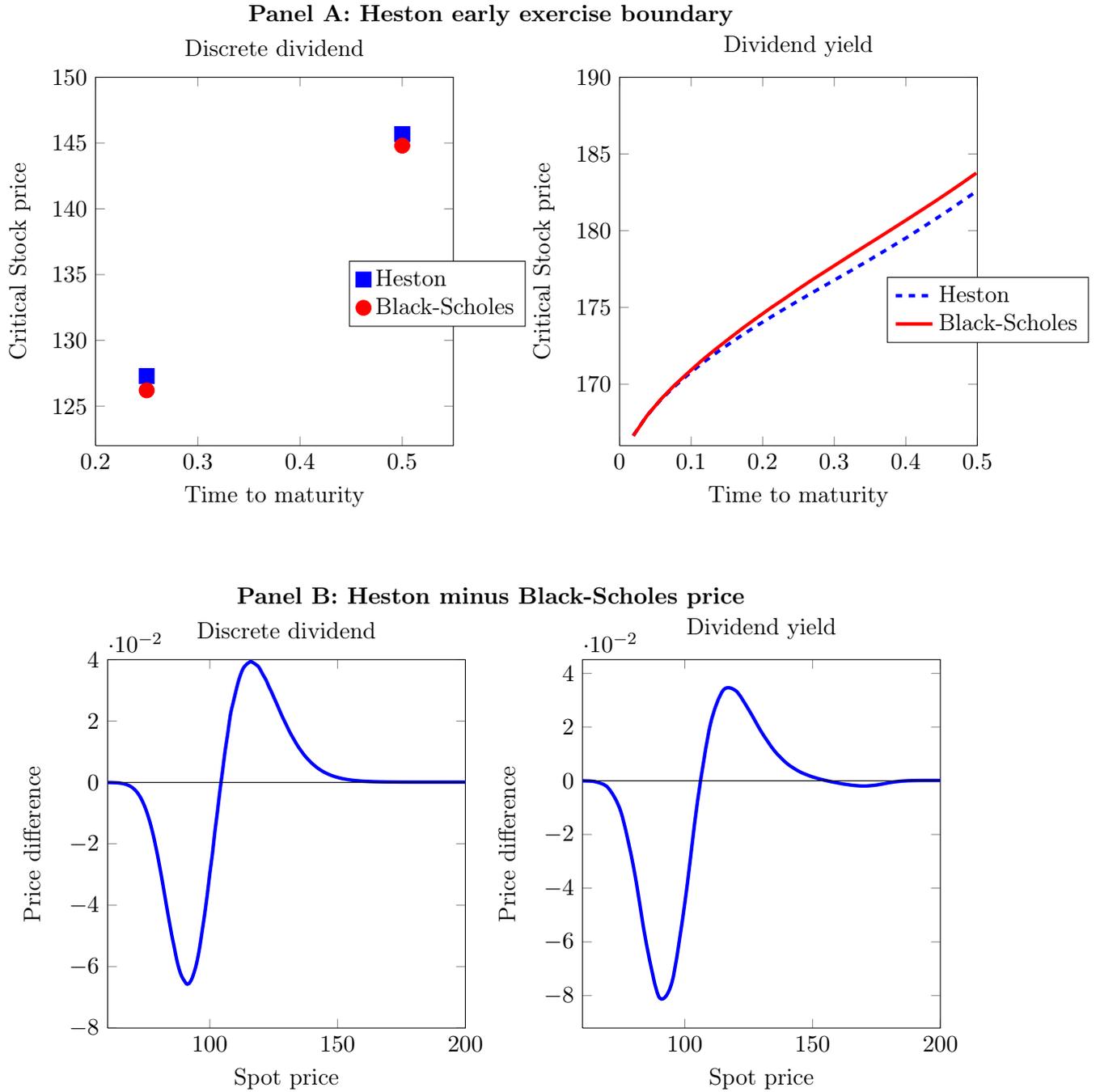
\begin{figure}
	\begin{center}
\title{\textbf{Panel A: Heston early exercise boundary}}
  \definecolor{mycolor1}{rgb}{0,0.498039215803146,0}%
\begin{tikzpicture}

\begin{axis}[%
width=2.91432291666667in,
height=3in,
%scale only axis,
xmin=0.2,
xmax=0.55,
xlabel={Time to maturity},
ymin=122,
ymax=150,
ylabel={Critical Stock price},
name=plot1,
title={Discrete dividend},
legend style={at={(0.708494389689963,0.322918036309861)},anchor=south west,draw=black,fill=white,legend cell align=left}
]
\addplot [
color=blue,
mark size=3.5pt,
only marks,
mark=square*,
mark options={solid,draw=blue}
]
table[row sep=crcr]{
0.25 127.29\\
0.5 145.68\\
};
\addlegendentry{Heston};

\addplot [
color=red,
mark size=3.5pt,
only marks,
]
table[row sep=crcr]{
0.25 126.2\\
0.5 144.8\\
};
\addlegendentry{Black-Scholes};

\end{axis}

\begin{axis}[%
width=2.91432291666667in,
height=3in,
%scale only axis,
xmin=0,
xmax=0.5,
xlabel={Time to maturity},
ymin=166,
ymax=190,
ylabel={Critical Stock price},
at=(plot1.right of south east),
anchor=left of south west,
title={Dividend yield},
legend style={at={(0.747601540616246,0.278998778998779)},anchor=south west,draw=black,fill=white,legend cell align=left}
]
\addplot [
color=blue,
solid,
dashed,
ultra thick,
]
table[row sep=crcr]{
0.0191780821917808 166.65\\
0.0383561643835616 167.96\\
0.0575342465753425 168.98\\
0.0767123287671233 169.85\\
0.0958904109589041 170.63\\
0.115068493150685 171.35\\
0.134246575342466 172.01\\
0.153424657534247 172.639999999999\\
0.172602739726027 173.239999999999\\
0.191780821917808 173.809999999999\\
0.210958904109589 174.359999999999\\
0.23013698630137 174.899999999999\\
0.249315068493151 175.419999999999\\
0.268493150684932 175.939999999999\\
0.326027397260274 177.469999999999\\
0.345205479452055 177.989999999999\\
0.364383561643836 178.519999999999\\
0.402739726027397 179.599999999999\\
0.441095890410959 180.749999999998\\
0.46027397260274 181.339999999998\\
0.479452054794521 181.949999999998\\
0.498630136986301 182.559999999998\\
};
\addlegendentry{Heston};

\addplot [
color=red,
solid, ultra thick
]
table[row sep=crcr]{
0.0191780821917808 166.64\\
0.0383561643835616 167.95\\
0.0575342465753425 169.01\\
0.0767123287671233 169.93\\
0.0958904109589041 170.75\\
0.115068493150685 171.55\\
0.134246575342466 172.28\\
0.153424657534247 172.979999999999\\
0.172602739726027 173.669999999999\\
0.191780821917808 174.319999999999\\
0.210958904109589 174.959999999999\\
0.23013698630137 175.559999999999\\
0.249315068493151 176.179999999999\\
0.268493150684932 176.779999999999\\
0.326027397260274 178.499999999999\\
0.345205479452055 179.059999999998\\
0.364383561643836 179.619999999998\\
0.402739726027397 180.759999999998\\
0.441095890410959 181.919999999998\\
0.46027397260274 182.519999999998\\
0.479452054794521 183.129999999998\\
0.498630136986301 183.759999999998\\
};
\addlegendentry{Black-Scholes};

\end{axis}

\end{tikzpicture}%
\vspace{0.5cm} 

  \title{\textbf{Panel B: Heston minus Black-Scholes price}}
   % This file was created by matlab2tikz v0.4.4 running on MATLAB 7.13.
% Copyright (c) 2008--2013, Nico Schlömer <nico.schloemer@gmail.com>
% All rights reserved.
%
% The latest updates can be retrieved from
%   http://www.mathworks.com/matlabcentral/fileexchange/22022-matlab2tikz
% where you can also make suggestions and rate matlab2tikz.
%
\begin{tikzpicture}

\begin{axis}[%
width=2.91432291666667in,
height=3in,
%scale only axis,
xmin=60,
xmax=200,
xlabel={Spot price},
ymin=-0.08,
ymax=0.04,
name=plot1,
ylabel={Price difference},
title={Discrete dividend }
]
\addplot [
color=blue,
solid, ultra thick,
forget plot, smooth
]
table[row sep=crcr]{
60 -1.37268986766929e-005\\
61 -2.56084076991461e-005\\
62 -4.63038487589107e-005\\
63 -8.07858747644334e-005\\
64 -0.000137220678365984\\
65 -0.000227144935902789\\
66 -0.000364554477977526\\
67 -0.000568388999358634\\
68 -0.00087082776803634\\
69 -0.00129765592560227\\
70 -0.00188390360642661\\
71 -0.00267105290740116\\
72 -0.00371940301467325\\
73 -0.00506778507318347\\
74 -0.00677674976026913\\
75 -0.0088438286810181\\
76 -0.0113505893364224\\
77 -0.0142222128791843\\
78 -0.017630272619781\\
79 -0.0214833333006747\\
80 -0.0256662111864778\\
81 -0.0301962895177031\\
82 -0.0349806737488172\\
83 -0.0398875009214248\\
84 -0.0446056543034896\\
85 -0.0492387512376216\\
86 -0.0532916058585351\\
87 -0.0574773273389418\\
88 -0.0607959206979403\\
89 -0.0631263463948849\\
90 -0.0645604409772731\\
91 -0.0656677951631863\\
92 -0.065353285498629\\
93 -0.0639269806399736\\
94 -0.0615314400067999\\
95 -0.0583675605012686\\
96 -0.0539484687998164\\
97 -0.0485097509138837\\
98 -0.0429188448146745\\
99 -0.0366684334923555\\
100 -0.029757843904366\\
101 -0.0234681042267546\\
102 -0.0160144868237042\\
103 -0.00949394690239735\\
104 -0.00243931762229099\\
105 0.003740706473363\\
106 0.0103282167516472\\
107 0.015712577747875\\
108 0.0217579291543135\\
109 0.0254901628183966\\
110 0.0289794839376949\\
111 0.0323578661166568\\
112 0.0350133633868239\\
113 0.0370445454708364\\
114 0.0381515952918203\\
115 0.0388859553674017\\
116 0.0393406799984461\\
117 0.0389887870940235\\
118 0.0384067490230393\\
119 0.0377773588329511\\
120 0.0363596897245415\\
121 0.0348047607600321\\
122 0.0334023942037227\\
123 0.0314386286346711\\
124 0.0297304065408426\\
125 0.0278349541046516\\
126 0.0259843917692706\\
127 0.0240529590186043\\
128 0.0221109679328535\\
129 0.0203299783297766\\
130 0.0186358443252708\\
131 0.0168861947801311\\
132 0.0153180557347028\\
133 0.0138896177425991\\
134 0.0124414887322146\\
135 0.011162550922144\\
136 0.010004830615685\\
137 0.00887737571770941\\
138 0.00791389682328258\\
139 0.00702969408064291\\
140 0.00620736452496118\\
141 0.00549193591961483\\
142 0.00482242542559419\\
143 0.00423079398603932\\
144 0.00371875623620355\\
145 0.00325085252563184\\
146 0.00284816707010549\\
147 0.00247940903351918\\
148 0.00215950605870319\\
149 0.00188533574020511\\
150 0.0016395217508034\\
151 0.00142821645371782\\
152 0.00123950252237393\\
153 0.00108180799767155\\
154 0.000939721084932899\\
155 0.000819251146900513\\
156 0.000715570677115807\\
157 0.000625605072279711\\
158 0.000548322964320391\\
159 0.000482621759864799\\
160 0.00042561829580734\\
161 0.000377460527715812\\
162 0.000336363255051708\\
163 0.000300879275336285\\
164 0.000271442011410272\\
165 0.000245755136617731\\
166 0.000224672330517706\\
167 0.000206422021989283\\
168 0.000191298123269235\\
169 0.000178440207790231\\
170 0.000167719513569864\\
171 0.000158745282945461\\
172 0.000151248650610114\\
173 0.000145056540134192\\
174 0.000139901551705179\\
175 0.000135694250474216\\
176 0.00013222781369393\\
177 0.000129432297271137\\
178 0.000127175689925707\\
179 0.000125371734824853\\
180 0.00012398753486309\\
181 0.000122901500404282\\
182 0.000122120627196409\\
183 0.000121553753956505\\
184 0.000121187846787052\\
185 0.000120986559593916\\
186 0.000120915143625666\\
187 0.000120969238281532\\
188 0.000121114689562773\\
189 0.000121338771975843\\
190 0.00012163560973022\\
191 0.000121985714869766\\
192 0.000122382331255722\\
193 0.0001228212323241\\
194 0.000123290579750801\\
195 0.000123788031956451\\
196 0.000124309933809741\\
197 0.000124849221307954\\
198 0.000125404739179658\\
199 0.000125974694668685\\
200 0.000126554942823986\\
};
\addplot [
color=black,
solid,
smooth,
forget plot
]
table[row sep=crcr]{
60 0\\
65 0\\
70 0\\
75 0\\
80 0\\
85 0\\
90 0\\
95 0\\
100 0\\
105 0\\
110 0\\
115 0\\
120 0\\
125 0\\
130 0\\
135 0\\
140 0\\
145 0\\
150 0\\
155 0\\
160 0\\
165 0\\
170 0\\
175 0\\
180 0\\
185 0\\
190 0\\
195 0\\
200 0\\
};
\end{axis}

\begin{axis}[%
width=2.91432291666667in,
height=3in,
%scale only axis,
xmin=60,
xmax=200,
xlabel={Spot price},
ylabel={Price difference},
at=(plot1.right of south east),
anchor=left of south west,
title={Dividend yield}
]
\addplot [
color=blue,
solid, ultra thick,
forget plot, smooth
]
table[row sep=crcr]{
60 -2.41970856090781e-005\\
65 -0.000347586731462181\\
70 -0.00262579082533956\\
75 -0.0115761329620721\\
80 -0.0323852633123489\\
85 -0.0608424678304537\\
90 -0.0806501974421425\\
95 -0.0750100265171954\\
100 -0.0457033174567574\\
105 -0.00761330763384471\\
110 0.0209305249943395\\
115 0.0336694695623265\\
120 0.0334845301837703\\
125 0.0264726903387356\\
130 0.0181718827172048\\
135 0.0112492892558507\\
140 0.00641960200420044\\
145 0.00335492753091415\\
150 0.00143878039473577\\
155 0.000144288213647314\\
160 -0.000839004945255795\\
165 -0.00159898409670234\\
170 -0.00196362757756674\\
175 -0.00158588757685152\\
180 -0.000711620179913552\\
185 -0.000106027253380603\\
190 8.7182459949986e-005\\
195 0.00012103853583767\\
200 0.000126988260220173\\
};
\addplot [
color=black,
solid,
forget plot
]
table[row sep=crcr]{
60 0\\
65 0\\
70 0\\
75 0\\
80 0\\
85 0\\
90 0\\
95 0\\
100 0\\
105 0\\
110 0\\
115 0\\
120 0\\
125 0\\
130 0\\
135 0\\
140 0\\
145 0\\
150 0\\
155 0\\
160 0\\
165 0\\
170 0\\
175 0\\
180 0\\
185 0\\
190 0\\
195 0\\
200 0\\
};
\end{axis}

\end{tikzpicture}% 
  \caption{Panel A. Comparison between the early exercise boundary in the Heston and Black-Scholes models of an American call with maturity 6 months, in the case in which the stock pays a dividend yield $r_d=0.03$ (right) and in the case in which the stock distributes an equivalent quarterly discrete dividend of $d=1.38$ (left). The remaining parameters are: $K=100$, $r=0.05$, $\sigma_{0}=0.2$, $\omega=0.1$, $\sigma_{LT}=0.3$, $\beta=4$, $\rho=-0.5$. We set the volatility parameter in the Black-Scholes model equal to the volatility of the underlying return over the life of the option in the Heston model.
Panel B. Heston minus Black-Scholes price of an American call with $T=0.25$ for different values of $S _{0}$ in the case of discrete dividend (left) and continuous dividend yield (right). The remaining parameters are the same as in Panel A.\label{Heston_exercise_boundary}}
	
\end{center}
\end{figure}

%------------------------------------------------------------------------------------------- Figure 10

% Early Exercise Boundary: Black-Scholes vs Merton

\begin{figure}
	\begin{center}
\title{\textbf{Panel A: Merton early exercise boundary}}
  \begin{tikzpicture}

\begin{axis}[%
width=2.91432291666667in,
height=3in,
%scale only axis,
xmin=0.2,
xmax=0.55,
xlabel={Time to maturity},
ymin=60,
ymax=78,
name=plot1,
ylabel={Critical Stock price},
title={Discrete dividend},
legend style={at={(0.7,0.2)},anchor=south west,draw=black,fill=white,legend cell align=left}
]
\addplot [
color=blue,
mark size=4pt,
only marks,
mark=square*,
]
table[row sep=crcr]{
0.25 63.56\\
0.5 74.59\\
};
\addlegendentry{Merton};

\addplot [
color=red,
mark size=4pt,
only marks,
]
table[row sep=crcr]{
0.25 61.34\\
0.5 73.97\\
};
\addlegendentry{Black-Scholes};

\end{axis}
\begin{axis}[%
width=2.91432291666667in,
height=3in,
%scale only axis,
xmin=0,
xmax=0.5,
xlabel={Time to maturity},
ymin=60,
ymax=95,
ylabel={Critical Stock price},
at=(plot1.right of south east),
anchor=left of south west,
title={Dividend yield},
legend style={at={(0.65,0.25)},anchor=south west, draw=black, fill=white,legend cell align=left}
]
\addplot [
color=blue,
dashed,
ultra thick,
]
table[row sep=crcr]{
0.0191780821917808 67.72\\
0.0383561643835616 69.4399999999999\\
0.0575342465753425 70.93\\
0.0767123287671233 72.29\\
0.0958904109589041 73.54\\
0.115068493150685 74.71\\
0.134246575342466 75.84\\
0.153424657534247 76.9000000000001\\
0.172602739726027 77.9200000000001\\
0.191780821917808 78.9100000000001\\
0.210958904109589 79.87\\
0.23013698630137 80.8\\
0.249315068493151 81.69\\
0.268493150684932 82.57\\
0.287671232876712 83.35\\
0.306849315068493 84.2\\
0.326027397260274 85.08\\
0.345205479452055 85.8899999999999\\
0.364383561643836 86.68\\
0.383561643835616 87.38\\
0.402739726027397 88.2099999999999\\
0.421917808219178 88.8899999999999\\
0.441095890410959 89.69\\
0.46027397260274 90.3999999999999\\
0.479452054794521 91.1199999999999\\
0.498630136986301 91.8199999999999\\
};
\addlegendentry{Merton};

\addplot [
color=red,
solid, ultra thick, smooth
]
table[row sep=crcr]{
0.0191780821917808 63.99\\
0.0383561643835616 65.33\\
0.0575342465753425 66.42\\
0.0767123287671233 67.53\\
0.0958904109589041 68.7700000000001\\
0.115068493150685 70.1400000000001\\
0.134246575342466 71.5500000000001\\
0.153424657534247 72.9500000000001\\
0.172602739726027 74.3400000000001\\
0.191780821917808 75.6800000000001\\
0.210958904109589 76.9800000000002\\
0.23013698630137 78.2400000000002\\
0.249315068493151 79.4700000000002\\
0.268493150684932 80.6700000000002\\
0.287671232876712 81.6800000000002\\
0.306849315068493 82.8200000000002\\
0.326027397260274 84.0400000000002\\
0.345205479452055 85.0900000000003\\
0.364383561643836 86.1300000000003\\
0.383561643835616 87.0300000000003\\
0.402739726027397 88.1200000000003\\
0.421917808219178 88.9900000000003\\
0.441095890410959 90.0100000000003\\
0.46027397260274 90.9400000000003\\
0.479452054794521 91.8300000000003\\
0.498630136986301 92.7100000000003\\
};
\addlegendentry{Black-Scholes};

\end{axis}

\end{tikzpicture}%
\vspace{0.5cm} 

  \title{\textbf{Panel B: Merton minus Black-Scholes price}}
   \input{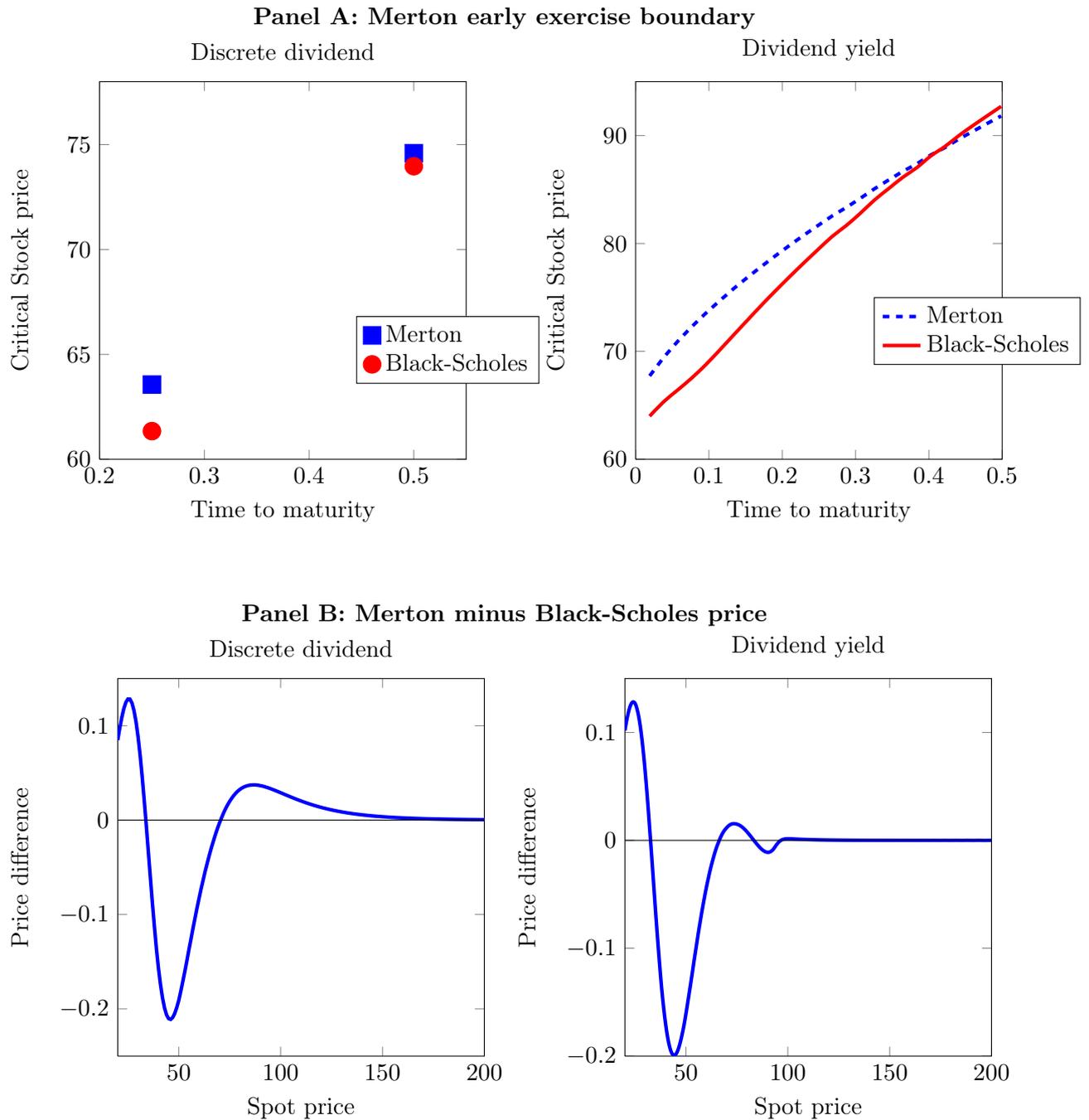}
  \caption{Panel A. Comparison between the early exercise boundary in the Merton and Black-Scholes models of an American call with maturity 6 months, in the case in which the stock pays a dividend yield $r_d=0.05$ (right) and in the case in which the stock distributes an equivalent quarterly discrete dividend of $ d=1.125$ (left). The other parameters are the following: $K=40$, $T=0.5$, $r=0.08$, $\gamma=5$, $\sigma_{M}^2=0.05$, $\sigma_{\psi}^2=0.05$, $\mu_{\psi}=0$. We set the volatility parameter in the Black-Scholes model equal to the volatility of the underlying return over the life of the option in the Merton model.
Panel B. Merton minus Black-Scholes price of an American call with the same parameters as those used in Panel A but different values of $S_0$ in the case of discrete dividend (left) and continuous dividend yield (right).\label{Merton_exercise_boundary}}
	
\end{center}
\end{figure}

%--------------------------------------------------------------------------------------------Table 1

% Descriptive stats: OTM, ATM, ITM contracts in the database

\begin{table}

\begin{tabular}{lcllllllll}
 &  &  & \multicolumn{7}{c}{{\scriptsize Number of option quotes}}\tabularnewline
\cline{4-10}
 &  &  & \multicolumn{3}{c}{{\scriptsize maturity<60 days}} &  & \multicolumn{3}{c}{{\scriptsize maturity>60 days}}\tabularnewline
\cline{4-6} \cline{8-10}
 & {\scriptsize Ticker} & {\scriptsize Stock} & {\scriptsize OTM} & {\scriptsize ATM} & {\scriptsize ITM} &  & {\scriptsize OTM} & {\scriptsize ATM} & {\scriptsize ITM}\tabularnewline
\hline
{\scriptsize 1} & {\scriptsize{}   AA} & {\scriptsize{}    ALCOA INC} & {\scriptsize 42827} & {\scriptsize 6379} & {\scriptsize 41206} &  & {\scriptsize 28788} & {\scriptsize 9621} & {\scriptsize 29371}\tabularnewline
{\scriptsize 2} & {\scriptsize{}    AXP} & {\scriptsize{}    AMERICAN EXPRESS CO} & {\scriptsize 43360} & {\scriptsize 10386} & {\scriptsize 59656} &  & {\scriptsize 27744} & {\scriptsize 14941} & {\scriptsize 50002}\tabularnewline
{\scriptsize 3} & {\scriptsize{}    BAC} & {\scriptsize{}    BANK OF AMERICA CO} & {\scriptsize 53785} & {\scriptsize 7767} & {\scriptsize 57794} &  & {\scriptsize 48365} & {\scriptsize 15446} & {\scriptsize 58196}\tabularnewline
{\scriptsize 4} & {\scriptsize{}    BA} & {\scriptsize{}    BOEING CO} & {\scriptsize 41616} & {\scriptsize 7385} & {\scriptsize 52422} &  & {\scriptsize 36022} & {\scriptsize 13312} & {\scriptsize 50357}\tabularnewline
{\scriptsize 5} & {\scriptsize{}    CAT} & {\scriptsize{}    CATERPILLAR INC DEL} & {\scriptsize 40592} & {\scriptsize 8651} & {\scriptsize 55672} &  & {\scriptsize 34438} & {\scriptsize 15377} & {\scriptsize 54361}\tabularnewline
{\scriptsize 6} & {\scriptsize{}    CHV} & {\scriptsize{}    CHEVRON CORPORATION} & {\scriptsize 33444} & {\scriptsize 5922} & {\scriptsize 50092} &  & {\scriptsize 28088} & {\scriptsize 11238} & {\scriptsize 51676}\tabularnewline
{\scriptsize 7} & {\scriptsize{}    CSCO} & {\scriptsize{}    CISCO SYS INC} & {\scriptsize 58276} & {\scriptsize 8194} & {\scriptsize 63177} &  & {\scriptsize 44295} & {\scriptsize 14142} & {\scriptsize 55598}\tabularnewline
{\scriptsize 8} & {\scriptsize{}    KO} & {\scriptsize{}    COCA COLA CO} & {\scriptsize 34710} & {\scriptsize 5055} & {\scriptsize 44407} &  & {\scriptsize 29054} & {\scriptsize 9249} & {\scriptsize 44421}\tabularnewline
{\scriptsize 9} & {\scriptsize{}    DIS} & {\scriptsize{}    DISNEY WALT CO} & {\scriptsize 40138} & {\scriptsize 7052} & {\scriptsize 47571} &  & {\scriptsize 31350} & {\scriptsize 10730} & {\scriptsize 40773}\tabularnewline
{\scriptsize 10} & {\scriptsize{}    XOM} & {\scriptsize{}    EXXON MOBIL CORP} & {\scriptsize 39509} & {\scriptsize 6427} & {\scriptsize 55468} &  & {\scriptsize 27944} & {\scriptsize 9425} & {\scriptsize 48375}\tabularnewline
{\scriptsize 11} & {\scriptsize{}    GE} & {\scriptsize{}    GENERAL ELECTRIC CO} & {\scriptsize 49637} & {\scriptsize 6734} & {\scriptsize 58779} &  & {\scriptsize 42696} & {\scriptsize 13072} & {\scriptsize 58291}\tabularnewline
{\scriptsize 12} & {\scriptsize{}    HWP} & {\scriptsize{}    HEWLETT PACKARD CO} & {\scriptsize 54913} & {\scriptsize 9182} & {\scriptsize 53498} &  & {\scriptsize 47707} & {\scriptsize 17077} & {\scriptsize 51029}\tabularnewline
{\scriptsize 13} & {\scriptsize{}    HD} & {\scriptsize{}    HOME DEPOT INC} & {\scriptsize 42968} & {\scriptsize 7052} & {\scriptsize 56732} &  & {\scriptsize 36047} & {\scriptsize 13413} & {\scriptsize 53921}\tabularnewline
{\scriptsize 14} & {\scriptsize{}    INTC} & {\scriptsize{}    INTEL CORP} & {\scriptsize 55825} & {\scriptsize 8666} & {\scriptsize 61243} &  & {\scriptsize 43796} & {\scriptsize 14327} & {\scriptsize 52367}\tabularnewline
{\scriptsize 15} & {\scriptsize{}    IBM} & {\scriptsize{}    INTER. BUS. MACHS} & {\scriptsize 70249} & {\scriptsize 9060} & {\scriptsize 85569} &  & {\scriptsize 45355} & {\scriptsize 13330} & {\scriptsize 62979}\tabularnewline
{\scriptsize 16} & {\scriptsize{}    JNJ} & {\scriptsize{}    JOHNSON \& JOHNSON} & {\scriptsize 34175} & {\scriptsize 4551} & {\scriptsize 47524} &  & {\scriptsize 22974} & {\scriptsize 7074} & {\scriptsize 43322}\tabularnewline
{\scriptsize 17} & {\scriptsize{}    MCD} & {\scriptsize{}    MCDONALDS CORP} & {\scriptsize 33730} & {\scriptsize 5674} & {\scriptsize 47969} &  & {\scriptsize 27630} & {\scriptsize 10145} & {\scriptsize 48000}\tabularnewline
{\scriptsize 18} & {\scriptsize{}    MRK} & {\scriptsize{}    MERCK \& CO INC} & {\scriptsize 41559} & {\scriptsize 7612} & {\scriptsize 53181} &  & {\scriptsize 31901} & {\scriptsize 11486} & {\scriptsize 47534}\tabularnewline
{\scriptsize 19} & {\scriptsize{}    MSFT} & {\scriptsize{}    MICROSOFT CORP} & {\scriptsize 68396} & {\scriptsize 8725} & {\scriptsize 78393} &  & {\scriptsize 50253} & {\scriptsize 14163} & {\scriptsize 65728}\tabularnewline
{\scriptsize 20} & {\scriptsize{}    MMM} & {\scriptsize{}    3M CO} & {\scriptsize 36835} & {\scriptsize 6737} & {\scriptsize 46521} &  & {\scriptsize 24402} & {\scriptsize 10205} & {\scriptsize 38750}\tabularnewline
{\scriptsize 21} & {\scriptsize{}    JPM} & {\scriptsize{}    MORGAN J P \& CO INC} & {\scriptsize 6991} & {\scriptsize 3180} & {\scriptsize 9940} &  & {\scriptsize 4428} & {\scriptsize 5874} & {\scriptsize 10568}\tabularnewline
{\scriptsize 22} & {\scriptsize{}    PFE} & {\scriptsize{}    PFIZER INC} & {\scriptsize 51100} & {\scriptsize 6170} & {\scriptsize 54751} &  & {\scriptsize 47414} & {\scriptsize 12109} & {\scriptsize 52358}\tabularnewline
{\scriptsize 23} & {\scriptsize{}    PG} & {\scriptsize{}    PROCTER \& GAMBLE CO} & {\scriptsize 36971} & {\scriptsize 5904} & {\scriptsize 52782} &  & {\scriptsize 25422} & {\scriptsize 8974} & {\scriptsize 46956}\tabularnewline
{\scriptsize 24} & {\scriptsize{}    T} & {\scriptsize{}    AT\&T INC} & {\scriptsize 42547} & {\scriptsize 5542} & {\scriptsize 52123} &  & {\scriptsize 32851} & {\scriptsize 9496} & {\scriptsize 46524}\tabularnewline
{\scriptsize 25} & {\scriptsize{}    TRV} & {\scriptsize{}    TRAVELERS COMPANIES INC} & {\scriptsize 21404} & {\scriptsize 3803} & {\scriptsize 27783} &  & {\scriptsize 15069} & {\scriptsize 5811} & {\scriptsize 23646}\tabularnewline
{\scriptsize 26} & {\scriptsize{}    UTX} & {\scriptsize{}    UNITED TECHNOLOGIES CORP} & {\scriptsize 34765} & {\scriptsize 6366} & {\scriptsize 48200} &  & {\scriptsize 28973} & {\scriptsize 11712} & {\scriptsize 45580}\tabularnewline
{\scriptsize 27} & {\scriptsize{}    UNH} & {\scriptsize{}    UNITEDHEALTH GROUP INC} & {\scriptsize 39924} & {\scriptsize 8864} & {\scriptsize 58885} &  & {\scriptsize 32757} & {\scriptsize 16724} & {\scriptsize 56451}\tabularnewline
{\scriptsize 28} & {\scriptsize{}    VZ} & {\scriptsize{}    VERIZON COMMUNICATIONS INC} & {\scriptsize 39642} & {\scriptsize 6461} & {\scriptsize 56527} &  & {\scriptsize 34306} & {\scriptsize 11100} & {\scriptsize 54563}\tabularnewline
{\scriptsize 29} & {\scriptsize{}    WMT} & {\scriptsize{}    WAL-MART STORES INC} & {\scriptsize 37172} & {\scriptsize 5668} & {\scriptsize 51904} &  & {\scriptsize 31679} & {\scriptsize 10429} & {\scriptsize 52890}\tabularnewline
{\scriptsize 30} & {\scriptsize{}    DD} & {\scriptsize{}     DU PONT E I NEMOURS \& CO} & {\scriptsize 35262} & {\scriptsize 7404} & {\scriptsize 46302} &  & {\scriptsize 24077} & {\scriptsize 11097} & {\scriptsize 41173}\tabularnewline
\hline
\end{tabular}

\caption{{\small Number of observations for in-the-money (ITM), at-the-money (ATM)
and out-of-the-money (OTM) call option quotes for the stocks which are
the constituents of the Dow Jones Industrial Average Index (DJIA). The data are further broken down by maturity.
According to the classification of \cite{bollen2004does}, a call
option is considered OTM if its delta is less then 0.375,
ATM if its delta ranges between 0.375 and 0.625 and ITM
if its delta is above 0.625.\label{Table:Db_description}}}

\end{table}

%--------------------------------------------------------------------------------------------Table 2

% Average calibrated parameters

\begin{table}
% !TEX root = ../FRPpaper.tex
%\multirow{1}{*}{
\begin{tabular}{c|l|cccc|cccccccc}
{\footnotesize Underlying } & \multirow{1}{*}{{\footnotesize BS}} & \multicolumn{4}{c|}{{\footnotesize MRT}} & \multicolumn{8}{c}{{\footnotesize BTS}}\tabularnewline
 & {\footnotesize $\sigma_{BS}$} & {\footnotesize $\gamma$} & {\footnotesize $\sigma_{M}$} & {\footnotesize $\sigma_{\psi}$} & {\footnotesize $\mu_{\psi}$} & {\footnotesize $\gamma$} & {\footnotesize $\sigma_{\psi}$} & {\footnotesize $\mu_{\psi}$} & {\footnotesize $\omega$} & {\footnotesize $\sigma_{LT}$} & {\footnotesize $\beta$} & {\footnotesize $\rho$} & {\footnotesize $\sigma_{0}$}\tabularnewline
\hline
{\footnotesize All stocks} & {\footnotesize 0.29} & {\footnotesize 1.33} & {\footnotesize 0.22} & {\footnotesize 0.16} & {\footnotesize -0.12} & {\footnotesize 0.50} & {\footnotesize 0.18} & {\footnotesize -0.12} & {\footnotesize 0.75} & {\footnotesize 0.32} & {\footnotesize 1.52} & {\footnotesize -0.35} & {\footnotesize 0.28}\tabularnewline
{\footnotesize SP500{*}} & {\footnotesize 0.18} & {\footnotesize NA} & {\footnotesize NA} & {\footnotesize NA} & {\footnotesize NA} & {\footnotesize 0.61} & {\footnotesize 0.14} & {\footnotesize -0.09} & {\footnotesize 0.4} & {\footnotesize 0.2} & {\footnotesize 3.93} & {\footnotesize -0.52} & {\footnotesize 0.2}\tabularnewline
\hline
\end{tabular}

\caption{{\small Average values of the parameters of the models of Black-Scholes (BS),
Merton (MRT) and Bates (BTS) calibrated at each day before the ex-dividend date on the options written on
the dividend-paying stocks belonging to the Dow Jones Industrial Average Index (DJIA). In total we computed 1701 calibrations
%and the average values shown in the table are computed on
%the results of those calibrations.
and the reported values are the averages across these calibrations. \protect \\
The in-sample sum of squared error is on average equal to 0.26 for
the Black-Scholes model, 0.20 for the Merton model,
and 0.16 for the Bates model with stochastic
volatility.\protect \\
\protect \\
{*}Calibrated parameters of the SP500 dynamics are from \cite{bakshi1997empirical}.\label{Table:Param}}}
\end{table}

%-------------------------------------------------------------------------------------------Table 4

% Average fee that would justify the observed non-exercise rate

\begin{table}

\begin{centering}
\begin{tabular}{lccccccc}
 & \multicolumn{3}{c}{{\scriptsize Average rational implied fee}} &  & \multicolumn{3}{c}{{\scriptsize \%Implied fee > 0.4446}}\tabularnewline
\cline{2-4} \cline{6-8}
{\scriptsize Underlying} & {\scriptsize BS} & {\scriptsize MRT} & {\scriptsize BTS} &  & {\scriptsize BS} & {\scriptsize MRT} & {\scriptsize BTS}\tabularnewline
\hline
{\scriptsize All stocks} & {\scriptsize 7.54} & {\scriptsize 7.27} & {\scriptsize 7.23} &  & {\scriptsize 94\%} & {\scriptsize 93\%} & {\scriptsize 93\%}\tabularnewline
\hline
\end{tabular}
\par\end{centering}{\scriptsize \par}

\caption{{\small The table reports the average implied fee per share which would explain
the non-exercise behavior of investors in each model: Black-Scholes (BS), Merton (MRT) and Bates (BTS).
The average implied fee is calculated for each option that should
be exercised but which is not optimally exercised by some of the investors
as the value of the trading costs $\mathcal{F}$ which makes the continuation
value of the option equal to the early exercise proceeds:
$	C(S-d,K+\mathcal{F},T)=(S-K-\mathcal{F}). $
In the last three columns of the table we report the percentage of
options for which the fee that would explain the suboptimal non-exercise behavior is higher than the conservative
fee of $0.4446$ dollar per share estimated by \cite{pool2008failure}.\label{table:Fee}}}

\end{table}

%-------------------------------------------------------------------------------------------Table 3

% Loss due to suboptimal non-exercise

\begin{table}
\begin{center}
\begin{tabular}{cccc}
 & \multirow{1}{*}{{\small Model}} & {\small Without fee} & {\small With fee }\tabularnewline
\hline 
{\small Contracts outstanding} & &\multicolumn{2}{c}{406 414 980}\tabularnewline
\hline
{\small Total market value} & &\multicolumn{2}{c}{99 392 927 000}\tabularnewline
\hline
{\small Contracts that should be} & BS & 38 527 586 & 31 551 786\tabularnewline
exercised &  & (9.48\%) & (7.76\%)\tabularnewline
\cline{2-4}
 & MRT & 30 633 542 & 25 340 009\tabularnewline
 &  & (7.54\%) & (6.23\%)\tabularnewline
\cline{2-4}
 & BTS & 30 486 666 & 25 050 616\tabularnewline
 &  & (7.5\%) & (6.16\%)\tabularnewline
\hline
{\small Contracts that are left} & BS & 15 214 908 & 11 077 913\tabularnewline
{\small suboptimally non-exercised} &  & (39.49\%) & (35.11\%)\tabularnewline
\cline{2-4}
 & MRT & 9 404 406 & 7 118 002\tabularnewline
 &  & (30.70\%) & (28.09\%)\tabularnewline
\cline{2-4}
 & {\small BTS} & 8 786 524 & 6 702 799\tabularnewline
 &  & (28.82\%) & (26.76\%)\tabularnewline
\hline
{\small Money available} & BS & 770 287 766 & 647 088 372\tabularnewline
{\small due to exercise opportunity} &  & (0.77\%) & (0.65\%)\tabularnewline
\cline{2-4}
 & MRT & 555 922 918 & 460 980 123\tabularnewline
 &  & (0.56\%) & (0.46\%)\tabularnewline
\cline{2-4}
 & {\small BTS} & 617 173 686 & 509 713 158\tabularnewline
 &  & (0.62\%) & (0.51\%)\tabularnewline
\hline
{\small Total loss due to} & {\small BS} & 209 284 628 & 172 424 967\tabularnewline
{\small suboptimal non-exercise} &  & (27.17\%) & (26.65\%)\tabularnewline
 &  & (0.21\%) & (0.17\%)\tabularnewline
\cline{2-4}
 & {\small MRT} & 133 130 786 & 108 951 652\tabularnewline
 &  & (23.95\%) & (23.63\%)\tabularnewline
 &  & (0.13\%) & (0.11\%)\tabularnewline
\cline{2-4}
 & {\small BTS} & 147 480 996 & 123 615 413\tabularnewline
 &  & (23.9\%) & (24.25\%)\tabularnewline
 &  & (0.15\%) & (0.12\%)\tabularnewline
\hline
\end{tabular}
\end{center} 

\caption{{\small
Summary results of the total loss due to suboptimal non-exercise for the short-term
call option series. The numbers are calculated for each series and each day
before the ex-dividend date separately, and then pooled together.
%
%Summary results of the total loss due to suboptimal non-exercise for the short-term call option series written on the dividend-paying stocks which are the constituents of the Dow Jones Industrial Average Index (DJIA). The numbers are calculated for each series and each day before the ex-dividend date in the period January 1996 -- December 2012 separately, and then pooled together.
\newline
The number of contracts outstanding is the total open interest of all contracts two days before the ex-dividend date. The contracts that should be exercised under a specific model, i.e. Black-Scholes (BS), Merton (MRT) and Bates (BTS), are the contracts outstanding for which the continuation value is lower than the exercise proceeds. The number of contracts that are left suboptimally non-exercised is the sum of the open interests one day before ex-dividend of the contracts that should have been exercised.
%The other quantities are calculated for each option contract as follows:
We compute the other quantities in the table in the following way:
\newline
Total market value = Contracts outstanding$\times$Market price$\times$100,
 \newline
 Money available = $\max\{0,(S-K-\mathcal{F}-\text{Continuation value})\times \text{Contracts outstanding}\times100\}$,
 \newline
 Total loss = $\max\{0,(S-K-\mathcal{F}- \text{Continuation value})\times \text{Open interest}_{t-1}\times 100\}$,
 \newline
where $\mathcal{F}$ is the exercise fee. In the first column the results are computed considering $\mathcal{F}=0$, while in the second column the results are computed considering the conservative fee of $0.44$ dollar.
The first percentage in
parenthesis in the Total loss due to non-exercise is computed with
respect to the money available due to exercise opportunities, while the second one is computed with
respect to the total market value.\label{table:NII_lost}}}
\end{table}

\clearpage

%%-------------------------------------------------------------------------------------------Table 4
%
%% Average fee that would justify the observed non-exercise rate
%
%\begin{table}
%\input{Table_fee_summary.tex}
%\caption{{\small The table reports the average implied fee per share which would explain
%the non-exercise behavior of investors for each model.
%The average implied fee is calculated for each option that should
%be exercised but which is not optimally exercised by some of the investors
%as the value of the trading costs $\mathcal{F}$ which makes the continuation
%value of the option equal to the early exercise proceeds:
%$	C(S-d,K+\mathcal{F},T)=(S-K-\mathcal{F}). $
%In the last three columns of the table we report the percentage of
%options for which the fee that would explain the suboptimal non-exercise behavior is higher than the conservative
%fee of $0.4446$ dollar per share estimated by \cite{pool2008failure}.\label{table:Fee}}}
%
%\end{table}
%\clearpage

%-------------------------------------------------------------------------------------------
%------------------------------------------------------------------------------------------- Figure 11

% Percentage of suboptimal non-exercise as function of delta moneyness

\begin{figure}
	\begin{center}
  \title{\textbf{Percentage of suboptimal non-exercise as function of delta moneyness}}
  \begin{tikzpicture}

\begin{axis}[%
width=4.52083333333333in,
height=3.565625in,
scale only axis,
xmin=0.91,
xmax=1,
xlabel={Delta},
ymin=20,
ymax=70,
ytick={20,25,30,35,40,45,50,55,60,65,70},
yticklabels={20\%,25\%,30\%,35\%,40\%,45\%,50\%,55\%,60\%,65\%,70\%},
ylabel={Subptimal non-exercise},
]
\addplot [
%color=blue,
%only marks,
%mark=asterisk,
%mark options={solid},
%forget plot
color=blue,
mark size=3.5pt,
only marks,
mark options={solid,draw=blue}
]
table[row sep=crcr]{
0.91 64.9858812884619\\
0.93 61.7468236142281\\
0.95 61.2933508362636\\
0.97 54.2337243125713\\
0.985 47.8986052918035\\
0.9949995 32.2092693292697\\
0.9999995 22.2914794972761\\
};
\end{axis}
\end{tikzpicture}% 
  \caption{Percentage of contracts suboptimally non-exercised in the model of Black-Scholes as function of the delta moneyness of the contracts. \label{Figure:moneyness}}
	\end{center}
\end{figure}
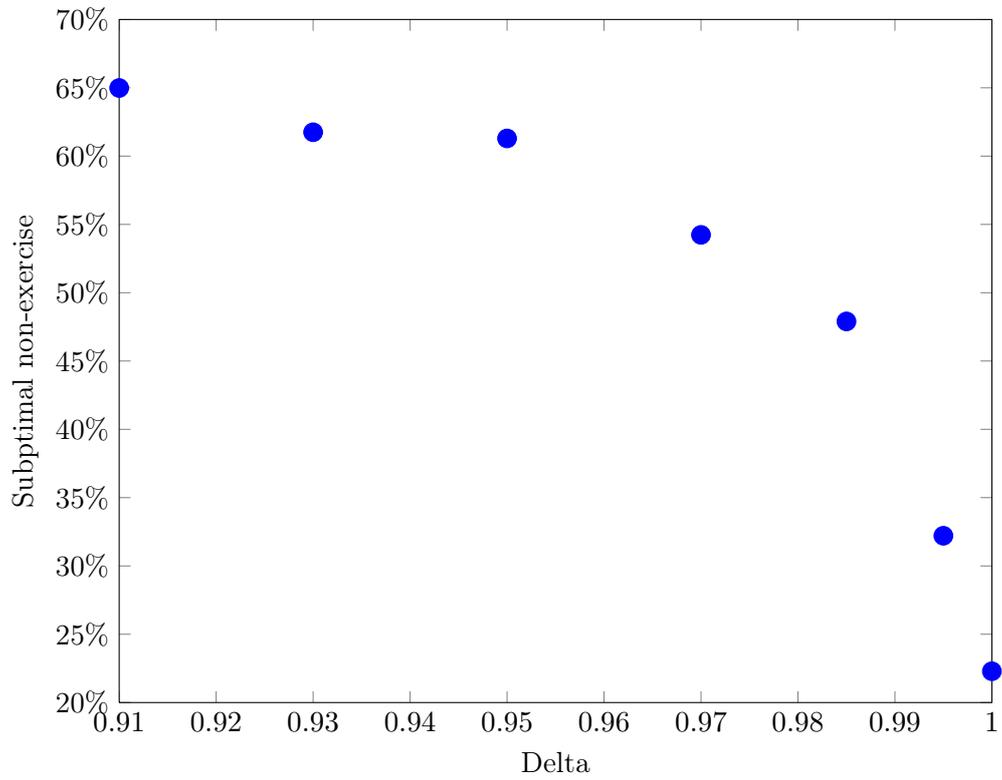

\clearpage

%-------------------------------------------------------------------------------------------
%-------------------------------------------------------------------------------------------

% The End

\end{document}